\newtheorem{thm}{Theorem}
\newtheorem{cor}{Corollary}
\newtheorem{prop}{Proposition}
\newtheorem{lem}{Lemma}
\newtheorem{rem}{Remark}
\newtheorem{myproposition}{Proposition}
\newcommand{\bes}{\begin{subequations}}
\newcommand{\ees}{\end{subequations}}
\newcommand{\beq}{\begin{equation}}
\newcommand{\eeq}{\end{equation}}
\def\bea{\begin{subequations}\begin{align}}
\def\eea{\end{align}\end{subequations}}
\def\>{\rangle}
\def\<{\langle}
\newcommand{\ketb}[2]{|{#1}\>\!\<#2|}
\newcommand{\mc}{\mathcal}
\def\a{\alpha}
\def\b{\beta}
\def\g{\gamma}
\def\r{\rho}     
\def\o{\omega}
\newcommand{\ignore}[1]{}
\newcommand{\hc}{\mathrm{h.c.}}
\DeclareMathOperator{\Tr}{Tr}
\newcommand{\1}{\leavevmode{\rm 1\ifmmode\mkern  -4.8mu\else\kern -.3em\fi I}}
\begin{document}
\title{Optimal Control for Closed and Open System Quantum Optimization
}
\author{Lorenzo Campos Venuti}
\affiliation{Department of Physics \& Astronomy, University of Southern California,
Los Angeles, California 90089, USA}
\affiliation{Center for Quantum Information Science \& Technology, University of
Southern California, Los Angeles, California 90089, USA}
\author{Domenico D'Alessandro}
\affiliation{Department of Mathematics, Iowa State University, Ames, Iowa 50014,
USA}
\author{Daniel A. Lidar}
\affiliation{Department of Physics \& Astronomy, University of Southern California,
Los Angeles, California 90089, USA}
\affiliation{Center for Quantum Information Science \& Technology, University of
Southern California, Los Angeles, California 90089, USA}
\affiliation{Department of Electrical \& Computer Engineering, University of Southern California,
Los Angeles, CA 90089, USA}
\affiliation{Department of Chemistry, University of Southern California, Los Angeles,
CA 90089, USA}
\date{\today}

\begin{abstract}
We provide a rigorous analysis of the quantum optimal control problem in the setting of a linear combination $s(t)B+(1-s(t))C$ of two noncommuting Hamiltonians $B$ and $C$. This includes both quantum annealing (QA) and the quantum approximate optimization algorithm (QAOA). The target is to minimize the energy of the final ``problem'' Hamiltonian $C$, for a time-dependent and bounded control schedule $s(t)\in [0,1]$ and $t\in \mc{I}:= [0,t_f]$. It was recently shown, in a purely closed system setting, that the optimal solution to this problem is a ``bang-anneal-bang'' schedule, with the bangs characterized by $s(t)= 0$ and $s(t)= 1$ in finite subintervals of $\mc{I}$, in particular $s(0)=0$ and $s(t_f)=1$, in contrast to the standard prescription $s(0)=1$ and $s(t_f)=0$ of quantum annealing. 

Here we extend this result to the open system setting, where the system is described by a density matrix rather than a pure state. This is the natural setting for experimental realizations of QA and QAOA. For finite-dimensional environments and without any approximations we identify sufficient conditions ensuring that either the bang-anneal, anneal-bang, or bang-anneal-bang schedules are optimal, and recover the optimality of $s(0)=0$ and $s(t_f)=1$. However, for infinite-dimensional environments and a system described by an 
adiabatic Redfield master equation we do not recover the bang-type optimal solution. In fact we can only identify conditions under which $s(t_f)=1$, and even this result is not recovered in the fully Markovian limit. 

The analysis, which we carry out entirely within the geometric framework of Pontryagin Maximum Principle, simplifies using the density matrix formulation compared to the state vector formulation. 
This analysis reveals that the bang-anneal-bang optimality result requires the assumption that the optimal control schedule is such that it is possible to lower the cost by increasing the total time $t_f$. A necessary condition for this is that $t_f$ is smaller than the critical time $t_c$ needed to reach the ground state of $C$ exactly. In previous work this condition was believed to also be sufficient, but we give a counterexample.

We derive a ``switching equation'' which describes the behavior of the optimal schedule switches between the two types of bangs and the anneals, and use it to identify the general features of the optimal control protocols. As an illustration of the theory, we analyze the simple example of a single spin-$1/2$, and prove that in this case the optimal solution in the closed system setting is the bang-bang schedule, switching midway from $s\equiv 0$ to $s\equiv1$.
\end{abstract}
\maketitle

\section{Introduction}
\label{sec:intro}

There is a great deal of interest in optimization algorithms that can be run on today's noisy intermediate scale quantum (NISQ) information processors~\cite{Preskill:2018aa}. Two prime examples are quantum annealing (QA)~\cite{kadowaki_quantum_1998} and the quantum approximate optimization algorithm (QAOA)~\cite{farhi2014quantum}. Both algorithms switch between two non-commuting Hamiltonians: a ``driver'' (or ``mixer'') $B$ and a ``target'' (or ``problem'') $C$. The latter encodes the solution to the optimization problem as its ground state. The two algorithms can be viewed as complementary: QA switches continuously while QAOA switches discretely; hence they are particularly well suited for analog and gate-model devices, respectively. In addition, both algorithms are related to the quantum adiabatic algorithm~\cite{Farhi:00}, which is guaranteed by the adiabatic theorem~\cite{Kato:50} to converge to the optimal solution in the limit of arbitrarily long evolution times~\cite{Jansen:07,lidar:102106,Ge:2015wo}. QA relaxes the strict adiabaticity condition while retaining continuity~\cite{kadowaki_quantum_1998}, and the adiabatic algorithm becomes an instance of QAOA when the continuous evolution is ``Trotterized'' (replaced by pulsed segments)~\cite[Sec.~VI]{farhi2014quantum}. There have been numerous studies of the two algorithms, including some that have compared them, with mixed results~\cite{bapat2018bang,zhou2018quantum,streif2019comparison,Pagano:2020wp}.

In essence, the question of which algorithm performs best  -- QA or QAOA -- boils down to an optimization of the switching schedule. Various results have already been established within the framework of the adiabatic algorithm, QA, or QAOA. For example, it is well known that the adiabatic algorithm can benefit from schedule optimization, even to an extent that can affect whether it provides a quantum speedup or not, as in the case of the Grover search problem~\cite{Roland:2002ul,RPL:10}. It has also been established that a variational approach can optimize the adiabatic switching schedule~\cite{PhysRevLett.103.080502}. Likewise, optimality results are known for QA~\cite{morita:125210,Galindo:2020ti} and QAOA~\cite{szegedy2019qaoa}. A natural question is whether one can jointly treat QA and QAOA under a single schedule optimization framework. The first such attempt was made by Yang \textit{et al.}~\cite{Yang:2017aa} using the framework of the Pontryagin
Maximum Principle (PMP) of optimal control~\cite{kipka_pontryagin_2015} (see also Refs.~\cite{Lin:2019wt,mbeng2019optimal,mbeng2019quantum}), whose conclusions favoring a strict QAOA-type schedule were later shown to be overly restrictive by Brady \textit{et al.}~\cite{brady_optimal_2021}, who showed that in general a hybrid discrete-continuous schedule is optimal. 

The results of Brady \textit{et al.}  were obtained in a closed system setting of purely unitary dynamics. 
Here, we generalize  the theory to the open system setting, and obtain their closed system results as a special case. We proceed to first provide the general background for the problem, after which we outline the structure of the rest of the paper.

\section{Background}
\label{sec:background}

The closed system setting involves a system evolving unitarily in a $d$-dimensional Hilbert space $\mathcal{H}$ subject to the Schr\"odinger equation:
\beq
\label{eq:Ham_sys}
\frac{d}{dt}|\psi(t)\rangle =-iH(t)|\psi(t)\rangle,\qquad|\psi(0)\rangle=|\psi_{0}\rangle\ .
\eeq
The protoypical quantum annealing problem concerns finding the optimal schedule $s(t)\in[0,1]$ for the time-dependent Hamiltonian given by\footnote{In the control literature the notation $u$ or $u(t)$ is used to denote the control function, rather than $s$ or $s(t)$. Here we choose to use the notation that is more familiar in the quantum computing community.}
\bes
\label{eq:Ham_sys2}
\begin{align}
H(t) &= s(t) B + (1-s(t)) C  \ , \quad t\in\mathcal{I} \\
 &= C +s(t)(B-C)\ .
 \label{eq:Ham_sys2-b}
\end{align}
\ees
The \emph{control interval} is $\mathcal{I} = [0,t_f]$.
Often the Hermitian operator $C$ is an Ising-type Hamiltonian of the form $\sum_{i=1}^n h_i \sigma^z_i + \sum_{i<j}^n J_{ij}\sigma^z_i\sigma^z_j$ (where $h_i$ and $J_{ij}$ are local longitudinal fields and couplings, respectively, and $\sigma^z_i$ is the Pauli matrix acting on the $i$'th qubit), and the Hermitian operator $B$ is a transverse field of the form $\sum_{i=1}^n \sigma^x_i$~\cite{kadowaki_quantum_1998}. For our purposes it only matters that $[B,C]\neq 0$.

The initial state $|\psi_{0}\rangle$ is assumed to be the ground state of $B$, and in both QA and QAOA the target state is the ground state of $C$. A relaxation of this, which we consider as the objective in the present work, is to minimize the expectation value of $C$ at a given final time $t_f$, i.e., 
\beq
J:=\langle\psi(t_f)|C|\psi(t_f)\rangle\ .
\label{eq:Brady-J}
\eeq 
Minimizing $J$ is equivalent to minimizing the energy of the $C$ Hamiltonian, and if the global minimum is found then this corresponds to finding the ground state of $C$ (i.e., solving the optimization problem defined by $\{h_i,J_{ij}\}$ when $C$ is in Ising form). 

It is known in quantum control theory (see, e.g., Ref.~\cite{dalessandro_introduction_2007})
that if $\hat{\mathfrak{L}}$ is the Lie algebra generated by $B$ and $C$ and $e^{\hat{\mathfrak{L}}}$ the corresponding Lie
group, assumed to be compact, the set of states reachable from $|\psi_{0}\rangle$
with free final time $t_f$ is 
\beq
\mathfrak{R}:=\{X|\psi_{0}\rangle\,|\,X\in e^{\hat{\mathfrak{L}}}\},
\label{eq:reach}
\eeq
so that the absolute minimum of the cost $J$ is 
\begin{equation}
J_{\min}=\min_{|\psi\rangle\in\mathfrak{R}}\langle\psi|C|\psi\rangle\ .
\label{absmin}
\end{equation}
If the dynamical Lie algebra $\hat{\mathfrak{L}}$ is the whole $su(d)$ then any state in the Hilbert space can be reached (starting from any other state), in particular the ground state of $C$, in which case the system is said to be \emph{controllable}~\cite{jurdjevic_control_1972}. 
However, requiring full controllability may be overly restrictive, as we only need to reach a particular state. The following is a simple generalization that provides a sufficient condition for reaching the ground state. 

\begin{myproposition}
\label{prop:0}
Suppose $[B,P_0]=[C,P_0]=0$ where $P_0$ is an orthogonal projector. The Hilbert space decomposes according to the block structure $\mathcal{H} = \mathrm{Ran}(P_0) \oplus \mathrm{Ker}(P_0)$. The Lie algebra generated by $B$ and $C$, $\hat{\mathfrak{L}}$, must have the same block structure. Suppose that,  according to this structure,  $\hat{\mathfrak{L}} = su(d_0)\oplus \hat{\mathfrak{L}}_1$ with unspecified $\hat{\mathfrak{L}}_1$ and $d_0=\dim(\mathrm{Ran}(P_0))$; then, if the initial state belongs to $\mathrm{Ran}(P_0)$, any state in $\mathrm{Ran}(P_0)$ can be reached (in finite time). 
\end{myproposition}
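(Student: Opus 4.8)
The plan is to reduce the claim to the reachable-set characterization recalled above, together with two elementary facts: that $\mathrm{Ran}(P_0)$ is dynamically invariant, and that $SU(d_0)$ acts transitively on its unit sphere. First I would note that, since $[B,P_0]=[C,P_0]=0$, every admissible Hamiltonian $H(t)=C+s(t)(B-C)$ commutes with $P_0$, and hence so does the propagator $U(t)=\Texp[-i\int_0^t H(\tau)\,d\tau]$, for every schedule $s(\cdot)$. Thus $U(t)$ is block diagonal with respect to $\mc{H}=\mathrm{Ran}(P_0)\oplus\mathrm{Ker}(P_0)$; in particular $U(t)$ maps $\mathrm{Ran}(P_0)$ onto itself, so if $|\psi_{0}\rangle\in\mathrm{Ran}(P_0)$ then $|\psi(t)\rangle\in\mathrm{Ran}(P_0)$ for all $t$ and all schedules, and no state outside $\mathrm{Ran}(P_0)$ is reachable. (The same observation yields the block-structure claim for $\hat{\mathfrak{L}}$ in the statement, since the block-diagonal anti-Hermitian matrices form a Lie subalgebra containing $-iB$ and $-iC$, hence containing $\hat{\mathfrak{L}}$.) It then remains only to show that \emph{every} state of $\mathrm{Ran}(P_0)$ is reachable.

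For that I would invoke the reachable-set description $\mathfrak{R}=\{X|\psi_{0}\rangle\,:\,X\in e^{\hat{\mathfrak{L}}}\}$, valid because $e^{\hat{\mathfrak{L}}}$ is compact; equivalently, one restricts the right-invariant control system $\dot U_0=-i[C_0+s(t)(B_0-C_0)]U_0$ to the block $\mathrm{Ran}(P_0)$, where $B_0,C_0$ denote the restrictions of $B,C$, and applies the standard controllability theorem for systems with drift on a compact connected Lie group. Under the hypothesis $\hat{\mathfrak{L}}=su(d_0)\oplus\hat{\mathfrak{L}}_1$, the restriction of $\hat{\mathfrak{L}}$ to $\mathrm{Ran}(P_0)$ is exactly $su(d_0)$, so the group reachable on the block is $SU(d_0)$; that is, for every $V\in SU(d_0)$, $e^{\hat{\mathfrak{L}}}$ contains an element acting as $V$ on $\mathrm{Ran}(P_0)$. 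Finally, $SU(d_0)$ acts transitively on the unit sphere of $\mathbb{C}^{d_0}\cong\mathrm{Ran}(P_0)$: given two unit vectors, extend each to an orthonormal basis, take the unitary carrying one basis to the other, and rescale a single basis vector by a phase to obtain determinant one. Hence for any target state $|\phi\rangle\in\mathrm{Ran}(P_0)$ there is $V\in SU(d_0)$ with $V|\psi_{0}\rangle=|\phi\rangle$; choosing a schedule and a finite final time $t_f$ with $U_0(t_f)=V$ gives $|\psi(t_f)\rangle=|\phi\rangle$, as required.

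The only genuinely nontrivial ingredient, and the step I expect to carry the weight, is the controllability theorem behind $\mathfrak{R}=\{X|\psi_{0}\rangle\}$: that a right-invariant system on a compact connected Lie group satisfying the Lie-algebra rank condition reaches the \emph{entire} group at free final time, \emph{despite} the control $s$ being confined to the bounded interval $[0,1]$ and the presence of the uncontrollable drift $C$. The reason it nonetheless holds is that on a compact group the drift flow $e^{-itC}$ is recurrent, returning arbitrarily close to the identity, so it can effectively be neutralized and the problem reduces to the driftless, symmetric case where the rank condition alone gives controllability; this is precisely where compactness of $e^{\hat{\mathfrak{L}}}$ — guaranteed here for the $\mathrm{Ran}(P_0)$ block by the compactness of $SU(d_0)$ — is indispensable. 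Everything else (block invariance, transitivity on the sphere) is routine.
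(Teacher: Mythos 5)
Your proof is correct and follows the same route the paper intends: the paper simply declares the result ``self-evident'' by applying the Jurdjevic--Sussmann controllability theorem within $\mathrm{Ran}(P_0)$, which is exactly what you do, with the block-invariance, restriction-to-the-block, and $SU(d_0)$-transitivity steps spelled out explicitly. No gaps.
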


The proof is self-evident, since the full controllability result~\cite{jurdjevic_control_1972} is now applicable in $\mathrm{Ran}(P_0)$. This generalization can be applied, for example, in case both $B$ and $C$ commute with a third operator, say $M$, and one knows to which sector of $M$ the ground state of $C$ belongs; see Appendix \ref{app:reachability} for an example. In any case, it is clear that something must be assumed in order to guarantee the reachability of the ground state of $C$. Clearly, a necessary condition is $[B,C]\neq 0$, but even when $[B,C]\neq 0$ it is easy to come up with examples where the ground state of $C$ cannot be reached; see Appendix \ref{app:reachability}. In the following we will tacitly assume that conditions are such that the ground state of $C$ can be reached.  

Brady \textit{et al.}~\cite{brady_optimal_2021} used optimal control methods to prove that for the cost as defined as in Eq.~\eqref{eq:Brady-J}, the optimal schedule is one where at the beginning and end of the control interval $s \equiv 0$ and $s \equiv 1$, respectively.\footnote{We use the notation $f \equiv c$ to mean that the function f is ``identically'' equal to $c$ in some interval $\Theta$, i.e., $f \equiv c$ is equivalent to $f(t) = c$ for $t\in\Theta$.} From a quantum annealing perspective this might appear as a counterintuitive result, since it means that rather than the usual ``forward'' formulation of quantum annealing~\cite{Albash-Lidar:RMP,Hauke:2019aa}, where one interpolates smoothly from $H(0) = B$ to $H(t_f) = C$, 
the optimal protocol starts from the system being in the ground state of $B$ but the initial Hamiltonian is $C$, and the final Hamiltonian is not $C$ but rather $B$. 
The result, however, can be understood by noting that in the adiabatic approach, one interpolates so slowly from $H(0) = B$ to $H(t_f) = C$ that the system always remains in the ground state. Instead, in optimal control, we optimize over the set of possible states obtained by applying either $C$ or $B$ to the initial state, in a continuous fashion. In this sense, applying $B$ at the beginning is a waste of time as it does not change the initial state. Applying $C$ at the end, when the system is supposed to be close to the ground state of $C$, is similarly wasteful. This relaxation of the approach of strict adiabaticity is in line with other alternatives, such as shortcuts to adiabaticity~\cite{Campo:2013ix,PhysRevA.95.012309} and diabatic quantum annealing~\cite{crosson2020prospects}.

More precisely, Ref.~\cite{brady_optimal_2021} showed, provided that a certain condition 
holds (see below), that the optimal control function starts (ends) with $s\equiv 0 $  ($ s\equiv1$) in an interval of positive measure after $t=0$ (before $t_f$). Elsewhere the optimal control $s(t)$ is ``singular'', except for possible interruptions by a sequence of ``bang'' controls, where $s\equiv0$ or
$s\equiv1$. 
In control theory  a ``singular'' interval or arc, is an interval of time where  the PMP control Hamiltonian in Eq.~\eqref{HamilP} below does not depend on the control $s$. The remaining ``nonsingular'' arcs give rise to the ``bang'' controls. In the numerical simulations of Ref.~\cite{brady_optimal_2021}, the control appeared to be continuous (even smooth) on such singular arcs. Hence the term  ``anneal'' was used in lieu of ``singular'', with the intention of stressing the continuous (or possibly even smooth) nature of the control on the singular arcs. 
They suggestively called the resulting optimal control a ``bang-anneal-bang'' protocol.
At present, a rigorous proof that the control function is continuous (let alone smooth) on singular arcs is lacking,
and there is some risk of confusion in interpreting the singular arcs as always being continuous, or even differentiable as is typically assumed in QA and adiabatic quantum computing~\cite{Jansen:07,Albash-Lidar:RMP}. Nonetheless, while keeping these caveats in mind, we shall adopt the same (numerically supported) terminology as Ref.~\cite{brady_optimal_2021}, and use ``continuous (or anneal) $=$ singular'' as well as ``bang $=$ nonsingular'' interchangeably. 

Here, we consider the \emph{open system} version of the same optimal control problem. We reformulate the problem in terms of the density matrix $\rho$, whose dynamics is described by the
following, rather general master equation:\footnote{The form we have assumed is called a time-convolutionless master equation. The most general master equation is in Nakajima-Zwanzig form and includes a memory kernel superoperator $\mathcal{K}(t,t')$ acting jointly on the system and the environment $E$, such that (for a factorized initial condition) $\dot{\rho} = \Tr_E \int_0^t \mathcal{K}(t,t')\rho(t')\otimes\rho_E dt'$, with $\rho_E$ a fixed environment state and $\Tr_E$ denoting the partial trace over the environment~\cite{Breuer:book}.}
\begin{equation}
\dot{\rho}=\mathcal{L}\rho,\qquad\rho(0)=\rho_{0}\ ,
\label{eq:general_ME}
\end{equation}
where the Liouvillian $\mathcal{L}$ depends linearly on the control $s$ (and the controlled  operators $B,C$). Note that the Liouvillian is not explicitly time-dependent (i.e.~$\partial_t \mathcal{L}=0, \ \forall t\in \mathcal{I}$) and depends on time only through the control schedule $s(t)$. This is an important requirement that will play a
crucial role in our ability to apply the Pontryagin principle in the form we need,
as we discuss in more detail below. Furthermore, to be physically meaningful,  $\mathcal{L}$ must preserve hermiticity, i.e., $[\mathcal{L}(X)]^\dagger = \mathcal{L}(X^\dagger), \ \forall X$.  Instead of Eq.~\eqref{eq:Brady-J}, the cost $J$ takes
the form
\begin{equation}
J:=\Tr\left[C\rho(t_f)\right] =\langle C,\rho(t_f)\rangle\ ,
\label{nuovocosto}
\end{equation}
where we used the Hilbert-Schmidt scalar product $\langle X,Y\rangle := \Tr(X^\dagger Y)$ for operators $X, Y$. 
We shall see that a description and treatment of the optimal control problem in the setting of the density matrix is not only more general but also more elegant since the cost $J$ is linear in the state rather than quadratic, as in Eq.~\eqref{eq:Brady-J}. Moreover, we obtain the closed system result as a special case. Unlike Ref.~\cite{brady_optimal_2021}, which used a mixture of the PMP and a variational (Lagrange multiplier type of) argument, we use only the PMP, which significantly simplifies the proof.

The rest of this paper is organized as follows. In Sec.~\ref{sec:generalME},
we apply general results from optimal control theory and the necessary conditions of the PMP to the
problem of minimizing $J$ [Eq.~\eqref{nuovocosto}] for a given final
time $t_f$ and the general dynamical system of the form of Eq.~\eqref{eq:general_ME}.
In Sec.~\ref{OCP} we specialize to the case of closed systems, which are described
by the von Neumann equation. In particular, we confirm but also sharpen the results of Ref.~\cite{brady_optimal_2021}. We also analyze in depth the optimal control problem of a single spin-$1/2$, and prove that the optimal schedule is of the bang-bang type. In Sec.~\ref{OSC} we consider the case
of open systems. This includes both the most general 
case of a reduced description of quantum system obtained by tracing out the environment it is coupled to, and the case where the open
quantum system is described by adiabatic master equations, both non-Markovian and Markovian. In Sec.~\ref{Switchings} we derive a ``switching equation,'' which allows us to provide a general characterization of the switches between non-singular and singular arcs, and derive conditions for the presence or absence of singular
arcs. We also give a heuristic derivation of the shortening of the length of the arcs between two switches with increasing system size. We conclude in Sec.~\ref{sec:conc}.
In a series of appendices we provide additional background on optimal control theory and technical details and proofs of various results from the main text.

\section{Statement of the Pontryagin Maximum Principle}
\label{sec:generalME}

We state the PMP as it applies to our problem of interest (see Appendices~\ref{app:A} and~\ref{app:real}):

\begin{thm}
\label{adapt0} 
Assume that $\rho^*$ and $s^*$ are, respectively, an optimal state and control pair for the problem defined by Eqs.~\eqref{eq:general_ME} and~\eqref{nuovocosto} for a fixed final time $t_f$.\footnote{We also use an asterisk to denote complex conjugation; the meaning will always be clear by context.} Then there exists a nonzero $n\times n$ Hermitian time-dependent matrix $p=p(t)$ called the {\em co-state} that satisfies\footnote{$\mathcal{L}^{\dag}$ in Eq.~\eqref{LiouvK1} indicates the Hilbert-Schmidt adjoint of $\mathcal{L}$, defined via $\langle \mathcal{L}^\dag(A),B\rangle := \langle A,\mathcal{L}(B)\rangle = \Tr[A^\dag \mathcal{L}(B)], \ \forall A,B$; see Appendix~\ref{app:real}.}
\begin{equation}
\dot{p}=-\mathcal{L}^{\dag}p\ ,
\label{LiouvK1}
\end{equation}
with the final condition 
\begin{equation}
p(t_f)=-C\ .
\label{3plusB}
\end{equation}
Furthermore, define the \emph{PMP control Hamiltonian} function 
\begin{equation}
\mathbb{H}\left(p,\rho,s\right)
:=\langle p,\mathcal{L}\rho\rangle\ .
\label{HamilP}
\end{equation}
We then have the \emph{maximum principle}: 
\begin{equation}
\mathbb{H}\left(p(t),\rho^*(t),s^*(t)\right)=\max_{v\in[0,1]}\mathbb{H}\left(p(t),\rho^*(t),v\right)\ ,
\label{MaxCon}
\end{equation}
and there exists a real 
constant $\lambda$ such that
\begin{equation}
\mathbb{H}\left(p(t),\rho^*(t),s^*(t)\right)\equiv \lambda\ .
\label{forlanda}
\end{equation}
\end{thm}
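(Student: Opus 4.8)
The theorem is a specialization of the Pontryagin Maximum Principle to the bilinear control system \eqref{eq:general_ME} with terminal cost \eqref{nuovocosto}, so the plan is to reduce it to a standard PMP statement (Appendices \ref{app:A} and \ref{app:real} are cited for the details, so the main text plan can be a reduction argument). First I would set up the problem as a real-vector optimal control problem: fix a real orthonormal basis $\{G_\mu\}$ of the space of $n\times n$ Hermitian matrices under the Hilbert-Schmidt inner product, and write $\rho = \sum_\mu r_\mu G_\mu$. Since $\mathcal{L}$ preserves hermiticity and depends linearly (affinely) on $s$, in this basis the dynamics \eqref{eq:general_ME} becomes $\dot r = (A_0 + s A_1) r$ for real matrices $A_0, A_1$ (the matrix elements of $\mathcal{L}$ restricted to the Hermitian sector), and the cost \eqref{nuovocosto} becomes $J = c^\mathsf{T} r(t_f)$ with $c$ the coordinate vector of $C$. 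This is a Mayer-type problem with fixed final time, control constraint $s \in [0,1]$, and free final state, to which the classical finite-dimensional PMP (e.g.\ Ref.~\cite{kipka_pontryagin_2015}) applies directly.

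Next I would invoke that PMP: there is a nontrivial adjoint (co-state) vector $\pi(t) \in \mathbb{R}^{\dim}$ and an abnormal multiplier $\lambda_0 \le 0$, not both zero, with $\dot\pi = -(A_0 + s^* A_1)^\mathsf{T}\pi$ and transversality $\pi(t_f) = \lambda_0 c$ (free endpoint forces the co-state to be pinned to the gradient of the cost). Because the endpoint is completely free, one checks in the standard way that $\lambda_0 \ne 0$ — if $\lambda_0 = 0$ then $\pi(t_f)=0$ and linearity of the adjoint ODE forces $\pi \equiv 0$, contradicting nontriviality — so we may normalize $\lambda_0 = -1$, giving $\pi(t_f) = -c$. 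Translating back: $p(t) := \sum_\mu \pi_\mu(t) G_\mu$ is Hermitian, the adjoint equation is exactly $\dot p = -\mathcal{L}^\dagger p$ (since $(A_0 + sA_1)^\mathsf{T}$ represents $\mathcal{L}^\dagger$ in the chosen basis, using that the basis is orthonormal), and $\pi(t_f) = -c$ reads $p(t_f) = -C$, which is \eqref{LiouvK1}–\eqref{3plusB}. The PMP Hamiltonian $\pi^\mathsf{T}(A_0 + sA_1)r = \langle p, \mathcal{L}\rho\rangle =: \mathbb{H}(p,\rho,s)$, and the pointwise maximization clause of the PMP is \eqref{MaxCon}. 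Finally, the autonomy of the extended system — here the crucial point is that $\mathcal{L}$ has no explicit time dependence, $\partial_t\mathcal{L} = 0$, flagged in the text after \eqref{eq:general_ME} — makes the maximized Hamiltonian a conserved quantity along the optimal trajectory, i.e.\ $\mathbb{H}(p(t),\rho^*(t),s^*(t)) \equiv \lambda$ for a constant $\lambda \in \mathbb{R}$, which is \eqref{forlanda}.

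The main obstacle, and the reason the careful setup matters, is justifying the two structural facts that make this a \emph{clean} finite-dimensional PMP rather than a PMP on a manifold with state constraints: (i) that the Hermitian sector is an invariant real subspace on which the dynamics restricts to a genuine bilinear ODE — this is exactly where hermiticity preservation of $\mathcal{L}$ is used, and it must be checked that the co-state can also be taken Hermitian (equivalently, that the adjoint flow preserves the same real subspace, which follows because $\mathcal{L}^\dagger$ likewise preserves hermiticity); and (ii) that no additional positivity/trace constraints on $\rho$ need be imposed as state constraints — the point being that the reachable set from $\rho_0$ under a hermiticity- (and, physically, trace- and positivity-)preserving $\mathcal{L}$ stays in the physical state space automatically, so the optimization is genuinely over an open/unconstrained neighborhood in the relevant coordinates and the free-endpoint transversality condition $\lambda_0 \ne 0$ is legitimate. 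A secondary technical point, handled in the appendices, is the regularity needed for the constancy of $\mathbb{H}$ — measurability/boundedness of $s^*$ suffices for the standard proof that the maximized Hamiltonian is absolutely continuous with zero derivative, given $\partial_t \mathcal{L} = 0$.
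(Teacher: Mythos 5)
Your proposal is correct and follows essentially the same route as the paper: coordinatize the Hermitian sector via an orthonormal real matrix basis so that Eq.~\eqref{eq:general_ME} becomes a real bilinear ODE with linear Mayer cost, apply the standard finite-dimensional PMP (the paper's Theorems~\ref{case1} and~\ref{adapt1} in Appendices~\ref{app:A} and~\ref{app:real}), and translate the adjoint equation and transversality condition back to $\dot{p}=-\mathcal{L}^\dagger p$, $p(t_f)=-C$, using that $(A_0+sA_1)^{\mathsf T}$ represents $\mathcal{L}^\dagger$ in an orthonormal basis and that constancy of $\mathbb{H}$ follows from $\partial_t\mathcal{L}=0$. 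The only addition is your explicit normality argument ruling out $\lambda_0=0$, a standard detail the paper absorbs by stating the free-endpoint PMP with the terminal condition already normalized.
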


A few remarks are in order.
\begin{itemize}
\item The PMP control Hamiltonian \emph{function}~\eqref{HamilP} is, of course, different from the Hamiltonian \emph{operator}~\eqref{eq:Ham_sys2} generating the dynamics. 
\item Since $p$ and $\rho$ are Hermitian and $\mathcal{L}$
is Hermiticity-preserving [$\left[\mathcal{L}\left(X\right)\right]^{\dagger}=\mathcal{L}\left(X^{\dagger}\right)$ $\forall X$], ``expectation values'' of the form $\langle p,\mathcal{L}\rho\rangle$
are real, and hence so is the PMP control Hamiltonian~\eqref{HamilP}.  
\item The condition $\partial_t\mc{L}=0$ $\forall t \in \mathcal{I}$ must be satisfied and is implicit in Eq.~\eqref{eq:general_ME}. In other words, $\mc{L}$ may not depend explicitly on time. Without this condition Eq.~\eqref{forlanda} does not hold with a constant $\lambda$.
\item It is worth highlighting that at the final time the co-state becomes the (negative of the) target Hamiltonian [Eq.~\eqref{3plusB}], a fact we use repeatedly in our applications of the Theorem~\ref{adapt0} below.
\item As discussed after Theorem~\ref{case1} in Appendix~\ref{app:A}, if the optimal trajectory is such that the constraint of the final time $t_f$ is ``active'', i.e., a small perturbation $t_f+\delta$
allows us to decrease the cost, then $\lambda>0$ in Eq.~\eqref{forlanda}. This is an important point that will be further discussed in the next section. 
\item Given that the PMP is formulated in terms of real-valued quantities in the optimal control literature (see Appendix~\ref{app:A}), one must first transform the relevant equations into real-valued ones. This can easily be done since the space of $n\times n$ Hermitian matrices is isomorphic to the space of $n^2$ real variables via coordinatization. We discuss this in Appendix~\ref{app:real}.
\item Since $p(t)$ and $\rho(t)$ are solution of differential equations, they are continuous function of time. This implies that expressions of the form $\langle p, \mc{K}\rho\rangle$ with the superoperator $\mc{K}$ independent of time (both explicitly and implicitly), are  continuous functions of $t$, a fact which we repeatedly and implicitly use below.
\end{itemize}

\section{The closed system case}
\label{OCP}

We first consider the closed system case. Let us define the superoperator
\beq
\mathcal{K}_{X}:=-i\left[X,\bullet\right]\ .
\label{eq:K_X}
\eeq 
Note that $\mathcal{K}_{X}$
is linear with respect to $X$. For Hermitian $X$, $\mathcal{K}_{X}$
is anti-Hermitian (see Appendix~\ref{app:calcs}):
\beq
\mathcal{K}_{X}^{\dag}=-\mathcal{K}_{X^{\dagger}}=-\mathcal{K}_{X}\ .
\label{eq:KXanti}
\eeq 
The von Neumann equation corresponding to Eq.~\eqref{eq:Ham_sys} is
\begin{equation}
\dot{\rho}=\mathcal{K}_{C}\rho+s(t)\left(\mathcal{K}_{B}\rho-\mathcal{K}_{C}\rho\right),\quad\rho(0)=\rho_{0} \ ,
\label{eq:vonNeum}
\end{equation}
where henceforth we denote the initial and final conditions of operators $X$ by $X(0):=X_0$ and $X(t_f):=X_f$, respectively. 
I.e., one has Eq.~\eqref{eq:general_ME} with 
\beq
\mathcal{L}=\mathcal{K}_{C}+s(t)\mathcal{K}_{B-C}\ .
\label{eq:L-closed}
\eeq
Since in this case $\mathcal{L}^{\dag}=-\mathcal{L}$, Eq.~\eqref{LiouvK1} tells us that the co-state matrix $p$ satisfies the same equation as $\rho$: 
\begin{equation}
\dot{p}=\mathcal{K}_{C}p+s(t)\mathcal{K}_{B-C}p\ ,
\label{Eq:co_vonNeum}
\end{equation}
but with the final condition~\eqref{3plusB}.
The PMP control Hamiltonian reads 
\begin{equation}
\mathbb{H}=\langle p,\mathcal{K}_{C}\rho\rangle+s(t)\langle p,\mathcal{K}_{B-C}\rho\rangle\ .
\label{HamilP-1}
\end{equation}
 
 \subsection{The ``bang-anneal-bang'' protocol is optimal}
 
Applying Theorem~\ref{adapt0} to the anti-Hermitian superoperator $\mathcal{L}$ of Eq.~\eqref{eq:L-closed}, we obtain the following extension of the result of Ref.~\cite{brady_optimal_2021} to the density matrix setting:
\begin{thm}
\label{Bradygen}
(i) Assume $s^*\in[0,1]$ is the optimal control in an
interval $[0,t_f]$ minimizing the cost~\eqref{nuovocosto} for
Eq.~\eqref{eq:vonNeum}. Then there exists a nonzero
Hermitian matrix solution of Eq.~\eqref{Eq:co_vonNeum} with terminal
condition~\eqref{3plusB} such that $s^*\equiv0$ on intervals where
$\langle p,\mathcal{K}_{B-C}\rho\rangle < 0$, and 
$s^*\equiv1$ on intervals where $\langle p,\mathcal{K}_{B-C}\rho\rangle > 0$.
On all other intervals $\langle p,\mathcal{K}_{B-C}\rho\rangle\equiv0$
(these are called singular arcs). 

(ii) Assume furthermore that the constraint on the final time $t_f$ is active (so that $\lambda>0$). Then $s^*(t)=1$ for $t\in(t_f-\epsilon,t_f]$ for some $\epsilon>0$. Moreover, if the initial condition $\rho_0$ commutes with the driver Hamiltonian $B$, i.e., $\mathcal{K}_{B}\rho_0=0$, one also has  $s^*(t)=0$ for $t\in[0,\epsilon')$ for some $\epsilon'>0$. 
\end{thm}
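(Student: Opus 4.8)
The plan is to read everything off the PMP control Hamiltonian \eqref{HamilP-1}, whose control-dependent part is the coefficient $\phi(t) := \langle p(t),\mathcal{K}_{B-C}\rho(t)\rangle$ multiplying $s$. For part (i), the maximization \eqref{MaxCon} over $v\in[0,1]$ of the affine function $v\mapsto \langle p,\mathcal{K}_C\rho\rangle + v\,\phi(t)$ is immediate: since the coefficient of $v$ is $\phi(t)$, the maximum is attained at $v=1$ when $\phi(t)>0$, at $v=0$ when $\phi(t)<0$, and is independent of $v$ (singular arc) when $\phi(t)=0$. This gives the claimed characterization of $s^*$ on the three types of intervals, once one notes that $\phi$ is a continuous function of $t$ (last bullet after Theorem \ref{adapt0}, applied with $\mathcal{K}=\mathcal{K}_{B-C}$, which is time-independent), so the sets $\{\phi>0\}$ and $\{\phi<0\}$ are open and the bang values are well-defined on them. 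The nonzero Hermitian co-state $p$ is exactly the one furnished by Theorem \ref{adapt0}.

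For part (ii), I would evaluate $\phi$ at the endpoints using the boundary data. At $t=t_f$ the terminal condition \eqref{3plusB} gives $p(t_f)=-C$, so
\begin{equation}
\phi(t_f)=\langle -C,\mathcal{K}_{B-C}\rho(t_f)\rangle = -\langle C,-i[B-C,\rho_f]\rangle = -\langle C,-i[B,\rho_f]\rangle,
\end{equation}
the $C$-term dropping by the cyclic property of the trace (equivalently $\langle C,\mathcal{K}_C\rho\rangle=\langle \mathcal{K}_C^\dagger C,\rho\rangle = -\langle \mathcal{K}_C C,\rho\rangle = 0$). The key identity is that $\lambda = \mathbb{H}(p(t_f),\rho_f,s^*(t_f)) = \langle p_f,\mathcal{K}_C\rho_f\rangle + s^*(t_f)\phi(t_f) = \langle C,-i[C,\rho_f]\rangle + s^*(t_f)\phi(t_f) = s^*(t_f)\phi(t_f)$, again because $\langle C,\mathcal{K}_C\rho_f\rangle=0$. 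Hence $\lambda = s^*(t_f)\phi(t_f)$. Since $\lambda>0$ by the active-constraint hypothesis and $s^*\in[0,1]$, we must have $\phi(t_f)>0$ and $s^*(t_f)=1$; by continuity of $\phi$ there is $\epsilon>0$ with $\phi>0$ on $(t_f-\epsilon,t_f]$, so by part (i) $s^*\equiv 1$ there. For the initial endpoint, I would run the symmetric argument at $t=0$: now $\lambda = \mathbb{H}(p_0,\rho_0,s^*(0)) = \langle p_0,\mathcal{K}_C\rho_0\rangle + s^*(0)\langle p_0,\mathcal{K}_{B-C}\rho_0\rangle$. Under the hypothesis $\mathcal{K}_B\rho_0=0$, the control-dependent coefficient becomes $\langle p_0,\mathcal{K}_{B-C}\rho_0\rangle = -\langle p_0,\mathcal{K}_C\rho_0\rangle$, so $\lambda = (1-s^*(0))\langle p_0,\mathcal{K}_C\rho_0\rangle$; since $\lambda>0$ this forces $1-s^*(0)>0$ and $\langle p_0,\mathcal{K}_C\rho_0\rangle>0$, whence $\phi(0) = -\langle p_0,\mathcal{K}_C\rho_0\rangle<0$, and continuity plus part (i) give $s^*\equiv 0$ on $[0,\epsilon')$ for some $\epsilon'>0$.

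The conceptually delicate point — and the one I would be most careful about — is the logic connecting $\lambda>0$ to the strict sign of $\phi$ at the endpoint: one must be sure that $\lambda$ is genuinely $>0$ (not merely $\ge 0$) under the active-constraint assumption, which is the content of the discussion around Theorem \ref{case1} in Appendix \ref{app:A} and is being invoked as a hypothesis here, and one must handle the boundary value $s^*(t_f)$, $s^*(0)$ of a control that is only defined a.e.: the cleaner statement is that $\phi(t_f)>0$ (a statement purely about the continuous function $\phi$ and the constant $\lambda$, with no reference to the value of $s^*$ at a single instant), from which the open-interval conclusion follows via part (i). Everything else — the vanishing of $\langle X,\mathcal{K}_X\rho\rangle$ for Hermitian $X$, the continuity of $\phi$, the affine maximization — is routine given the machinery already set up in the excerpt.
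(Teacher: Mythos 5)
Your proposal is correct and follows essentially the same route as the paper's own proof: part (i) is the direct affine-in-$v$ maximization of the PMP control Hamiltonian, and part (ii) evaluates $\mathbb{H}$ at the endpoints, uses $\langle p_f,\mathcal{K}_C\rho_f\rangle=0$ (from $p_f=-C$) and $\mathcal{K}_B\rho_0=0$ to isolate the sign of the switching function, and then invokes continuity plus part (i). Your closing remark that the robust conclusion is the strict sign of the continuous function $\phi$ at the endpoint (rather than the pointwise value of an a.e.-defined control) is a sound refinement of the same argument, not a departure from it.
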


Before proving this theorem we offer a few remarks. 
\begin{itemize}
\item Part (i) implies that the optimal control is, in general, an alternation of ``bang'' (nonsingular)
arcs and ``anneal'' (singular) arcs where $\langle p,\mathcal{K}_{B}\rho\rangle=\langle p,\mathcal{K}_{C}\rho\rangle$. Using the PMP, this is an immediate consequence of the fact that the control enters linearly
in the equation and it is coupled to the superoperator $\mathcal{K}_{B-C}$. \emph{The latter is what is ``special'' about the quantum annealing problem.}
\item Part (ii) implies that under the assumption of an active time constraint and  for a particular initial condition, the optimal control starts and ends with nonsingular arcs. In particular,
it starts with an arc $s\equiv0$ and ends with an arc $s\equiv1$.
\item In practice, whether there are additional nonsingular arcs in the middle is problem dependent, and there is numerical evidence that such optimal scenarios do indeed exist~\cite{brady_optimal_2021}, but such nonsingular arcs do not exist in the single qubit example discussed in Subsection \ref{esempio}. 
\item Assuming that $\mathcal{K}_{B}\rho_0=0$ one can prove $\lambda \ge0$  (see also Appendix~\ref{app:A}, Proposition~\ref{landamagz}), but the condition $\lambda >0$ is more subtle and it must be assumed independently. We return to this point in the next subsection.
\end{itemize}

\begin{proof}
Part (i): Eq.~\eqref{HamilP-1} states that the PMP control Hamiltonian $\mathbb{H}$ depends on the control only via the term $s(t)\langle p,\mathcal{K}_{B-C}\rho\rangle$. If $\langle p,\mathcal{K}_{B-C}\rho\rangle <0$, then to maximize this term as per Eq.~\eqref{MaxCon} subject to the constraint that $s(t)\in [0,1]$, clearly we must set $s^\ast\equiv 0$. Likewise, if  $\langle p,\mathcal{K}_{B-C}\rho\rangle > 0$, then to maximize this term subject to the same constraint requires $s^\ast\equiv 1$. This is the case of nonsingular arcs. Conversely, if $\langle p,\mathcal{K}_{B-C}\rho\rangle \equiv 0$ (a singular arc), then we cannot conclude anything about the control from the PMP.

Part (ii): 
To investigate the form of the control at the end
of the control interval $[0,t_f]$, consider Eq.~\eqref{forlanda} with $\lambda>0$. 
Using Eq.~\eqref{3plusB} we have $\langle p_f,\mathcal{K}_{C}\rho_f\rangle=\langle\mathcal{K}_{C}^{\dag}p_f,\rho_f\rangle=\langle\mathcal{K}_{C}C,\rho_f\rangle=0$.
This means that  $\mathbb{H}(t_f)=s(t_f) \langle p_f,\mathcal{K}_{B}\rho_f\rangle=\lambda>0$, which in turn, since $s\in[0,1]$,  implies that $\langle p_f,\mathcal{K}_{B}\rho_f\rangle>0$. By continuity there must exist an interval $(t_f-\epsilon,t_f]$ (for some $\epsilon>0$) such that $\langle p (t),\mathcal{K}_{B-C}\rho (t)\rangle>0$ for $t\in(t_f-\epsilon,t_f]$, and in this interval we must have $s^\ast(t)=1$ by (i).

The argument for the initial time is similar but instead of Eq.~\eqref{3plusB} it uses the extra assumption $\mathcal{K}_{B}\rho_{0}=0$. Let us evaluate the control Hamiltonian at $t=0$. Because of the assumption $\mathcal{K}_{B}\rho_0=0$ we have 
$\mathbb{H}(t=0)=(1-s(0))\langle p_0,\mathcal{K}_{C}\rho_0\rangle=\lambda>0$. Since $s\in[0,1]$ this implies that $\langle p_0,\mathcal{K}_{C}\rho_0\rangle>0$ and $\langle p_0,\mathcal{K}_{-C}\rho_0\rangle<0$. By continuity there must exist an $\epsilon'>0$ such that, for $t\in[0,\epsilon')$,  $\langle p(t),\mathcal{K}_{B-C}\rho(t)\rangle <0$ and in this interval we must have $s^\ast(t)=0$ by (i).
\end{proof}

\subsection{The active constraint assumption and a sharpening of the results of Ref.~\cite{brady_optimal_2021}}
\label{sec:active-constraint}

The condition $\lambda>0$ (that is, an active constraint on
the final time $t_f$) requires some extra discussion. It is a known
fact in the geometric theory of quantum control systems, and it follows
as an application of general results on control systems on Lie groups
(see, e.g., Ref.~\cite[Th.~7.2]{jurdjevic_control_1972}), that
there exists a \emph{critical time} $t_{c}$ such that, the set
of states reachable at time $t$, coincides for every $t\geq t_{c}$.
In other words, the reachable set does not grow past a certain time
$t_{c}$. Therefore, for every $t_f\geq t_{c}$ the time constraint
is never active. The minimum time $t_{\min}$ to reach the ground state of $C$ is $\leq t_c$. If the final 
time $t_f$ is greater than or equal to $t_{\min}$, then again the time constraint can never be active. In order to avoid this situation, 
it was claimed in Ref.~\cite{brady_optimal_2021} that having $t_f<t_{\min}$,  is sufficient for having $\lambda>0$ in Eq.~\eqref{forlanda}. Their argument only uses $[\rho_0,B]=0$. However, in Sec.~\ref{esempio} below we give an example satisfying this assumption for which $\lambda =0$ for arbitrarily small $t_f$. Thus, the assumption $t_f<t_{\min}$ 
is certainly necessary for $\lambda>0$ but is in fact not sufficient. Rather, $\lambda>0$ is
a feature of the optimal trajectory rather than of the problem itself.  This can be explained more easily geometrically, as we now do. 

The optimal cost at time $t_f$ is the minimum of
a continuous function on the reachable set of states $\mathfrak{R}_{t_f}$ (see Appendix~\ref{app:A}).
It is also known, under conditions that apply in our case, that the reachable set $\mathfrak{R}_{t_f}$
varies continuously with $t_f$~\cite{ayala_about_2017}. We can map the space of Hermitian
matrices $\rho$ diffeomorphically to $\mathbb{R}^{n^{2}}$ (see Appendix~\ref{app:real}) and consider
its reachable set there. Since the cost function~\eqref{nuovocosto}
is linear on this set, the minimum occurs on the boundary. Therefore,
the optimal trajectory is a curve starting from the initial condition
$\rho_{0}$ and ending on the boundary of $\mathfrak{R}_{t_f}$.
At the endpoint, the trajectory will have a tangent vector which indicates
its future direction. Now $\lambda>0$ if, going (infinitesimally)
in that direction combined with an increase in the size of the reachable
set $\mathfrak{R}_{t_f+\epsilon}$ for some small $\epsilon$, will result
in a reduced cost, and this is what we mean by the time
constraint being active. If $t_f$ is such that the reachable set does
not increase at $t_f$, for instance if $t_f\geq t_{c}$, then
clearly this is not possible and we must have $\lambda=0$. However,
it is also possible that the reachable set increases but not in a
way to (strictly) decrease the cost, in particular the portion of
the boundary where we landed might not move at all, or it might move
but not in a direction that decreases the cost. This geometric discussion is illustrated with figures in  Appendix~\ref{app:geomfig}. 

The  phenomenon that the optimal cost does not decrease with an increasing final time $t_f$ may occur
even though $t_f$ is arbitrarily small. Let us denote by $J_{\min}(t_f)$
the minimum cost~\eqref{nuovocosto} as a function of $t_f$. The
example we provide below (Sec.~\ref{esempio}) shows that, even assuming
$[\rho_{0},B]=0$, we can have $J_{\min}(t_f)=J_{\min}(0)$
for $t_f\in[0,\epsilon)$ and some $\epsilon>0$, that is,
the cost cannot be lowered for some time, independently of the control.
However, under the additional assumption that $\rho_{0}$ is the \emph{nondegenerate
ground state of $B$} 
this does not happen, and we have the following theorem which we prove
in Appendix~\ref{sec:proof-thm3}:

\begin{thm}
\label{Active} 
Assume that $\rho_{0}$ in Eq.~\eqref{eq:general_ME} is the nondegenerate ground state of $B$. Then there exists an $\epsilon>0$ such that, for every $t_f\in(0,\epsilon)$, $J_{\min}(t_f)<J_{\min}(0)=\Tr(C\rho_{0})$. 
\end{thm}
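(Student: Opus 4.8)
The strategy is to produce an \emph{upper} bound on $J_{\min}(t_f)$: since $J_{\min}(t_f)$ is the infimum of the cost over all admissible schedules, it suffices to exhibit, for every small enough $t_f>0$, a single admissible control whose cost already lies strictly below $\Tr(C\rho_0)$. The natural candidate is a constant schedule $s\equiv v$ with $v\in(0,1)$ fixed (I will take $v=\tfrac12$). For a constant control Eq.~\eqref{eq:vonNeum} integrates in closed form, $\rho(t_f)=e^{-iH_v t_f}\rho_0\,e^{iH_v t_f}$ with $H_v:=vB+(1-v)C$, so the induced cost is the scalar function
\begin{equation}
f(t_f):=\Tr\!\left(e^{iH_v t_f}\,C\,e^{-iH_v t_f}\,\rho_0\right),\qquad f(0)=\Tr(C\rho_0),
\end{equation}
and it is enough to show $f(t_f)<f(0)$ for $t_f$ in some right neighborhood of $0$.

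I would Taylor-expand $f$ around $t_f=0$, which is legitimate since the integrand is real-analytic. Write $\rho_0=|g\rangle\langle g|$ for the (by hypothesis unique) normalized ground state of $B$ with ground energy $E_0$. Repeated use of $B|g\rangle=E_0|g\rangle$ collapses the first two coefficients: the order-$t_f$ coefficient $i\,v\,\Tr([B,C]\rho_0)$ vanishes; and in the order-$t_f^2$ coefficient $-\tfrac12\Tr([H_v,[H_v,C]]\rho_0)$, expanding $[H_v,[H_v,C]]=v^2[B,[B,C]]+v(1-v)[C,[B,C]]$ kills the $[B,[B,C]]$ piece (again by $B|g\rangle=E_0|g\rangle$), leaving $-\tfrac12 v(1-v)\Tr([C,[B,C]]\rho_0)$. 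A short computation, using $[C,[B,C]]=2CBC-C^2B-BC^2$ together with $B|g\rangle=E_0|g\rangle$, rewrites the surviving matrix element as
\begin{equation}
\Tr\big([C,[B,C]]\rho_0\big)=2\,\langle\phi|\,(B-E_0\1)\,|\phi\rangle,\qquad |\phi\rangle:=C|g\rangle ,
\end{equation}
so that $f(t_f)=f(0)-t_f^2\,v(1-v)\,\langle\phi|(B-E_0\1)|\phi\rangle+O(t_f^3)$.

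The crux is the strict positivity $\langle\phi|(B-E_0\1)|\phi\rangle>0$. Since $E_0=\min\operatorname{spec}(B)$ we have $B-E_0\1\succeq0$, hence $\langle\phi|(B-E_0\1)|\phi\rangle\ge0$, with equality exactly when $|\phi\rangle=C|g\rangle\in\ker(B-E_0\1)$; nondegeneracy makes that kernel one-dimensional, so equality would force $C|g\rangle\propto|g\rangle$, i.e.\ $[C,\rho_0]=0$. But then $|g\rangle$ is a common eigenvector of $B$ and $C$, hence an eigenvector of every element of the dynamical Lie algebra $\hat{\mathfrak{L}}$, so the reachable set from $\rho_0$ collapses to the single point $\{\rho_0\}$ and no schedule can change the cost — a situation we discard at the outset (it either contradicts the standing assumption that the ground state of $C$ is reachable, or else means $\rho_0$ already equals that ground state and there is nothing to optimize). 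Granting $[C,\rho_0]\ne0$, we therefore have $\langle\phi|(B-E_0\1)|\phi\rangle>0$. Taking $v=\tfrac12$ yields $f(0)=\Tr(C\rho_0)$, $f'(0)=0$, and $f''(0)=-\tfrac12\langle\phi|(B-E_0\1)|\phi\rangle<0$, so $t_f=0$ is a strict local maximum of $f$; this gives an $\epsilon>0$ with $J_{\min}(t_f)\le f(t_f)<f(0)=\Tr(C\rho_0)$ for all $t_f\in(0,\epsilon)$, which is the claim.

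The step I expect to be the real obstacle is precisely this positivity argument and, intertwined with it, the clean disposal of the degenerate case $C|g\rangle\propto|g\rangle$: ``nondegenerate ground state of $B$'' alone does not exclude $|g\rangle$ being a common eigenvector of $B$ and $C$, so one must invoke the paper's tacit reachability hypothesis to remove it. The rest — analyticity of $f$, the Taylor bookkeeping, and verifying that $[B,\rho_0]=0$ is exactly what annihilates $\Tr([B,C]\rho_0)$ and $\Tr([B,[B,C]]\rho_0)$ — is routine.
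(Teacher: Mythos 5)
Your proof is correct and follows essentially the same strategy as the paper's proof in Appendix~\ref{sec:proof-thm3}: exhibit one explicit admissible control, Taylor-expand its cost to second order in $t_f$ (the first-order term vanishing because $[B,\rho_0]=0$), and obtain strict negativity of the quadratic coefficient from the nondegeneracy of the ground state of $B$, discarding the degenerate case $[C,\rho_0]=0$ on the same grounds (frozen dynamics) as the paper. The only difference is the test control --- you use the constant schedule $s\equiv\tfrac12$ whereas the paper uses the two-bang sequence ($C$ for time $t_f/2$ followed by $B$ for time $t_f/2$) --- but both computations reduce to the same strictly positive quantity, $\langle g|C(B-E_0\1)C|g\rangle=\sum_{j\geq 2}(\lambda_j-\lambda_1)|a_j|^2$ in the paper's notation.
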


In other words, if we start from the nondegenerate ground state of $B$ we can always decrease the cost for sufficiently small $t_f$. Note that this, however, does not prove that $\lambda >0$. As we have explained, the condition $\lambda>0$ is a condition about the optimal trajectory, and 
it is an open problem to find sufficient conditions such that every optimal trajectory satisfies the $\lambda > 0$ requirement for sufficiently small $t_f$.

\subsection{Example: optimal control of a spin-$1/2$ particle}
\label{esempio}

We now give an example showing that without the assumption that $\rho_{0}$ is the nondegenerate ground state of $B$, the cost~\eqref{nuovocosto} cannot be lowered even for arbitrarily small $t_f$'s.

Consider a spin-$1/2$ particle (qubit) in a magnetic field. The model
is given by Eq.~\eqref{eq:vonNeum} with $C=\frac{1}{2}\sigma^z$
and $B=\frac{1}{2}\sigma^x$. As an orthonormal, Hermitian operator basis we choose $F_i = \frac{1}{\sqrt{2}}\sigma_i$, where we denote the standard Pauli matrices $\sigma^x\equiv\sigma_1$ etc., i.e.:
\begin{equation}
\sigma_{1}=\left(\begin{array}{cc}
0 & 1\\
1 & 0
\end{array}\right),\quad\sigma_{2}=\left(\begin{array}{cc}
0 & -i\\
i & 0
\end{array}\right),\quad\sigma_{3}=\left(\begin{array}{cc}
1 & 0\\
0 & -1
\end{array}\right)\ .
\label{PauliMat}
\end{equation}
They satisfy the su$(2)$ commutation relations 
\begin{equation}
\left[\sigma_1,\sigma_2\right]=2i\sigma_3,\quad\left[\sigma_3,\sigma_1\right]=2i\sigma_2,\quad\left[\sigma_2,\sigma_3\right]=2i\sigma_1\ .
\label{commurel}
\end{equation}
We parametrize the density matrix as $\rho=\frac{1}{2}\left(\1+\boldsymbol{v}\cdot\boldsymbol{\sigma}\right)$,
where $\boldsymbol{v}\in \mathbb{R}^{3}$ is the Bloch vector ($\|\boldsymbol{v}\|\leq 1$) and $\boldsymbol{\sigma} = (\sigma_1,\sigma_2,\sigma_3)^T$. The Bloch vector satisfies Eq.~\eqref{eq:vonNeum} where, using $\boldsymbol{\mathcal{K}}_{ij} = \Tr[F_i \boldsymbol{\mathcal{K}}(F_j)]$ (see Appendix~\ref{app:real}), we
have
\begin{equation}
\label{adcadb}
\boldsymbol{\mathcal{K}}_{B}=\left(\begin{array}{ccc}
0 & 0 & 0\\
0 & 0 & -1\\
0 & 1 & 0
\end{array}\right),\quad\boldsymbol{\mathcal{K}}_{C}=\left(\begin{array}{ccc}
0 & -1 & 0\\
1 & 0 & 0\\
0 & 0 & 0
\end{array}\right)\ .
\end{equation}
Equivalently, the dynamics are given by the Bloch equation
\bes
\begin{align}
\label{eq:Bloch-eq} 
\dot{\boldsymbol{v}} & =\boldsymbol{\mathcal{M}}(s)\boldsymbol{v}
\\
\boldsymbol{\mathcal{M}}(s) & =\left(\begin{array}{ccc}
0 & -(1-s) & 0\\
1-s & 0 & -s\\
0 & s & 0
\end{array}\right)\ .
\label{eq:M_Bloch}
\end{align}
\ees
Geometrically, $\boldsymbol{\mathcal{K}}_{C}$ is the infinitesimal
generator of a counterclockwise rotation about the $v_{3}$ axis, while $\boldsymbol{\mathcal{K}}_{B}$
is the infinitesimal generator of a counterclockwise rotation about the $v_{1}$
axis. For $s\in(0,1)$, $\boldsymbol{\mathcal{M}}(s)=(1-s)\boldsymbol{\mathcal{K}}_{C}+s\boldsymbol{\mathcal{K}}_{B}$
generates a counterclockwise rotation about an intermediate axis in the $(v_{1},v_{3})$
plane. The cost~\eqref{nuovocosto} in this case becomes $J = \frac{1}{2}v_3$, i.e., it corresponds to the minimization of the $v_{3}$ component. Furthermore, let us
assume for simplicity that the initial state $\rho_{0}$ is pure ($\|\boldsymbol{v}\| = 1$). There are only two such states compatible with the condition $[B,\rho_{0}]=0$ (equivalently: $\boldsymbol{\mathcal{K}}_{B}\boldsymbol{v}_0=\boldsymbol{0}$): the $\sigma_1$ eigenstates, i.e., $\boldsymbol{v}_{0}=(\pm1,0,0)^{T}$. 

Now, if the initial state is $\boldsymbol{v}_{0}=(1,0,0)^{T}$, i.e., the excited state of $B$, then for sufficiently small $t$ we have
$v_{3}(t)\geq0$ independently of the control $s\in[0,1]$ (see Appendix~\ref{app:calcs}).
Therefore,
an optimal control in $[0,t_f]$ for $t_f$ small will be $s\equiv0$
(which will keep the value of $v_{3}$ at zero). The value of the
minimum cost $J_{\min}(t_f)$ is equal to $J(0)$ for any
arbitrarily small $t_f$. The constraint on the final time is not
active here, \emph{even for arbitrarily small $t_f$}. As a
consequence, in this case we cannot draw the conclusions of Theorem
\ref{Bradygen} following from the assumption $\lambda>0$. On the
other hand, for $\boldsymbol{v}_{0}=(-1,0,0)^{T}$, which corresponds to the (nondegenerate)
\emph{ground state}, with sufficiently small $t_f$ we can lower the cost
according to Theorem~\ref{Active}. We prove in Appendix~\ref{low} that the optimal control in this case is a simple bang-bang protocol:

\begin{thm}
\label{conclusione} 
The  optimal control for the system of one spin-$1/2$ particle considered above,
starting from the ground
state and minimizing the cost $\Tr(\sigma_{3}\rho)$ in time $t_f<\pi$, is the sequence $s^\ast\equiv0$ for time $\frac{t_f}{2}$ followed by $s^\ast\equiv1$ for time $\frac{t_f}{2}$ (see Fig.~\ref{fig:optimal}). 
\end{thm}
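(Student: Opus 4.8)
The plan is to carry out the Pontryagin analysis of Theorem~\ref{adapt0} entirely in the Bloch picture of Sec.~\ref{esempio}. Writing $\rho=\tfrac12(\1+\boldsymbol v\cdot\boldsymbol\sigma)$ and encoding the (traceless part of the) co-state $p$ in the same way by $\boldsymbol p\in\mathbb R^{3}$, Eqs.~\eqref{eq:vonNeum} and~\eqref{Eq:co_vonNeum} show that $\boldsymbol v$ and $\boldsymbol p$ obey the \emph{same} linear equation $\dot{\boldsymbol x}=\boldsymbol{\mathcal M}(s)\boldsymbol x=\boldsymbol\omega(s)\times\boldsymbol x$, with $\boldsymbol\omega(s)=(s,0,1-s)^{T}$, subject to $\boldsymbol v(0)=(-1,0,0)^{T}$ and, by Eq.~\eqref{3plusB}, $\boldsymbol p(t_f)=(0,0,-1)^{T}$; the cost~\eqref{nuovocosto} is $J=\tfrac12 v_3(t_f)$. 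I would then introduce $\boldsymbol\phi:=\boldsymbol v\times\boldsymbol p$, which also satisfies $\dot{\boldsymbol\phi}=\boldsymbol\omega(s)\times\boldsymbol\phi$, and observe that the PMP control Hamiltonian equals a positive multiple of $\boldsymbol\omega(s)\cdot\boldsymbol\phi$ while the switching function of Theorem~\ref{Bradygen}(i) is $\langle p,\mathcal K_{B-C}\rho\rangle\propto\Phi:=\phi_1-\phi_3$, with the useful identities $\dot\Phi=-\phi_2$ (independent of $s$) and $\ddot\Phi=-[(1-s)\phi_1-s\phi_3]$; moreover $\phi_1$ is constant on every $s\!\equiv\!1$ arc and $\phi_3$ on every $s\!\equiv\!0$ arc, with a common value fixed (and $\ge0$, since $\mathcal K_B\rho_0=0$ implies $\lambda\ge0$ by Proposition~\ref{landamagz}) by the constancy~\eqref{forlanda} of $\mathbb H$.

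The core of the argument is to show that \emph{because $t_f<\pi$} the optimal control has no singular arc and no degenerate co-state, so it is genuinely bang-bang. On a singular arc $\Phi\equiv\dot\Phi\equiv0$ gives $\phi_1=\phi_3$, $\phi_2=0$, and then $\ddot\Phi=0$ forces $(1-2s)\phi_1=0$. If $\phi_1\equiv0$ there, then $\boldsymbol\phi\equiv0$ on an interval and hence on all of $\mathcal I$, so $\boldsymbol p=c\boldsymbol v$ with $c$ constant; matching $\boldsymbol p(t_f)$ (and using $p\not\propto\1$) forces $c\neq0$, $|c|$ fixed, and $v_3(t_f)=\pm1$ --- i.e.\ either $J=+\tfrac12$, beaten by the schedule constructed below and so not optimal, or $\boldsymbol v(t_f)=(0,0,-1)^{T}$, which is impossible for $t_f<\pi$ because the minimum time to steer $(-1,0,0)^{T}$ to the south pole is exactly $\pi$ (a companion time-optimal PMP argument: the admissible velocities are $\boldsymbol\omega(s)\times\boldsymbol v$, and since the initial point and the south pole lie on the rotation axes of $B$ and $C$ respectively, the time-optimal schedule is $s\!\equiv\!0$ then $s\!\equiv\!1$ with arcs of length $\pi/2$). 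If instead $s\equiv\tfrac12$ on a singular interval, then $\boldsymbol\phi$ is constant and parallel to $(1,0,1)^{T}$, so $\boldsymbol v$ and $\boldsymbol p$ both lie in the plane $v_1+v_3=0$ there; tracking $\boldsymbol v$ forward from $\boldsymbol v(0)$ and $\boldsymbol p$ backward from $\boldsymbol p(t_f)$ shows the two neighbouring bang arcs must be the initial $s\!\equiv\!0$ arc and the terminal $s\!\equiv\!1$ arc, each of length $\pi/2$, so the singular arc would have length $t_f-\pi<0$, a contradiction.

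With singular and degenerate extremals ruled out, the control is bang-bang. An initial $s\!\equiv\!1$ segment of length $\tau>0$ leaves $\boldsymbol v$ at $(-1,0,0)^{T}$ (since $\mathcal K_B\rho_0=0$) and can be removed: the remainder is again a bang-bang extremal of the same problem with strictly shorter horizon, which by the remaining argument attains $v_3(t_f)=-\sin^{2}\big((t_f-\tau)/2\big)>-\sin^{2}(t_f/2)$ and so is suboptimal; hence the first arc is $s\!\equiv\!0$. At the first $0\!\to\!1$ switch $t_1$ one has $\Phi(t_1)=0$, $\dot\Phi(t_1)\ge0$, and $\ddot\Phi(t_1^{+})=\phi_3(t_1)=\phi_1(t_1)\ge0$, so on $(t_1,t_1+\pi)$ one has $\Phi(t)=\phi_1(t_1)\,[1-\cos(t-t_1)]+\dot\Phi(t_1)\sin(t-t_1)>0$ (the value $\equiv0$ being the already-excluded singular case); since $t_f<\pi$ this gives $\Phi>0$ on $(t_1,t_f]$ and no further switch. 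Thus $s^{*}\equiv0$ on $[0,t_1]$ and $s^{*}\equiv1$ on $[t_1,t_f]$. Propagating the co-state backward through these two arcs yields $\Phi(t)=\tfrac1{\sqrt2}[\cos(t_f-t_1)\sin t-\sin(t_f-t_1)\cos t_1]$ on $[0,t_1]$, and the switching condition $\Phi(t_1)=0$ reduces to $\sin(2t_1-t_f)=0$; as $2t_1-t_f\in(-\pi,\pi)$ this forces $t_1=t_f/2$.

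Finally I would note that an optimal control exists (the control-to-endpoint map is continuous on the weak-$*$ compact set of admissible $s$, and $J$ is continuous) and must be one of the PMP extremals above; among those, the only one with $J<0$ for $t_f<\pi$ is $s^{*}\equiv0$ on $[0,t_f/2]$ followed by $s^{*}\equiv1$ on $[t_f/2,t_f]$, and a direct computation (rotate $(-1,0,0)^{T}$ by $t_f/2$ about $\boldsymbol{\mathcal K}_C$, then by $t_f/2$ about $\boldsymbol{\mathcal K}_B$) gives $v_3(t_f)=-\sin^{2}(t_f/2)<0$, the claimed minimum. I expect the main obstacle to be the second paragraph: excluding singular and degenerate arcs and bounding the number of switches to one. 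Each of these steps fails at $t_f=\pi$, so the hypothesis $t_f<\pi$ must be used essentially, and the most delicate ingredient is the sub-lemma that the south pole cannot be reached before time $\pi$, which is needed to discard the degenerate co-state.
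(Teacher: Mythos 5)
Your proposal follows essentially the same route as the paper's proof in Appendix~\ref{low}: your $\boldsymbol\phi=\boldsymbol v\times\boldsymbol p$ is exactly the Bloch-vector form of the switching operator $S=i[p,\rho]$ of Sec.~\ref{Switchings}, your dichotomy on singular arcs ($s\equiv\tfrac12$ versus $\boldsymbol\phi\equiv0$, the latter forcing the ground state of $C$ to be reached, impossible for $t_f<\pi$) is the paper's, and your terminal computation $\sin(2t_1-t_f)=0\Rightarrow t_1=t_f/2$ reproduces the paper's conclusion $\tau_2=\tau_1=t_f/2$. Your way of excluding further switches — solving $\ddot\Phi+\Phi=\phi_1(t_1)$ explicitly on the $s\equiv1$ arc and noting both coefficients are nonnegative so $\Phi>0$ on $(t_1,t_1+\pi)\supset(t_1,t_f]$ — is a clean substitute for the paper's recursion $\Delta_k=\tau_k-\Delta_{k-1}$ and its two lemmas, and is correct. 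The one step you assert without justification is that a putative $s\equiv\tfrac12$ singular arc must be \emph{adjacent} to the initial $s\equiv0$ and terminal $s\equiv1$ bangs; a priori other bang arcs could intervene, and your "tracking $\boldsymbol v$ forward and $\boldsymbol p$ backward" only gives the $\pi/2$ lengths once adjacency is known. The paper closes this by showing that after (before) a singular arc the switching operator, started from $S\equiv\lambda(\sigma^x+\sigma^z)$, needs time $2\pi>t_f$ under a single bang to return to a switching point, so a singular arc cannot be followed or preceded by an interior switch; you would need to add this (or an equivalent lower bound) to make the second paragraph airtight.
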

Here $t_c=\pi$, i.e., if $t_f \ge \pi$ one trivially finds the ground state exactly (by a $\pi/2$ rotation from the $-1$ eigenstate of $\sigma^x$ to the $-1$ eigenstate of $\sigma^y$, followed by another $\pi/2$ rotation to the $-1$ eigenstate of $\sigma^z$) and one cannot do better by increasing $t_f$. This optimal bang-bang schedule result for a single spin-$1/2$ joins previous such results for systems as diverse as pairs of one-dimensional quasicondensates~\cite{Rahmani:2013vq} or ``gmon'' qubits~\cite{Bao:2018uh}, as well as braiding of Majorana zero modes~\cite{Karzig:2015un}.

\begin{figure}
\begin{centering}
\includegraphics[width=7cm]{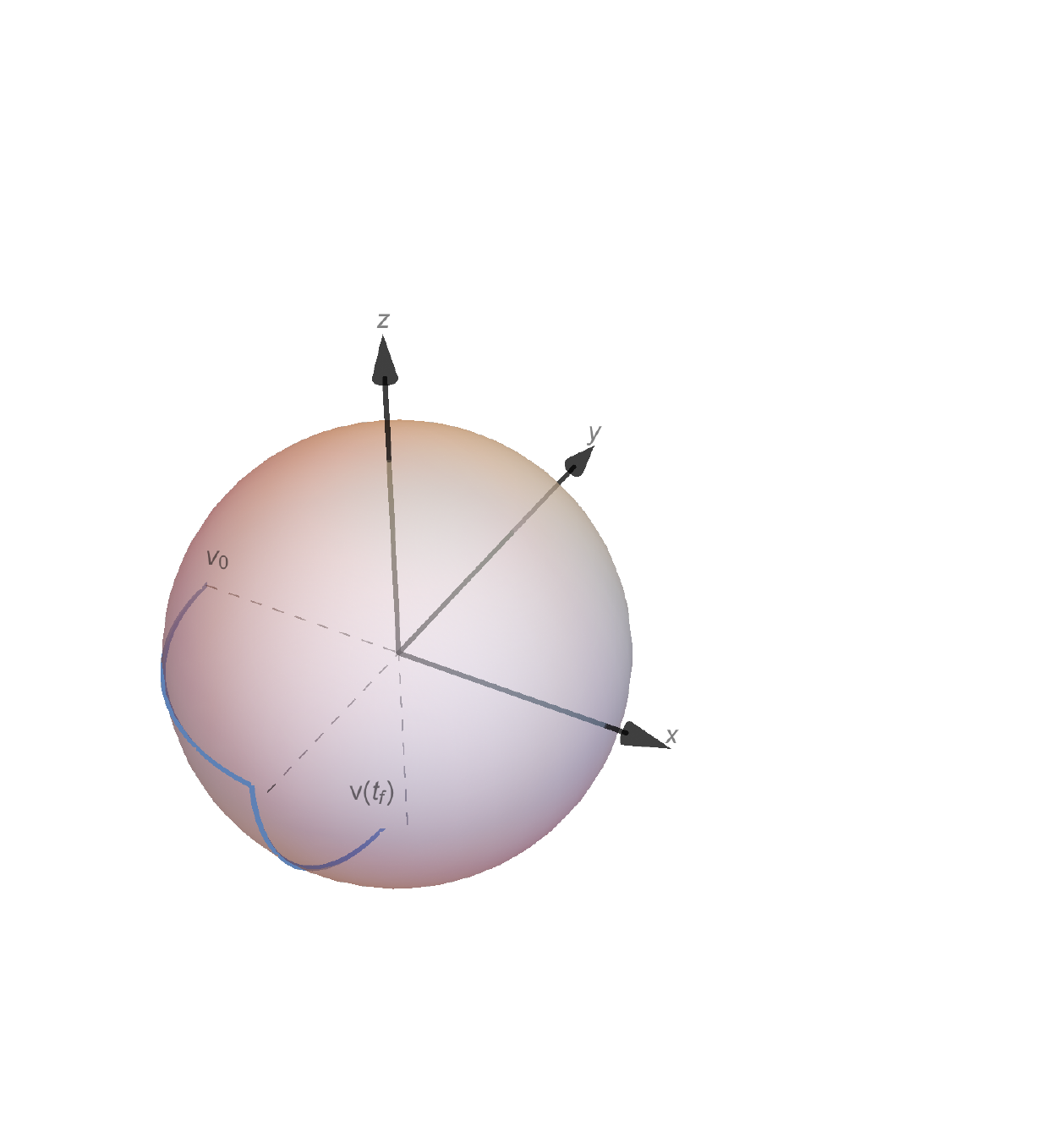}
\par\end{centering}
\caption{(Color online) Single qubit case. Optimal trajectory on the Bloch sphere with initial condition $\boldsymbol{v}_0 = (-1,0,0)^T$. Here $t_f=0.95\pi$ so the ground state of $C$, corresponding to the point $(0,0,-1)$,  is not reached exactly.}
 \label{fig:optimal}
\end{figure}

\section{The open system case}
\label{OSC}

In this section we generalize the results for closed systems to the open system setting. We consider two different approaches: an approximation-free treatment of a system + environment where both are finite-dimensional, and a master equation approach subject to a Markovian approximation, which applies for infinite-dimensional environments~\cite{Davies:76,alicki_quantum_2007,Breuer:book,rivas_open_2012,Lidar:2019aa}.  We show that under a number of additional assumptions, we can (partially) recover the results from the closed system setting, but that the bangs characterizing the latter are not a particularly robust feature in the open system setting.

\subsection{Optimal control for the Liouville-von Neumann equation}
\label{OCLVN}

One approach for extending the closed system results of the previous section to open systems is to consider the \emph{full dynamics} of a jointly evolving system + environment. In this case
$\rho$ in Eq.~\eqref{eq:general_ME} is the density matrix of the system
and the environment, with an initial condition which is usually taken to be of the
factorized form $\rho(0):=\rho_{0}\otimes\rho_{E}$, where now $\r_0$ refers to the initial state of the system only. The Liouville-von Neumann equation is [extending Eq.~\eqref{eq:vonNeum}]:
\begin{equation}
\dot{\rho}=\mathcal{K}_{H_{\mathrm{tot}}}\rho\ , \quad \rho(0)=\rho_{0}\otimes\rho_{E}\ ,
\label{eq:open}
\end{equation}
where $\mathcal{K}$ is defined in Eq.~\eqref{eq:K_X}, with the total Hamiltonian 
\begin{equation}
H_{\mathrm{tot}}=H_S\otimes \1_E+H_{I}+\1_S\otimes H_{E}\ .
\label{eq:H_tot}
\end{equation}
Here $H_{I}$ is the interaction between the system and
environment, $H_{E}$
generates the dynamics of the environment, while the system Hamiltonian,
as before, contains the controllable part:
\beq
H_S(t)=C+s(t)\left(B-C\right)\ .
\label{eq:H_S}
\eeq
Finally the cost is given by 
\begin{equation}
J=\Tr[\left(C\otimes\1_{E}\right)\rho]\ .
\label{eq:cost_open}
\end{equation}
Theorem~\ref{adapt0} holds with $\mathcal{L}=\mathcal{K}_{H_{\mathrm{tot}}}$
and the PMP control Hamiltonian has the form 
\begin{equation}
\mathbb{H}=\langle p,\mathcal{K}_{H_S\otimes\1_E}\rho\rangle+\langle p,\mathcal{K}_{H_{I}}\rho\rangle+\langle p,\mathcal{K}_{\1_S\otimes H_{\mathrm{E}}}\rho\rangle\ .
\label{eq:Ham_open}
\end{equation}
The treatment of Sec.~\ref{OCP} applies, \textit{mutatis mutandis}. In
particular, condition~\eqref{3plusB} is replaced by 
\begin{equation}
p(t_f)=-C\otimes\1_E\ .
\label{eq:pt_H_open}
\end{equation}
Remarkably, no additional modifications of the statement of the PMP Theorem~\ref{adapt0} are needed. Moreover, it is clear from Eqs.~\eqref{eq:H_S} and~\eqref{eq:pt_H_open} that once again the control enters $\mathbb{H}$ only via the term $s(t)\langle p,\mathcal{K}_{(B-C)\otimes\1_E}\rho\rangle$, so that the proof of Part (i) of Theorem~\ref{Bradygen} applies without any change. This shows that:
\begin{cor}
In the general open system setting of Eq.~\eqref{eq:open} with the cost~\eqref{eq:cost_open}, the optimal control $s(t)$ is an alternation of bang arcs where $s\equiv0$ (when $\langle p,\mathcal{K}_{(B-C)\otimes\1}\rho\rangle<0$) or    $s\equiv1$ (when $\langle p,\mathcal{K}_{(B-C)\otimes\1}\rho\rangle>0$),
and singular arcs where $\langle p,\mathcal{K}_{C\otimes\1}\rho\rangle\equiv\langle p,\mathcal{K}_{B\otimes\1}\rho\rangle$.
\label{control_general_open}
\end{cor}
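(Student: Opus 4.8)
The plan is to reduce the statement to Part~(i) of Theorem~\ref{Bradygen} by checking that the open-system PMP control Hamiltonian~\eqref{eq:Ham_open} depends on the control $s$ in exactly the same affine way as in the closed-system case, with the $s$-dependent piece proportional to $\langle p,\mathcal{K}_{(B-C)\otimes\1_E}\rho\rangle$. First I would verify that Theorem~\ref{adapt0} applies verbatim to the enlarged system+environment Hilbert space: the Liouvillian $\mathcal{L}=\mathcal{K}_{H_{\mathrm{tot}}}$, with $H_{\mathrm{tot}}$ given by~\eqref{eq:H_tot} and~\eqref{eq:H_S}, is Hermiticity-preserving (since $\mathcal{K}_X^{\dagger}=-\mathcal{K}_{X^{\dagger}}=-\mathcal{K}_X$ for Hermitian $X$ by~\eqref{eq:KXanti}), depends linearly on $s$, and — crucially — carries no explicit time dependence, because $H_I$ and $H_E$ are fixed operators and $H_S(t)$ depends on $t$ only through $s(t)$. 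Hence the co-state satisfies $\dot p=-\mathcal{L}^{\dagger}p=\mathcal{K}_{H_{\mathrm{tot}}}p$ with terminal condition $p(t_f)=-C\otimes\1_E$, and the maximum principle~\eqref{MaxCon} together with the constancy relation~\eqref{forlanda} hold.

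Next I would decompose $\mathbb{H}$ using linearity of $X\mapsto\mathcal{K}_X$ and $H_S(t)=C+s(t)(B-C)$:
\begin{equation}
\mathbb{H}=\langle p,\mathcal{K}_{C\otimes\1_E}\rho\rangle+\langle p,\mathcal{K}_{H_I}\rho\rangle+\langle p,\mathcal{K}_{\1_S\otimes H_E}\rho\rangle+s(t)\,\langle p,\mathcal{K}_{(B-C)\otimes\1_E}\rho\rangle .
\end{equation}
Only the last term contains $s$, and its coefficient $\phi(t):=\langle p(t),\mathcal{K}_{(B-C)\otimes\1_E}\rho(t)\rangle$ is real (Hermiticity-preserving property) and continuous in $t$ (both $p$ and $\rho$ solve ODEs and $\mathcal{K}_{(B-C)\otimes\1_E}$ is time-independent). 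From here the argument of Part~(i) of Theorem~\ref{Bradygen} is reproduced without change: maximizing $\mathbb{H}$ over $v\in[0,1]$ as demanded by~\eqref{MaxCon} forces $s^{\ast}\equiv1$ on intervals where $\phi(t)>0$ and $s^{\ast}\equiv0$ where $\phi(t)<0$, while on intervals where $\phi(t)\equiv0$ the PMP leaves $s^{\ast}$ undetermined (the singular arcs). Rewriting $\phi=\langle p,\mathcal{K}_{B\otimes\1_E}\rho\rangle-\langle p,\mathcal{K}_{C\otimes\1_E}\rho\rangle$ identifies the singular condition as $\langle p,\mathcal{K}_{B\otimes\1}\rho\rangle\equiv\langle p,\mathcal{K}_{C\otimes\1}\rho\rangle$, which is the claim.

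Since the corollary is essentially a bookkeeping exercise riding on machinery already established, there is no substantive obstacle. The one point that deserves care — and which I would make explicit — is confirming that the PMP in the form of Theorem~\ref{adapt0} genuinely transfers to the system+environment problem: that the terminal co-state is $p(t_f)=-C\otimes\1_E$ rather than $-C$ (this follows because the cost~\eqref{eq:cost_open} is the Hilbert--Schmidt pairing of $\rho(t_f)$ with the fixed observable $C\otimes\1_E$), and that $\mathcal{L}$ remains free of explicit time dependence despite the nontrivial environment dynamics generated by $H_E$ (which holds because $H_E$ and $H_I$ are constant in time). Once these are checked, the proof of Part~(i) of Theorem~\ref{Bradygen} applies word for word.
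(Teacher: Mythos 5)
Your proposal is correct and follows essentially the same route as the paper: the paper likewise applies Theorem~\ref{adapt0} to the joint system--environment Liouvillian $\mathcal{L}=\mathcal{K}_{H_{\mathrm{tot}}}$ with terminal co-state $p(t_f)=-C\otimes\1_E$, observes that the control enters $\mathbb{H}$ only through the term $s(t)\langle p,\mathcal{K}_{(B-C)\otimes\1_E}\rho\rangle$, and then invokes the proof of Part~(i) of Theorem~\ref{Bradygen} without change. The explicit checks you flag (Hermiticity preservation, absence of explicit time dependence, the form of the terminal condition) are precisely the points the paper notes in the paragraph preceding the corollary.
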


Let us consider the generalization of Part (ii) of Theorem~\ref{Bradygen}, which
addresses the characterization of the control function at the beginning
and at the end. We first consider the final arc. We have the following:
\begin{thm}
\label{thm:open_end} 
Assume $s^*\in[0,1]$ is the optimal control in an
interval $[0,t_f]$ minimizing the cost~\eqref{eq:cost_open} for
Eq.~\eqref{eq:open}. 
Assume furthermore that $\left[H_{I},C\otimes\1_{E}\right]=0$
and that the final time constraint is active, i.e., $\lambda>0$. Then 
$s^*(t)=1$ for $t\in(t_f-\epsilon,t_f]$ for some $\epsilon>0$.
\end{thm}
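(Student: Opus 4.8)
The plan is to mimic the proof of Part (ii) of Theorem~\ref{Bradygen} as closely as possible, using the terminal condition~\eqref{eq:pt_H_open} and the extra hypothesis $[H_I,C\otimes\1_E]=0$ to kill all the control-independent terms of $\mathbb{H}$ at $t=t_f$. First I would evaluate the PMP control Hamiltonian~\eqref{eq:Ham_open} at the final time. Using~\eqref{eq:pt_H_open}, $p_f=-C\otimes\1_E$, together with the identity $\langle p,\mathcal{K}_X\rho\rangle=\langle \mathcal{K}_X^\dagger p,\rho\rangle=-\langle\mathcal{K}_X p,\rho\rangle$ (from~\eqref{eq:KXanti}), each term $\langle p_f,\mathcal{K}_X\rho_f\rangle$ becomes $-\langle\mathcal{K}_X(C\otimes\1_E),\rho_f\rangle=-\langle i[C\otimes\1_E,X],\rho_f\rangle$ up to the sign/factor conventions. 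So the term generated by $H_S\otimes\1_E$ involves $[C\otimes\1_E,H_S\otimes\1_E]=[C,H_S]\otimes\1_E=s(t_f)[C,B]\otimes\1_E$ (using $H_S=C+s(B-C)$); the term generated by $\1_S\otimes H_E$ vanishes identically because $[C\otimes\1_E,\1_S\otimes H_E]=0$; and the term generated by $H_I$ vanishes by the hypothesis $[H_I,C\otimes\1_E]=0$.

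Consequently, at $t=t_f$ the only surviving contribution to $\mathbb{H}$ is the control-dependent one, and we get $\mathbb{H}(t_f)=s(t_f)\langle p_f,\mathcal{K}_{(B-C)\otimes\1_E}\rho_f\rangle=\lambda>0$ (the $C$-part drops for the same reason the $H_S$-part collapsed to its $B$-piece). Since $s(t_f)\in[0,1]$ and $\lambda>0$, this forces $\langle p_f,\mathcal{K}_{(B-C)\otimes\1_E}\rho_f\rangle>0$. By the continuity remark after Theorem~\ref{adapt0} (the quantity $\langle p(t),\mathcal{K}_{(B-C)\otimes\1_E}\rho(t)\rangle$ is continuous in $t$ because $\mathcal{K}_{(B-C)\otimes\1_E}$ is time-independent and $p,\rho$ are continuous), there is an $\epsilon>0$ with $\langle p(t),\mathcal{K}_{(B-C)\otimes\1_E}\rho(t)\rangle>0$ on $(t_f-\epsilon,t_f]$, and on that interval Corollary~\ref{control_general_open} (equivalently Part~(i) of Theorem~\ref{Bradygen}, which carries over verbatim) gives $s^*(t)=1$.

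The one point requiring a little care — and the main (minor) obstacle — is bookkeeping the adjoint and sign conventions so that the vanishing terms really do vanish: one must check that $\langle p_f,\mathcal{K}_{C\otimes\1_E}\rho_f\rangle=0$, which follows from $\mathcal{K}_{C\otimes\1_E}(C\otimes\1_E)=-i[C\otimes\1_E,C\otimes\1_E]=0$, exactly as in the closed-system proof where $\langle p_f,\mathcal{K}_C\rho_f\rangle=\langle\mathcal{K}_C C,\rho_f\rangle=0$ was used. Note that, in contrast to the initial-arc part of Theorem~\ref{Bradygen}, no assumption on $\rho_0$ (such as $\mathcal{K}_B\rho_0=0$) enters here, since we only analyze the endpoint $t_f$; the hypothesis $[H_I,C\otimes\1_E]=0$ plays the role that the terminal condition played in the closed case, ensuring the environment-coupling term does not spoil the argument. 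I would therefore present the proof as a short computation at $t=t_f$ followed by the continuity-plus-Corollary argument, essentially a one-paragraph adaptation of the proof of Theorem~\ref{Bradygen}(ii).
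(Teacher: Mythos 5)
Your proposal is correct and follows essentially the same route as the paper's proof: evaluate $\mathbb{H}$ at $t_f$ using $p_f=-C\otimes\1_E$, show the $C$-, $H_I$-, and $H_E$-generated terms all vanish (the latter two via $[\1_S\otimes H_E,C\otimes\1_E]=0$ and the hypothesis $[H_I,C\otimes\1_E]=0$), deduce $\langle p_f,\mathcal{K}_{B\otimes\1_E}\rho_f\rangle>0$ from $\lambda>0$, and conclude by continuity plus Corollary~\ref{control_general_open}. The only cosmetic difference is that you carry the combination $\mathcal{K}_{(B-C)\otimes\1_E}$ throughout, while the paper first isolates $\mathcal{K}_{B\otimes\1_E}$; since $\langle p_f,\mathcal{K}_{C\otimes\1_E}\rho_f\rangle=0$, the two are equivalent at $t=t_f$.
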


Note that the assumption $\left[H_{I},C\otimes\1_{E}\right]=0$ implies that the choice of control $s\equiv0$ leaves the cost $J=\langle C\otimes\1,\rho\rangle$ unchanged since in this case $H_\mathrm{tot}$ commutes with $C\otimes\1_E$. 

\begin{proof}
Let us compute the PMP Hamiltonian at $t=t_f$. 
Note first that it follows from Eq.~\eqref{eq:pt_H_open} that 
\begin{align}
\langle p_f,\mathcal{K}_{C\otimes\1_E}\rho_f\rangle &=\langle\mathcal{K}_{C\otimes\1_E}^{\dag}p_f,\rho_f\rangle\notag\\
&=\langle\mathcal{K}_{C\otimes\1_E}C\otimes\1_E,\rho_f\rangle =0\ .
\label{eq:opencalc1}
\end{align}
The other two terms of the PMP control Hamiltonian Eq.~\eqref{eq:Ham_open} also vanish at $t=t_f$: using the same calculation as in Eq.~\eqref{eq:opencalc1} the second term vanishes because
of the assumption $\left[H_{I},C\otimes\1_E\right]=0$, and the third term
does as well because $[\1_S\otimes H_{E},C\otimes\1_E]=0$.
So we obtain $\mathbb{H}=s(t_f)\langle p_f,\mathcal{K}_{B\otimes\1_E}\rho_f\rangle = \lambda >0$. This implies that $\langle p_f,\mathcal{K}_{B\otimes\1_E}\rho_f\rangle >0$ and by continuity there must exist an interval $(t_f-\epsilon,t_f]$ for some $\epsilon>0$ such that 
$\langle p(t),\mathcal{K}_{(B-C)\otimes\1_E}\rho(t)\rangle>0$ for $t\in(t_f-\epsilon,t_f]$. Finally we must have $s^\ast\equiv1$ in this interval by Corollary \ref{control_general_open}. 
\end{proof}

Regarding the arc at the beginning we have instead:
\begin{thm}
\label{thm:open_beginning} 
Assume $s\in[0,1]$ is the optimal control
in an interval $[0,t_f]$ minimizing the cost Eq.~\eqref{eq:cost_open}
for Eq.~\eqref{eq:open}. Assume that $\left[\rho_{E},{H}_{E}\right]=0$
and that $\left[H_{I},\rho_{0}\otimes\rho_{E}\right]=0$. Assume furthermore
that $\left[B,\rho_{0}\right]=0$ and that the final time constraint
is active $\lambda>0$. Then the control satisfies $s^\ast(t)=0$ for $t\in[0,\epsilon)$
for some $\epsilon>0$. 
\end{thm}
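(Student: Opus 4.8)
The plan is to mirror the proof of Theorem~\ref{thm:open_end} and the initial-time part of Theorem~\ref{Bradygen}(ii), but now to evaluate the PMP control Hamiltonian~\eqref{eq:Ham_open} at $t=0$, where $\rho(0)=\rho_0\otimes\rho_E$ is known explicitly. Writing $\mathcal{K}_X\rho=-i[X,\rho]$, so that $\langle p,\mathcal{K}_X\rho\rangle=-i\Tr(p[X,\rho])$, each of the three terms of~\eqref{eq:Ham_open} at $t=0$ is governed by a commutator with $\rho_0\otimes\rho_E$. First I would dispatch the free-environment term: $[\1_S\otimes H_E,\rho_0\otimes\rho_E]=\rho_0\otimes[H_E,\rho_E]=0$ by the hypothesis $[\rho_E,H_E]=0$, so $\langle p_0,\mathcal{K}_{\1_S\otimes H_E}\rho_0\otimes\rho_E\rangle=0$. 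Next the interaction term vanishes immediately from $[H_I,\rho_0\otimes\rho_E]=0$. Finally, for the system term, $H_S(0)=(1-s(0))C+s(0)B$ gives $[H_S(0)\otimes\1_E,\rho_0\otimes\rho_E]=\big((1-s(0))[C,\rho_0]+s(0)[B,\rho_0]\big)\otimes\rho_E=(1-s(0))[C,\rho_0]\otimes\rho_E$ upon invoking $[B,\rho_0]=0$.

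Combining these, $\mathbb{H}(0)=(1-s(0))\,\langle p_0,\mathcal{K}_{C\otimes\1_E}\,\rho_0\otimes\rho_E\rangle$, and by~\eqref{forlanda} this equals $\lambda>0$. Hence $1-s(0)\neq0$ and $\langle p_0,\mathcal{K}_{C\otimes\1_E}\,\rho_0\otimes\rho_E\rangle>0$. Using $[B,\rho_0]=0$ once more, $\langle p_0,\mathcal{K}_{B\otimes\1_E}\,\rho_0\otimes\rho_E\rangle=-i\Tr(p_0[B,\rho_0]\otimes\rho_E)=0$, so the switching function of Corollary~\ref{control_general_open} satisfies $\langle p_0,\mathcal{K}_{(B-C)\otimes\1_E}\,\rho_0\otimes\rho_E\rangle=-\langle p_0,\mathcal{K}_{C\otimes\1_E}\,\rho_0\otimes\rho_E\rangle<0$. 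Since $p(t)$ and $\rho(t)$ solve differential equations whose generators on the superoperator side are time-independent, $t\mapsto\langle p(t),\mathcal{K}_{(B-C)\otimes\1_E}\rho(t)\rangle$ is continuous; it therefore stays strictly negative on some $[0,\epsilon')$, and Corollary~\ref{control_general_open} forces $s^\ast\equiv0$ there, which is the claim.

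There is no genuinely hard step: the content is just the bookkeeping of which of the three hypotheses annihilates which term of~\eqref{eq:Ham_open} at $t=0$, exactly paralleling Theorem~\ref{thm:open_end} at $t=t_f$. The point worth flagging is that all three hypotheses are used and are essentially minimal: $[\rho_E,H_E]=0$ and $[H_I,\rho_0\otimes\rho_E]=0$ ensure that the joint initial state is instantaneously stationary under everything but the controlled system Hamiltonian, while $[B,\rho_0]=0$ is what collapses the system term to a multiple of $[C,\rho_0]$, thereby simultaneously forcing $s(0)\neq1$ and fixing the sign of the switching function at $t=0$ — the same mechanism as in the closed-system case of Theorem~\ref{Bradygen}(ii). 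As there, the argument yields only that $s^\ast\equiv0$ on \emph{some} initial interval; its length, and hence the location of the first switch, is not controlled by this reasoning.
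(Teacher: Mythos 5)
Your proposal is correct and follows essentially the same route as the paper's (deliberately abbreviated) proof: evaluate $\mathbb{H}(0)$, use the three commutation hypotheses to kill the environment, interaction, and $B$ contributions, deduce $\langle p_0,\mathcal{K}_{C\otimes\1_E}\rho(0)\rangle>0$ from $\lambda>0$, and conclude by continuity and Corollary~\ref{control_general_open}. Your write-up simply spells out the bookkeeping the paper leaves implicit, including the role of $[\rho_E,H_E]=0$ in eliminating the $\1_S\otimes H_E$ term.
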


Note that the assumption $\left[H_{I},\rho_{0}\otimes\rho_{E}\right]=0$
implies that the interaction alone does not modify the initial state
of the system.
\begin{proof}
We abbreviate the proof since it is very similar to the ones we presented above in more detail. 
Using the assumption $\left[H_{I},\rho_{0}\otimes\rho_{E}\right]=0$ and $\left[B,\rho_{0}\right]=0$, evaluating $\mathbb{H}(t=0)$
we obtain $\mathbb{H}(0)=\left(1-s(0)\right)\langle p(0),\mathcal{K}_{C\otimes\1_E}\rho(0)\rangle=\lambda>0$. This implies that
$\langle p(0),\mathcal{K}_{C\otimes\1_E}\rho(0)\rangle>0$ or equivalently that  $\langle p(0),\mathcal{K}_{-C\otimes\1_E}\rho(0)\rangle<0$. By continuity this in turn implies that there exist an interval $[0,\epsilon)$ for some $\epsilon>0$ such that $\langle p(t),\mathcal{K}_{(B-C)\otimes\1_E}\rho(t)\rangle
<0$ for $t\in [0,\epsilon)$. Finally we must have $s^\ast\equiv0$ in this interval by Corollary \ref{control_general_open}.
\end{proof}

We comment on the implications of the additional assumptions used in these theorems in Sec.~\ref{sec:conc}.

\subsection{Optimal control for quantum master equation dynamics}
\label{QME}

The treatment of the open system case in the previous subsection did not involve any approximations. On the other hand, we tacitly
assumed that the environment is \emph{finite dimensional}. This was helpful since all the results on optimal control which we have elaborated
upon in Sec.~\ref{OCP} and used so far, are classically stated
and proved for finite dimensional systems. Extending such results,
in particular concerning the PMP and the topology and continuity
of the reachable sets for infinite dimensional systems, is possible
and is a current area of research in control theory (see, e.g.,
Refs.~\cite{fiacca_existence_1998,kipka_pontryagin_2015}), although the
results in this area become considerably more technical. An alternative
we discuss in this subsection is to replace the Liouville-von Neumann
equation~\eqref{eq:open} with an approximate quantum master equation. This can be viewed as an investigation of the result of Sec.~\ref{OCLVN} when the environment dimension is sent to infinity. 

Without loss of generality we write $H_{I}=\sum_{\alpha}S_{\alpha}\otimes E_{\alpha}$, where $S_{\alpha}=S_{\alpha}^\dag$ and $E_{\alpha}=E_{\alpha}^\dag$ $\forall\alpha$. The goal is now to find a master equation for the dynamics of the system density matrix
in the case of a time-dependent system Hamiltonian. After the Born approximation and 
tracing out the environment, one arrives at a time dependent Redfield master
equation (see, e.g., the Schr\"odinger picture Redfield master equation (SPRME) derived in Ref.~\cite{campos_venuti_error_2018}). From this point there are multiple ways to proceed, e.g., by introducing an additional adiabatic approximation or an additional Markovian approximation, or both. These different paths, and exactly how they are taken, lead to a plethora of different master equations~\cite{childs_robustness_2001,PhysRevA.73.052311,albash_quantum_2012,campos_venuti_error_2018,Mozgunov:2019aa,Yamaguchi:2017vu,Dann:2018aa,nathan2020universal,Davidovic2020completelypositive,winczewski2021bypassing}. We next focus on two representative cases of master equations derived from first principles.

\subsubsection{Adiabatic Redfield Master Equation}

The Adiabatic Redfield Master Equation (ARME) is derived in Ref.~\cite{campos_venuti_error_2018}. It results from assuming that $t_f\gg\tau_B$, where $\tau_{B}$ is the environment time scale, and the adiabatic approximation $\mc{T}\exp\left[-i\int_{t-r}^t H_S(t')dt'\right] \approx e^{-i r H_S(t)}$, dropping a correction of $O\left((r/t_f)^2\right)$.
The ARME has the form of Eq.~\eqref{eq:general_ME} with a time-dependent Redfield
generator $\mathcal{L}$ given by
\bes
\label{eq:Redfield}
\begin{align}
\mathcal{L} & =\mathcal{K}_{H_S}+\mathcal{D}\label{eq:Redfield-a}\\
\mathcal{D}\rho & =\sum_{\alpha\beta}\int_{0}^{t_{\max}}dr\ G_{\alpha\beta}(r)\,\left[S_{\beta}(-r)\rho,S_{\alpha}\right]+\hc, 
\label{eq:Redfield-b} 
\end{align}
\ees
where the system Hamiltonian is as in Eq.~\eqref{eq:H_S}.
$G_{\alpha\beta}(t)$ is the environment correlation function
\begin{equation}
G_{\alpha\beta}(t)=\langle E_{\alpha}(t)E_{\beta}(0)\rangle = \Tr[E_{\alpha}(t)E_{\beta}(0)\rho_E]\ ,
\label{eq:environment_corr}
\end{equation}
 where $\left\langle X\right\rangle $ denotes the environmental thermal average
of $X$. When $G_{\alpha\beta}(r)$ decays exponentially, the relative error of the resulting dynamics due to the adiabatic approximation above is $O\left((\tau_{B}/t_f)^{2}\right)$.
Finally, 
\begin{equation}
S_{\beta}(-r)=e^{-irH_S(s)}S_{\beta}e^{irH_{S}(s)}\ .
\label{eq:A_r}
\end{equation}
The parameter $t_{\max}$ can either be set to $t_f$ or
infinity on account of the fact that the environment correlation function
decays very rapidly. The ARME is not in Gorini-Kossakowski-Sudarshan-Lindblad (GKSL) form~\cite{Gorini:1976uq,Lindblad:76,Chruscinski:2017ab}, hence does not generate a completely positive map. However, it generates non-Markovian dynamics, hence has a wider range of applicability than Markovian master equations, within its range of applicability~\cite{Mozgunov:2019aa}.

Crucially, the generator in Eq.~\eqref{eq:Redfield} with Eqs.~\eqref{eq:H_S}, \eqref{eq:environment_corr} and~\eqref{eq:A_r}
\emph{depends on time only through the control function $s(t)$}. This implies
that the PMP control Hamiltonian $\mathbb{H}=\langle p,\mathcal{L}\rho\rangle$
is constant and hence the PMP in the form of Theorem~\ref{adapt0} is directly applicable.\footnote{It can be shown that the conditions of Filippov's theorem (see Ref.~\cite[Th.~2.1]{fleming_deterministic_1975}) for the existence of the
optimal control solution are satisfied also in this case.} However, the control now enters nonlinearly in $\mathcal{D}$,
in particular in an exponential through Eq.~\eqref{eq:A_r}. As a
consequence it is not possible to derive the form of the control
on the nonsingular arcs, or even to determine simple equations for the
appearance of singular arcs. One can ask, however, what remains of
the results of the previous subsection. We do not have an analog of Theorem~\ref{thm:open_beginning}  for the initial arc. However, if we again make the assumption of Theorem~\ref{thm:open_end} that $\left[S_{\alpha},C\right]=0$ $\forall\alpha$, then the analog of this theorem for the final arc holds, even when the environment is infinite-dimensional, and under the approximations used to derive Eq.~\eqref{eq:Redfield}. However, instead of an arc, we obtain a bang only at a point:

\begin{thm}
\label{thm:Redfield}
Assume $s^*\in[0,1]$ is the optimal control in an interval $[0,t_f]$
minimizing the cost Eq.~\eqref{nuovocosto} for Eq.~\eqref{eq:general_ME}
with $\mathcal{L}$ given by Eq.~\eqref{eq:Redfield}. Assume further
that $\left[S_{\alpha},C\right]=0$ $\forall\alpha$ and that the
final time constraint is active ($\lambda>0$). Then the optimal control satisfies
$s^*(t_f)=1$. 
\end{thm}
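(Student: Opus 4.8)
The plan is to evaluate the PMP control Hamiltonian at the single point $t=t_f$, exploiting the terminal condition $p(t_f)=-C$ of Eq.~\eqref{3plusB}, and to show that as a function of the control value $v\in[0,1]$ it collapses to a \emph{linear} function of $v$ with strictly positive slope, so that the maximum principle~\eqref{MaxCon} forces $s^*(t_f)=1$. The point of working at $t_f$ is that there the co-state is known explicitly, which neutralizes the nonlinear dependence of $\mathcal{D}$ on the control; in the interior of $\mathcal{I}$ this trick is unavailable, which is precisely why one obtains only the endpoint value and not an arc.

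First I would write $\mathbb{H}(p_f,\rho_f,v)=\langle p_f,\mathcal{K}_{H_S(v)}\rho_f\rangle+\langle p_f,\mathcal{D}(v)\rho_f\rangle$, where $H_S(v)=C+v(B-C)$ and $\mathcal{D}(v)$ is the dissipator~\eqref{eq:Redfield-b} built from $S_\beta(-r)=e^{-irH_S(v)}S_\beta e^{irH_S(v)}$. For the unitary piece, the same computation as in Eq.~\eqref{eq:opencalc1} and in the proof of Theorem~\ref{Bradygen}(ii) gives $\langle p_f,\mathcal{K}_C\rho_f\rangle=\langle\mathcal{K}_C C,\rho_f\rangle=0$ using $\mathcal{K}_C^\dagger=-\mathcal{K}_C$ and $[C,C]=0$, whence $\langle p_f,\mathcal{K}_{H_S(v)}\rho_f\rangle=v\langle p_f,\mathcal{K}_B\rho_f\rangle$. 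The key step is the dissipator term: writing $\langle p_f,\mathcal{D}(v)\rho_f\rangle=-\Tr[C\,\mathcal{D}(v)\rho_f]$, expanding the commutators in Eq.~\eqref{eq:Redfield-b}, and moving operators under the trace by cyclicity, one finds that each surviving term carries a factor $[S_\alpha,C]$ (from the first term) or $[C,S_\alpha]$ (from the h.c.\ term), both of which vanish by the hypothesis $[S_\alpha,C]=0$ $\forall\alpha$. This cancellation is insensitive to $v$, since $S_\beta(-r)$ remains Hermitian for every $v$ and is never forced against $C$ in a way that would prevent the commutator from forming. Hence $\langle p_f,\mathcal{D}(v)\rho_f\rangle=0$ for all $v\in[0,1]$, so $\mathbb{H}(p_f,\rho_f,v)=v\,\langle p_f,\mathcal{K}_B\rho_f\rangle$.

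To conclude: by Eq.~\eqref{forlanda} together with the active-constraint assumption, $\mathbb{H}(p_f,\rho_f,s^*(t_f))=\lambda>0$. If $\langle p_f,\mathcal{K}_B\rho_f\rangle\le0$ then $v\mapsto v\,\langle p_f,\mathcal{K}_B\rho_f\rangle$ would be $\le0$ throughout $[0,1]$, contradicting the value $\lambda>0$; therefore $\langle p_f,\mathcal{K}_B\rho_f\rangle>0$. The maximum of this linear function over $[0,1]$ is then attained only at $v=1$, and the maximum principle~\eqref{MaxCon} evaluated at $t_f$ yields $s^*(t_f)=1$.

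I expect the only mildly delicate point to be the bookkeeping in the dissipator term — carrying along the h.c.\ contribution and checking that cyclicity genuinely exposes a commutator with $C$ in \emph{every} term — but there is no real obstacle: this is the same mechanism ($p(t_f)=-C$) that powered Theorems~\ref{Bradygen}(ii) and~\ref{thm:open_end}, now surviving the nonlinear appearance of the control inside $S_\beta(-r)$ because the offending term is killed by $[S_\alpha,C]=0$ before the control ever matters. The reason one cannot upgrade $s^*(t_f)=1$ to a bang \emph{arc}, as in Theorem~\ref{thm:open_end}, is twofold: the cancellation relies on $p(t)=-C$, valid only at $t=t_f$, so for $t<t_f$ the dissipator contributes a generally $v$-nonlinear term to $\mathbb{H}(p(t),\rho(t),\cdot)$; and because the control enters nonlinearly there is no bang/singular dichotomy à la Theorem~\ref{Bradygen}(i) to propagate the endpoint conclusion to a neighborhood.
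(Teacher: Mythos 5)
Your proposal is correct and follows essentially the same route as the paper: evaluate $\mathbb{H}$ at $t_f$, use $p_f=-C$ together with $[S_\alpha,C]=0$ to kill the dissipator contribution (the paper does this by computing $\mathcal{D}^\dagger C=0$ from Eq.~\eqref{eq:Ddag-Red}, which is the same calculation as your cyclicity-of-trace argument), and then deduce $\langle p_f,\mathcal{K}_B\rho_f\rangle>0$ from $\lambda>0$ and invoke the maximum principle. Your explicit remark that the cancellation holds for \emph{every} $v\in[0,1]$, so that $\mathbb{H}(p_f,\rho_f,\cdot)$ is genuinely linear in the control at $t_f$, is a point the paper leaves implicit but is needed to apply Eq.~\eqref{MaxCon}; it is a welcome clarification rather than a deviation.
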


\begin{proof}
It is convenient to write the Redfield dissipator as 
\bes
\begin{align}
\mathcal{D}\rho & =\sum_{\alpha\beta}\left(\big[W_{\alpha\beta}\rho,S_{\alpha}\big]+\big[S_{\alpha},\rho W_{\alpha\beta}^{\dagger}\big]\right)\label{eq:Redfield_dissi}\\
W_{\alpha\beta}(t) & =\int_{0}^{t_{\max}}dr\ G_{\alpha\beta}(r)\,S_{\beta}(-r)\ .
\end{align}
\ees
Using Eq.~\eqref{eq:Redfield_dissi} one obtains, for the adjoint
of $\mathcal{D}$:
\beq
\mathcal{D}^{\dag}X=\sum_{\alpha\beta}\left(W_{\alpha\beta}^{\dagger}\left[X,S_{\alpha}\right]+\left[S_{\alpha},X\right]W_{\alpha\beta}\right)
\label{eq:Ddag-Red}
\eeq
(see Appendix~\ref{app:calcs}).
 From the above expression and the assumption $\left[S_{\alpha},C\right]=0,\,\forall\alpha$
we obtain $\mathcal{D}^{\dag}C=0$. Using $p_f=-C$ we have $\langle p_f,\mathcal{D} \rho_f\rangle=-\langle\mathcal{D}^{\dag}C,\rho_f\rangle=0$.
Evaluating the PMP control Hamiltonian at the final time we obtain $\mathbb{H}(t_f)=s(t_f)\langle p_f,\mathcal{K}_{B}\rho_f\rangle = \lambda>0$. This implies that $\langle p_f,\mathcal{K}_{B}\rho_f\rangle>0$, and so $s(t_f)=1$ from the maximum principle. 
\end{proof}

Since, as argued in Sec.~\ref{shortening}, the size of the bang intervals is generically expected to shrink when the total system size increases, this result can be seen as a generalization of Theorem \ref{thm:open_end} when the environment dimension is sent to infinity and the bang interval at the end shrinks to a point. 

The implications of the additional assumptions used in Theorem~\ref{thm:Redfield} are discussed in Sec.~\ref{sec:conc}.

\subsubsection{Markovian, completely positive master equations}
The most significant drawback of the ARME is the violation of complete positivity, which means that the density matrix can develop unphysical, negative eigenvalues. Hence we also consider Markovian, completely positive master equations. There are a variety of such master equations derived from first principles under different assumptions. However, in most cases the generator $\mc{L}$ is explicitly time-dependent (e.g., the coarse-grained master equation (CGME)~\cite[Eq.~(22)]{Mozgunov:2019aa}, the master equation of Ref.~\cite[Eq.~(21)]{Yamaguchi:2017vu}, the non-adiabatic master equation (NAME)~\cite[Eq.~(16)]{Dann:2018aa}, and the universal Lindblad equation (ULE)~\cite[Eq.~(27)]{nathan2020universal}) and hence we cannot apply Theorem~\ref{adapt0}. 

In this subsection we give an example of a Markovian master equation derived from first principles where, like in the ARME case, the generator $\mc{L}$ depends on time only through the schedule $s(t)$. In this case the PMP can be applied in the simplified form described in Theorem~\ref{adapt0}.

Consider the ``geometric-arithmetic master equation'' (GAME)~\cite[Eq.~(46)]{Davidovic2020completelypositive}, which is claimed there to have a higher degree of accuracy than all the previous Markovian master equations. In the adiabatic limit it has the Schr\"odinger picture form
\bes
\begin{align}
\mc{L} &= \mc{K}_{H_S} + \mc{D}\\
\mc{D}\r &= \sum_j \left( [L_\a(s)\r,L_\a^\dag(s)] + [L_\a(s),\r L_\a^\dag(s)] \right)\ ,
\label{eq:Drho-GAME}
\end{align}
\ees
where $L_\a(s) = S_\a \circ \sqrt{\gamma(s)}$, the circle denotes the Hadamard (element-wise) product, $\g$ is the spectral density matrix [Fourier transform of the environment correlation function~\eqref{eq:environment_corr}] whose elements $\g_{nm}:=\g(\o_{nm}(s))$ depend on the instantaneous Bohr frequencies $\o_{nm}(s)=E_n(s)-E_m(s)$, where $H_S(s)|n(s)\>=E_n(s)|n(s)\>$, and the dependence on time is only through the schedule $s(t)$.\footnote{In writing these expressions we have adapted the results of Ref.~\cite{Davidovic2020completelypositive} to the adiabatic limit by using the instantaneous eigenbasis of $H_S$, and also performed the ``$\int^t \to \int^\infty$ approximation'' (otherwise $\gamma$ would have a $t$ dependence, which would prevent us from being able to apply Theorem~\ref{adapt0}); see Ref.~\cite{Davidovic2020completelypositive} for complete details.} The adjoint dissipator is now:
\begin{align}
\mathcal{D}^{\dag}X&=\sum_{\a}\big((S_\a \circ \sqrt{\gamma})^\dag\left[X,S_\a \circ \sqrt{\gamma}\right]\notag\\
&\qquad +[(S_\a \circ \sqrt{\gamma})^\dag,X]S_\a \circ \sqrt{\gamma}\big)\ .
\label{eq:Ddag-GAME}
\end{align}
Unfortunately, since $[S_\a,C]=0 \not\Rightarrow [S_\a \circ \sqrt{\gamma},C]=0$, this means that even if $[S_\a,C]=0$, we do \emph{not} obtain $\mc{D}^\dag C=0$ as in the ARME case, and hence the proof of Theorem~\ref{thm:Redfield} does not carry through.\footnote{Of course $[S_\a,C]=0\ \Rightarrow [S_\a \circ \sqrt{\gamma},C]=0$ when $\g_{nm}=c$ $\forall n,m$, but this is a highly nongeneric scenario.}  

While these arguments are not a proof that in general Markovian dynamics 
do not admit $s(t_f)=1$ as an optimal control solution, we conjecture that in fact, they do not. It thus appears that the ``counterintuitive'' appearance of the driver Hamiltonian at the end of the control interval is not a feature of the optimal schedule in the Markovian limit of open quantum systems. We revisit this point in Sec.~\ref{sec:conc}.

\section{Switching operator and analysis of the optimal control}
\label{Switchings}

In order to study the qualitative behavior of the optimal control
law, in particular its switching properties and the existence and
nature of the singular arcs, it is convenient to introduce one more
operator, besides the state $\rho$ and the co-state $p$, which we
call the \emph{switching operator}. The switching operator determines
the behavior of the optimal control, i.e., the points where there
is a switch between $s=0$ and $s=1$, and where there is
a singular arc. For clarity we focus on the closed system case of
Sec.~\ref{OCP} but our definitions and treatment naturally extend
with a change of notation to the open system case of Sec.~\ref{OCLVN}.

\subsection{Switching equation}
\label{sec:switch-eq}

The switching operator $S$ is the Hermitian operator defined as 
\begin{equation}
S:=i\left[p,\rho\right]\ .
\label{sw}
\end{equation}
In the closed system case the Liouvillian has the form $\mathcal{L}=\mathcal{K}_{H}$,
with the system Hamiltonian of Eq.~\eqref{eq:Ham_sys2}. One has the
following property:
\begin{equation}
\mathcal{K}_{H}\left(\left[X,Y\right]\right)=\left[\mathcal{K}_{H}(X),Y\right]+\left[X,\mathcal{K}_{H}(Y)\right]\label{eq:autom_property}
\end{equation}
 valid for any operators $X,Y,H$. Now, differentiating Eq.~\eqref{sw},
we obtain
\bes
\label{eq:S_ODE}
\begin{align}
\dot{S} & =i\left[\dot{p},\rho\right]+i\left[p,\dot{\rho}\right]\\
 & =i\left[\mathcal{K}_{H}p,\rho\right]+i\left[p,\mathcal{K}_{H}\rho\right]\\
 & =i\mathcal{K}_{H}\left(\left[p,\rho\right]\right)=\mathcal{K}_{H}S\ ,
\end{align}
\ees
 where in the third equality we used Eq.~\eqref{eq:autom_property}.
Thus, $S$ satisfies the same equation as $\rho$ and $p$. To understand
why $S$ determines the optimal control switching times, let us define
\bes
 \label{eq:x_C}
 \begin{align}
x_{C} & :=\langle p,\mathcal{K}_{C}\rho\rangle =-i\Tr\left(p\left[C,\rho\right]\right) \\
 & =-i\Tr\left(C\left[\rho,p\right]\right) =\Tr\left(CS\right)=\langle C,S\rangle\ ,
 \label{eq:x_C-b}
 \end{align}
 \ees
 and similarly 
\begin{equation}
x_{B}:=\langle p,\mathcal{K}_{B}\rho\rangle=\langle B,S\rangle\ ,
\label{eq:x_B}
\end{equation}
so that the PMP control Hamiltonian Eq.~\eqref{HamilP-1} can be written in a form closely resembling the Hamiltonian~\eqref{eq:Ham_sys2-b}:
\begin{equation}
\mathbb{H}=x_{C}+s\left(x_{B}-x_{C}\right)\ .
\label{eq:new_controlH}
\end{equation}
The quantity $x_{B}-x_{C}=\langle B-C,S\rangle$, that
is, the orthogonal component of $S$ along $B-C$, regulates the switches
of the candidate optimal control. By the same PMP argument we have used repeatedly in our proofs, when $x_{B}-x_{C}<0$ we have $s\equiv0$, and 
when $x_{B}-x_{C}>0$ we have $s\equiv1$. The switch occurs when $x_{B}-x_{C}=0$,
while a singular arc occurs when $x_{B}-x_{C}\equiv0$ for an interval
of positive measure. 

The initial condition $S_{0}$ of the switching
operator determines the optimal control candidate $s$ uniquely. This initial condition is not completely arbitrary. In particular,
under the assumptions of Theorem~\ref{Bradygen} the following holds.
Since $s(0)=0$, we have $x_{C}(0)=\langle C,S_{0}\rangle=\lambda$.
Furthermore, since $[B,\rho_{0}]=0$, we have 
\begin{align}
x_{B}(0)&=\langle B,S_{0}\rangle=i\Tr(B[p_0,\rho_{0}])\notag\\
&=i\Tr\left(p_0[\rho_{0},B]\right)=0\ .
\label{eq:xB0}
\end{align}
At the final time $t_f$, since $s(t_f)=1$, we have from Eq.~\eqref{eq:new_controlH}
$x_{B}(t_f)=\lambda$, while using Eq.~\eqref{3plusB} 
we have 
\begin{align}
x_{C}(t_f)&= \<C,S_f\> = i\Tr\left(C[p_f,\r_f]\right) \notag \\
&= i\Tr\left(\r_f[C,p_f]\right)=0\ .
\label{eq:xCf}
\end{align}
Thus, at $t=0$ we have $x_B(0)=0$ and in the initial arc $s\equiv0$ and $x_C\equiv \lambda$ by Eq.~\eqref{eq:new_controlH}. The next arc can be either nonsingular with $s\equiv1$ ($x_B>x_C$) or a singular arc with $x_B \equiv x_C$. Either way, at the switching point we must have $x_B=\lambda$, hence we reach the point $(x_{C},x_{B})=(\lambda,\lambda)$ at the end of the first arc. 
When $(x_{C},x_{B})=(\lambda,\lambda)$
there is either a switch to an arc with $s\equiv1$, or a return to
$s\equiv0$, or a singular arc where $(x_{C},x_{B})\equiv(\lambda,\lambda)$.
On this arc $s$ is unspecified, but nonetheless certain equations need to be satisfied and they can be used to obtain information on the dynamics on such singular arcs (see Appendix \ref{app:SA}).
Note that every switching event, whether from a bang arc to a singular arc or \textit{v.v.}., or from a bang arc to another bang arc, happens at $(x_{C},x_{B})=(\lambda,\lambda)$.
When $s\equiv1$, from Eq.~\eqref{eq:new_controlH} 
we have that $x_{B}$ is constant, while $x_{C}$ is allowed to change.
Therefore the optimal control can be described schematically as in
Fig.~\ref{switchfig} in the $(x_{C},x_{B})$ plane. 

\begin{figure}[h]
\hspace{-1cm}
\includegraphics[width=\columnwidth]{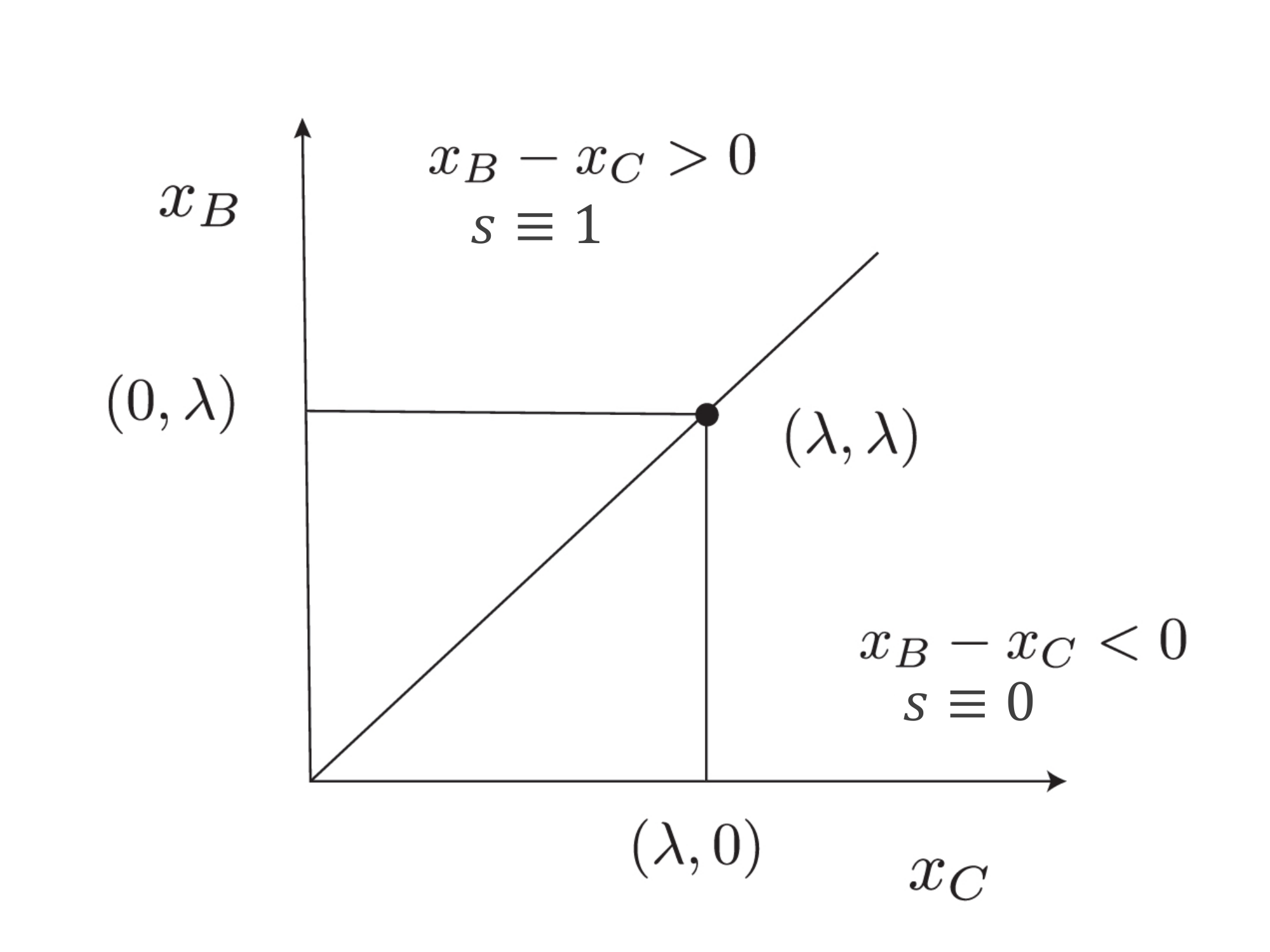} 
\caption{Switching diagram for optimal control candidates. Starting from
the point $(\lambda,0)$ the dynamics stay on the line $x_C=\lambda$. In principle even negative values of $x_B$ can be attained, but eventually the point 
$(\lambda,\lambda)$ must be reached, followed by an alternation of vertical and
horizontal lines (where in principle one can also extend to negative values of $x_C$) through $(\lambda,\lambda)$ with time
intervals where the dynamics do not move from $(\lambda,\lambda)$,
representing the singular arcs. In the last interval, the dynamics follow a
horizontal line (with $s\equiv1$) reaching the point $(x_{C},x_{B})=(0,\lambda)$.}
\label{switchfig}
\end{figure}

In the case where the final time is not active, i.e., when we do not have the guarantee $\lambda>0$ in Eq.~\eqref{eq:new_controlH},
the above reasoning can still be applied to conclude $x_{B}(0)=0$
and $x_{C}(t_f)=0$. Furthermore, if $\lambda=0$, from Eq.~\eqref{eq:new_controlH},
we obtain $(1-s(0))x_{C}(0)=0$ and $s(t_f)x_{B}(t_f)=0$, in addition
to the maximization condition Eq.~\eqref{MaxCon}.

\subsection{Shortening of  nonsingular arcs}
\label{shortening}

\subsubsection{Dependence on $n$}

Here we give a heuristic argument that explains why the terminal arcs ($s\equiv0$ and $s\equiv1$) for the optimal
control become shorter as the number $n$ of spins (or qubits) increases. In fact the heuristic
holds also for intermediate arcs taking place anywhere along the optimal
trajectory.

Consider a bang arc where $s(t)= 0$ for $t\in\left[t_{0},t_{1}\right]$.
Equation~\eqref{eq:S_ODE} for the switching operator in this
region is $\dot{S}=\mathcal{K}_{C}S$ with the initial condition $S(t_{0})=S_{0}$
for some $S_{0}$. The solution in this interval is $S(t)=e^{-i(t-t_{0})C}S_{0}e^{i(t-t_{0})C}.$
The coordinate $x_{C}$ equals $\lambda$ in the interval: $x_{C}=\lambda=\langle C,S(t)\rangle=\langle C,S_{0}\rangle$.
The switching happens when $x_{B}=\lambda$, i.e., at the first solution
$t_{1}$ of $\langle B,S(t_{1})\rangle=\langle C,S_{0}\rangle$. 
More explicitly, the interval of the bang arc $\Delta t:=t_{1}-t_{0}$
is given by the first solution of 
\begin{equation}
x_{B}(\Delta t):=\Tr\left(Be^{-i\Delta tC}S_{0}e^{i\Delta tC}\right)=\Tr\left(CS_{0}\right)\ .\label{eq:switching_time}
\end{equation}
Using the spectral resolution $C=\sum_{k}E_{k}\Pi_{k}$ (with eigenvalues $E_k$ and eigenprojectors $\Pi_k$), the left-hand hand side of Eq.~\eqref{eq:switching_time} can be written as
\begin{equation}
x_{B}(\Delta t)=\sum_{kl}M_{kl}e^{-i\Delta t\omega_{kl}}\ ,
\label{eq:trigon_pol}
\end{equation}
with amplitudes $M_{kl}:=\Tr\left(B\Pi_{k}S_{0}\Pi_{l}\right)$ and
Bohr frequencies $\omega_{kl}=E_{k}-E_{l}$. The function $x_{B}( \Delta t)$
is a real trigonometric polynomial with $O(d^{2})$ terms
(where $d$ is the Hilbert space dimension) starting from $\Tr(B S_0)$ at $\Delta t=0$,  
and reaching $\lambda>0$ at $\Delta t$.
Now, as the number of qubits $n$ increases, both $d=2^{n}$ and the
frequencies increase. Both of these facts contribute to making $x_{B}(t)$
oscillate faster. As a consequence, the solution $\Delta t$ of Eq.~\eqref{eq:switching_time}
tends to decrease with $n$. The same considerations hold for the
case of an $s\equiv1$ arc with $B$ and $C$ interchanged. See Appendix~\ref{app:shortening} for additional comments.
Note that this heuristic applies both to the initial and final bang arcs, as well as to possible intermediate bang arcs if they are present. 

\subsubsection{Dependence on $t_f$}

It was concluded in Ref.~\cite[Sec.~S3]{brady_optimal_2021} that ``these bangs should become smaller and smaller as $t_f$ is increased. Eventually in the true $t_f\to \infty$ adiabatic limit, these bangs disappear recovering the standard form
expected for quantum adiabatic computing.'' However, there is in fact no guarantee that the optimal control coincides with the adiabatic path even in this limit, since the adiabatic theorem provides a sufficient, but not a necessary condition for convergence to the minimum of the cost function. Indeed, it is easy to construct a counterexample, as we now do. First note that, as we have seen, when $t_f<t_c$, the optimal schedule always starts with a bang  $s\equiv 0$ and ends with a bang $s\equiv 1$ (provided the initial state commutes with $B$). We have not addressed the question of uniqueness of this optimal schedule, which we leave for future work. However, when $t_f>t_c$, the optimal schedule is certainly not unique, as one has the possibility of ``wasting time'' by adding a bang $s\equiv 0$ at the end (thus applying $C$ there), or by adding a bang $s\equiv 1$ at the beginning (thus applying $B$ there), or both. The resulting schedules do not resemble the smooth adiabatic schedule interpolating slowly from $s(0)=1$ to $s(t_f)=0$.

\section{Summary and Discussion}
\label{sec:conc}

The quest to discover the optimal schedule for quantum optimization algorithms such as QA and QAOA naturally leads to the use of optimal control theory via Pontryagin's principle. Previous work concluded that QAOA is optimal~\cite{Yang:2017aa}, but a more careful analysis showed that in fact a hybrid bang-anneal-bang protocol is generally optimal for closed systems when not enough time is allowed for the
desired state to be reached perfectly~\cite{brady_optimal_2021}. Here we confirmed this result using a density matrix approach, which both generalizes the analysis to mixed states and simplifies it since it makes the cost-function linear in the state. We also showed that the assumption that $t_f$ is smaller than the critical time $t_c$ needed to reach the ground state of $C$ exactly, is necessary but not sufficient for the result of Ref.~\cite{brady_optimal_2021}, by giving a counterexample to the latter. 

We introduced a switching operator and found its equation of motion, which characterizes the points at which the optimal schedule switches between the two different types of bang arcs and the anneal arc. 
In Theorem~\ref{conclusione} we gave the explicit optimal schedule for the example of a single spin-$1/2$ particle, which consists of two bangs of equal duration.

Using the density matrix formulation we extended the theory to the open system setting, both for the exact reduced system dynamics in the case of a finite-dimensional environment, and under the approximation of dynamics governed by a master equation due to coupling to an infinite-dimensional environment. We proved that in the first setting, depending on additional assumptions concerning the initial states of the system and the environment and their interaction, either an anneal-bang (Theorem~\ref{thm:open_end}) or bang-anneal (Theorem~\ref{thm:open_beginning}) schedule is optimal. 

In the second setting (infinite-dimensional environment) we considered both an adiabatic Redfield equation accounting for non-Markovian dynamics but without a complete positivity guarantee, and a completely positive Markovian master equation. In the former (Redfield) case 
we could only prove that the optimal schedule terminates with the driver Hamiltonian, i.e., $s(t_f)=1$ (Theorem~\ref{thm:Redfield}). 
One could interpret this result  as a manifestation of the phenomenon of the shortening of the nonsingular arcs as the \emph{total} system (i.e., the subsystem plus its environment) size increases. Indeed, in the Redfield case the environment Hilbert space dimension is infinite, which is consistent with the bang arc having shrunk down to a point. In the fully Markovian case, even this last remnant of the bang-arc was not recovered, as we found no evidence of natural conditions under which $s(t_f)=1$ holds.

Let us now comment on the differences between these theorems and their closed system counterpart, Part (ii) of Theorem~\ref{Bradygen}.
Regarding Theorem~\ref{thm:open_end} concerning the final arc, the main change is the addition of the assumption that the interaction Hamiltonian commutes with the cost function, i.e., $\left[H_{I},C\otimes\1_{E}\right]=0$. Writing the interaction in the general form $H_I = \sum_\alpha S_\alpha\otimes E_\alpha$, where $S_\alpha$ and $E_\alpha$ are system and environment operators, respectively, the assumption is equivalent to $[S_\alpha,C]=0$ $\forall \alpha$, which is the same assumption as in Theorem~\ref{thm:Redfield}. Thus $C$ must belong to the commutant of the algebra generated by the set $\{S_\alpha\}$, i.e., $C\in\text{Alg}\{S_\alpha\}'$.\footnote{The commutant of an algebra $\mathcal{A}=\text{Alg}\{A\}$ is defined as the set $\mathcal{A}':=\{X | [X,A]=0 \ \forall A\in \mathcal{A}\}$.} For example, if $S_\alpha = \sum_{i=1}^n \sigma_{i}^{\alpha}$ for a system of $n$ qubits, where $\alpha\in\{x,y,z\}$, i.e., the collective decoherence case~\cite{Zanardi:98a,Lidar:1998fk}, then, if $C$ is at most a two-body interaction, it follows that it must be of the Heisenberg interaction form: $C = \sum_{ij} J_{ij} \boldsymbol{\sigma}_i\cdot\boldsymbol{\sigma}_j$, where $J_{ij}$ are constants~\cite{Bacon:2000qf}.
Or, for a classical target Hamiltonian arising in optimization such as the Ising-type Hamiltonian mentioned in Sec.~\ref{sec:background}, this means that the interaction must be of the pure-dephasing type, i.e., $S_\alpha \propto \sigma^z$ (or products of $\sigma^z$ over different qubits). This is a realistic model, e.g., for superconducting qubits undergoing flux noise~\cite{Kjaergaard_2020}.

Regarding Theorem~\ref{thm:open_beginning} concerning the initial arc, the main change is the addition of the two assumptions that (i) the environment Hamiltonian commutes with the environment's initial state ($\left[{H}_{E},\rho_{E}\right]=0$),
and (ii) the interaction Hamiltonian commutes with the joint system-environment initial state ($\left[H_{I},\rho_{0}\otimes\rho_{E}\right]=0$). The first of these is natural and is known as the stationary environment assumption~\cite{Breuer:book}. It is satisfied, e.g., if the environment is in thermal equilibrium, i.e., in the Gibbs state: $\rho_E \propto e^{-\beta H_E}$, where $\beta$ is the inverse temperature. The second assumption means that $\rho_{0}\in\text{Alg}\{S_\alpha\}'$ and $\rho_E\in\text{Alg}\{E_\alpha\}'$. This assumption is the least natural of the ones we have encountered so far. E.g., it is clearly violated in the standard quantum annealing setting where $\rho_0$ is the ground state of a transverse field $-\sum_i\sigma_i^x$ and $S_\alpha \propto \sigma^z$. Even the condition $\rho_E\in\text{Alg}\{E_\alpha\}'$ is not very natural. For example, for a bosonic environment one typically has $E_\alpha$ as the position operator of an oscillator, while $H_E$ might be the number operator, in which case the Gibbs state $\rho_E$ would not commute with $E_\alpha$. 

We note that while the conditions given in Theorem~\ref{thm:open_beginning} are sufficient, we do not know if they are  necessary, which we thus leave as an open problem. 
We conjecture that the initial bang does not appear as a feature of optimal schedules for open systems coupled to an infinite-dimensional environment.
This state of affairs would be reminiscent of the existence of an arrow of time for open systems, which breaks the symmetry between the initial and final times (see, e.g., Ref.~\cite{campos_venuti_error_2018} for a similar effect in the pure QA setting).  
On the other hand, given the naturalness of the sufficient conditions under which a final bang arc (Theorem~\ref{thm:open_end}) or a schedule terminating with the driver Hamiltonian (Theorem~\ref{thm:Redfield}) are optimal, such schedules may find utility in the design of quantum algorithms for optimization problems in the setting of open quantum systems. This is true in particular for systems that are well described by the adiabatic Redfield master equation, e.g., superconducting flux qubits used for quantum annealing~\cite{Harris:2010kx,Yan:2016aa,smirnovTheory18,khezri2020annealpath}. 

However, the conditions under which the adiabatic Redfield master equations hold need not apply in general, e.g., for Hamiltonians that arise naturally in systems such as Rydberg atoms or transmons, which have been used to demonstrate QAOA~\cite{zhou2018quantum,Harrigan:2021we}. Especially for quantum optical systems such as Rydberg atoms, the Markovian limit may be more appropriate, and we have not found evidence of the optimality of an initial or final bang arc in this limit, or even the optimality of $s(t_f)=1$. 

While our analysis does not strictly rule out bang-type schedules for open systems coupled to infinite-dimensional environments, we conjecture that they are indeed not a feature of optimal schedules in this case, primarily due to the shortening of arcs in the open system setting. If this could be confirmed, it would mean that after all, continuous annealing-type schedules are optimal for optimization purposes when using open quantum systems, which would have implications for all NISQ-era optimization algorithms. 

Finally, we remark that throughout this work we have used a simplified form of the PMP as described in Theorem~\ref{adapt0}. A more general PMP for time-dependent dynamics is described in Ref.~\cite{fleming_deterministic_1975}.  The main difference is that $\mathbb{H}\equiv \lambda$ [Eq.~\eqref{forlanda}] is no longer valid in the given form. This equation is, in fact, a special case of another one which contains the derivative of the dynamics with respect to $t$, and this term vanishes when the dynamics are not explicitly time-dependent, as in our case, where we considered adiabatic time-dependent master equations. The time-dependence of the dynamics in typical quantum master equations~\cite{childs_robustness_2001,PhysRevA.73.052311,albash_quantum_2012,campos_venuti_error_2018,Mozgunov:2019aa,Yamaguchi:2017vu,Dann:2018aa,nathan2020universal,Davidovic2020completelypositive,winczewski2021bypassing} is, however, different from the one of models usually encountered in classical control theory, and therefore further study is required before the PMP can be applied to a broader class of quantum master equations.

\begin{acknowledgments}
LCV's and DL's research is based upon work (partially) supported by the Office of
the Director of National Intelligence (ODNI), Intelligence Advanced
Research Projects Activity (IARPA) and the Defense Advanced Research Projects Agency (DARPA), via the U.S. Army Research Office
contract W911NF-17-C-0050. DD's research was supported by the NSF under Grant ECCS
1710558. DL's research was also sponsored by the Army Research Office and was accomplished under Grant Number W911NF-20-1-0075. The views and conclusions contained herein are those of the authors and should not be interpreted as necessarily
representing the official policies or endorsements, either expressed or
implied, of the ODNI, IARPA, DARPA, or the U.S. Government. The U.S. Government
is authorized to reproduce and distribute reprints for Governmental
purposes notwithstanding any copyright annotation thereon.
\end{acknowledgments}

\appendix

\section{Examples of reachability/unreachability of the ground state of $C$}
\label{app:reachability}

Here we provide two examples, one where Prop.~\ref{prop:0} guarantees reachability of the ground state of $C$ despite the algebra generated by $B$ and $C$ being smaller than the full $su(d)$, another illustrating that $[B,C]\ne 0$ is not a sufficient condition. 

\subsection{First example}
Let $B= \sigma^z \otimes \1$, $C = \sigma^x \otimes  \sigma^x+ \sigma^y \otimes  \sigma^y+ \sigma^z \otimes  \sigma^z$. Both $B$ and $C$ commute with $M=\sigma^z\otimes \1+\1 \otimes \sigma^z$. We know that the ground state of $C$ is a singlet $\frac{1}{\sqrt{2}}(\ket{\downarrow\uparrow}-\ket{\uparrow\downarrow})$ with total spin zero and hence belongs to the sector $M=0$. This statement follows from a theorem of Marshall~\cite{marshall_antiferromagnetism_1955,lieb_ordering_1962} and holds for  general antiferromagnetic Heisenberg models defined on a bipartite lattice. In particular, it does not require knowledge of the ground state. 
The projector $P_0$ is given by 

\begin{equation}
P_{0}=\ketb{\downarrow\uparrow}{\downarrow\uparrow}+\ketb{\uparrow\downarrow}{\uparrow\downarrow} \ . 
\end{equation}

In $\mathrm{Ran}(P_0)$ one has:
\bes
\begin{align}
P_{0}BP_{0} & =\sigma^{z}\\
P_{0}CP_{0} & =-\1+2\sigma^{x}\ .
\end{align}
\ees
Therefore in $\mathrm{Ran}(P_0)$, $B$ and $C$ generate the full $su(2)$ algebra and any state in $\mathrm{Ran}(P_0)$ can be reached starting from any state in $\mathrm{Ran}(P_0)$, in particular the ground state of $C$. 

\subsection{Second example}
It is straightforward to find examples where the ground state of $C$ cannot be reached even when $[B,C]\neq 0$. 
Consider, e.g., $B= \sigma^x \otimes \1$ and $C= \sigma^z \otimes \1 + \1 \otimes \sigma^z$. In this case the Lie algebra generated by $B$ and $C$ is $su(2)\otimes u(1) \neq su(4)$ and only the first qubit can be fully steered anywhere on the Bloch sphere. As a consequence, the ground state $\ket{\downarrow\downarrow}$ of $C$ cannot be reached unless one starts with a state of the form $\rho_0 = \tilde{\rho}\otimes \ketb{\downarrow}{\downarrow}$, with $\tilde{\rho}$ being an arbitrary single qubit state.

\section{General results on optimal control; the Pontryagin Maximum Principle}
\label{app:A}

We review here standard results in optimal control theory emphasizing
a geometric viewpoint and the results needed for the applications
in the main body of the paper.

\subsection{Setup}

In optimal control theory (see, e.g., Ref.~\cite{fleming_deterministic_1975})
one considers a general control system 
\begin{equation}
\dot{x}=f\left(x,s\right),\qquad x(0)=x_{0},
\label{1}
\end{equation}
with $x\in\mathbb{R}^{N}$,\footnote{In our case $x$ is either a wavefunction $\psi$ ($N=n$) or a density matrix $\rho$ ($N=n^2$), after the coordinatization described in Appendix~\ref{app:real}.} the control $s$ with values from a compact
subset $S\subseteq\mathbb{R}^{M}$, $f$ a smooth map $\mathbb{R}^{N}\times\mathbb{R}^{M}\rightarrow\mathbb{R}^{N}$ that does not depend explicitly on time. The terminal cost (of \emph{Mayer type}\footnote{In control theory, problems with a cost that depends only on
the final values $t_f$ and $x(t_f)$, are called of Meyer type. In our case the cost does not depend explicitly on time so we simply consider cost of the form $\phi(x(t_f))$ to simplify the notation.}) 
\begin{equation}
J:=\phi\left(x(t_f)\right)\ ,
\label{cos}
\end{equation}
is to be minimized at the terminal (final) time $t_f$, where $\phi$ is a smooth function.
In particular, and this is the case that interests us, one can fix $t_f$ so that $J$ only depends on the final state $x(t_f)$. 

The geometric approach to the necessary condition of optimal control
(see, e.g., Ref.~\cite{agrachev_control_2004}) is based on the concept
of a {reachable set} (or {attainable set}) for Eq.~\eqref{1} with values of the control in $S$, which we denote by
$\mathfrak{R}_{t}$. The set $\mathfrak{R}_{t}$ is the set of values
for the state $x$ that can be reached (from $x_{0}$) at time \emph{exactly}
$t$ for control functions with values in the set $S$. With this
definition, the minimum of the cost $J$ in Eq.~\eqref{cos} is the minimum
of the function { $\phi$}  over $\mathfrak{R}_{t_f}$. One also defines
the reachable set $\mathfrak{R}_{\leq t_f}:=\cup_{0\leq t\leq t_f}\mathfrak{R}_{t}$,
and if $\mathfrak{R}_{t}$ is non decreasing with $t$, $\mathfrak{R}_{t_f}=\mathfrak{R}_{\leq t_f}$.
This is the case for Eq.~\eqref{eq:vonNeum} if one assumes,
as we do, that $\rho_{0}$ commutes with $B$ (in this case the system can remain in the state
$\rho_{0}$ for an arbitrary length of time with the choice $s\equiv1$). If the set of admissible controls $\tilde{S}$  is compact (as assumed in
the main text) and under general conditions on the map
$f$ in Eq.~\eqref{1}, which are also satisfied in our cases, Filippov's
theorem (see, e.g., \cite[Th.~10.1]{agrachev_control_2004})
states that the reachable sets are \emph{compact} and this implies
the existence of the minimum of the function $\phi$ and therefore
of the optimal control.\footnote{Filippov's existence theorem can also be applied in the version of
Theorem~2.1 in \cite[Ch.~III]{fleming_deterministic_1975}
if we assume [as for Eq.~\eqref{eq:vonNeum}] that $f$ in Eq.~\eqref{1}
is linear in both the control $s$ and state $x$, the cost $J$ in Eq.~\eqref{nuovocosto} is smooth, and the set $\tilde{S}$  is compact. Application
of this theorem also takes into account that $\rho$ in Eq.~\eqref{nuovocosto}
is bounded. The only condition that is not directly satisfied for Theorem
2.1 in \cite[Ch.~III]{fleming_deterministic_1975} is condition
(c) for which we can, however, use the alternative (c') of Corollary
2.2 in \cite[Ch.~III]{fleming_deterministic_1975}, i.e.,
the existence of a constant $\mu_{1}$ such that $\{x_{1}\,|\,J=\phi(x_{1})\leq\mu_{1}\}\subseteq\mathbb{R}^{n}$
is compact.} The introduction of the concept of a reachable set effectively reduces
the optimal control problem to a \emph{static} optimization problem
for the function $\phi$, where the set of possible dynamics is described
by the reachable set $\mathfrak{R}_{t}$, that roughly separates
the minimization problem from the analysis of the dynamics.

\subsection{The Pontryagin Maximum Principle (PMP)}
\label{subsec:PMP}

The basic necessary conditions of optimality are given by the Pontryagin
Maximum Principle (PMP), which we restate below in a more general formulation than in the main text, but in a context relevant for the problem of
interest to us, i.e., a fixed final time and a free final state. Assume
that $s^{*}=s^{*}(t)$ is the optimal control function and $x^{*}=x^{*}(t)$
the optimal trajectory. We shall refer to $(x^{*},s^{*})$ as an \emph{optimal
pair}. We have the following.
\begin{thm}
\label{case1} Assume that $(x^{*},s^{*})$ is an optimal pair. Then
there exists a nonzero vector of functions $p=p(t)\in\mathbb{R}^{n}$ called the {\em
co-state} that satisfies the terminal problem 
\begin{equation}
\dot{p}^{T}=-p^{T}\frac{\partial f}{\partial x}\left(x^{*}(t),s^{*}(t)\right),\quad p^{T}\left(t_f\right)=-\frac{\partial\phi}{\partial x}\left(x^{*}(t_f)\right),\ 
\label{3}
\end{equation}
with $f$ defined in Eq.~\eqref{1} and $\phi$ defined in Eq.~\eqref{cos}. Furthermore, define the Hamiltonian
function 
\begin{equation}
\mathbb{H}\left(p,x,s\right):=p^{T}f\left(x,s\right)\ .
\label{4}
\end{equation}
Then we have (maximum principle): 
\begin{equation}
\mathbb{H}\left(p,x^{*},s^{*}\right)=\max_{v\in \tilde{S}}\mathbb{H}\left(p,x^{*},v\right)\ ,
\label{5}
\end{equation}
(where $\tilde{S}$ is the set of admissible controls) and 
\begin{equation}
\mathbb{H}\left(p(t),x^{*}(t),s^{*}(t)\right)\equiv \lambda,
\label{6}
\end{equation}
 for a constant $\lambda$. 
\end{thm}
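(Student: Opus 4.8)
\emph{Proof proposal.} The plan is to establish the three conclusions — existence of a nonzero co-state, the maximum principle~\eqref{5}, and the constancy~\eqref{6} — by the classical method of needle (spike) variations combined with the linear adjoint equation. Because $t_f$ is fixed and there is \emph{no} terminal state constraint, this is the technically easiest instance of the PMP: single-needle variations already suffice and one need not construct the full Pontryagin cone of variations or invoke a convex separation theorem (those enter only for constrained endpoints); the reachable set $\mathfrak R_{t_f}$ serves only to certify that needle-perturbed endpoints remain admissible. Concretely, I would first \emph{define} the co-state $p$ as the unique solution of the linear terminal-value problem~\eqref{3}, $\dot p^{T}=-p^{T}\partial_x f(x^*(t),s^*(t))$ with $p(t_f)^{T}=-\partial_x\phi(x^*(t_f))$, which exists and is absolutely continuous since $t\mapsto\partial_x f(x^*(t),s^*(t))$ is bounded and measurable, and which is nonzero whenever $\partial_x\phi(x^*(t_f))\neq0$ — automatic in the main-text applications where $\phi(x)=\langle C,x\rangle$ has gradient $C\neq0$ (in the abnormal case one prepends a multiplier $\lambda_0\ge0$, suppressed here). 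The key structural identity I would record is that if $y$ solves the variational equation $\dot y=\partial_x f(x^*(t),s^*(t))\,y$, then $t\mapsto p(t)^{T}y(t)$ is constant; equivalently $p(\tau)^{T}=p(t_f)^{T}\Phi(t_f,\tau)$, with $\Phi(t,\tau)$ the associated state-transition matrix.

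Then, for a Lebesgue point $\tau\in(0,t_f)$ of $s^*$ and any admissible value $v$, I would introduce the needle variation $s_\epsilon$ equal to the constant $v$ on $[\tau-\epsilon,\tau]$ and to $s^*$ elsewhere. A Gr\"onwall estimate, using the Lebesgue-point property to replace $\int_{\tau-\epsilon}^{\tau}f(x^*,s^*)$ by $\epsilon\,f(x^*(\tau),s^*(\tau))+o(\epsilon)$ and propagating the resulting perturbation through $[\tau,t_f]$ by the variational flow, yields the first-order endpoint expansion $x_\epsilon(t_f)=x^*(t_f)+\epsilon\,\Phi(t_f,\tau)\big[f(x^*(\tau),v)-f(x^*(\tau),s^*(\tau))\big]+o(\epsilon)$. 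Since $x_\epsilon(t_f)\in\mathfrak R_{t_f}$ and $x^*$ is optimal, $\phi(x_\epsilon(t_f))-\phi(x^*(t_f))\ge0$, so its first-order coefficient is nonnegative; using $\partial_x\phi(x^*(t_f))^{T}=-p(t_f)^{T}$ and $p(t_f)^{T}\Phi(t_f,\tau)=p(\tau)^{T}$ this collapses to $p(\tau)^{T}f(x^*(\tau),v)\le p(\tau)^{T}f(x^*(\tau),s^*(\tau))$, i.e.\ $\mathbb H(p(\tau),x^*(\tau),v)\le\mathbb H(p(\tau),x^*(\tau),s^*(\tau))$. As this holds for a.e.\ $\tau$ and every admissible $v$, it is exactly~\eqref{5}.

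For the constancy~\eqref{6} I would set $M(t):=\max_{v\in\tilde S}\mathbb H(p(t),x^*(t),v)$. Since $\tilde S$ is compact, $f$ is smooth, and $p,x^*,\dot p,\dot x^*$ are bounded on $[0,t_f]$, $M$ is a supremum of functions $g_v(t):=p(t)^{T}f(x^*(t),v)$ with uniformly bounded derivatives, hence Lipschitz, hence differentiable a.e. At a point $t$ where $M'(t)$ exists, where $\dot p(t),\dot x^*(t)$ exist, and where the maximum is attained at $v=s^*(t)$ (a.e., by~\eqref{5}), a two-sided envelope argument — bounding $M(t\pm h)\ge g_{s^*(t)}(t\pm h)$ and letting $h\to0^{\pm}$ — gives $M'(t)=\dot g_{s^*(t)}(t)=\dot p(t)^{T}f(x^*(t),s^*(t))+p(t)^{T}\partial_x f(x^*(t),s^*(t))\,\dot x^*(t)$; substituting $\dot p^{T}=-p^{T}\partial_x f$ and $\dot x^*=f$ makes the two terms cancel, so $M'=0$ a.e. Here the hypothesis $\partial_t f=0$ is essential: an explicit time dependence would leave a surviving $p^{T}\partial_t f$ term, which is precisely why~\eqref{6} can fail for genuinely time-dependent dynamics. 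A Lipschitz function with a.e.-zero derivative is constant, so $M\equiv\lambda$; since $\mathbb H(p(t),x^*(t),s^*(t))=M(t)$ for a.e.\ $t$ and $M$ is continuous, modifying $s^*$ on a null set (which does not alter the trajectory) yields~\eqref{6}.

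\emph{Main obstacle.} The one genuinely delicate ingredient is the first-order needle-variation estimate for a merely measurable control: making the $o(\epsilon)$ precise via Gr\"onwall together with the density of Lebesgue points, and transporting it cleanly through $[\tau,t_f]$ by the variational equation. The envelope/Lipschitz argument for~\eqref{6} is the secondary technical point; everything else is linear-ODE bookkeeping. If one is content, as the main text essentially is, to cite the PMP from the control literature, these two estimates are exactly what is being imported.
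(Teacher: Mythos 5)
Your proposal is correct, but it should be said up front that the paper does not prove Theorem~\ref{case1} at all: Appendix~\ref{app:A} explicitly presents it as a review of standard results and imports it from the cited control literature (Fleming--Rishel, Agrachev--Sachkov, Kipka--Ledyaev). What you have written is a faithful outline of the classical proof that those references supply, correctly specialized to the fixed-time, free-endpoint Mayer problem: you rightly observe that single needle variations suffice here (no Pontryagin cone or separation argument is needed because there is no terminal constraint), that the problem is normal so the co-state can simply be \emph{defined} by the linear adjoint terminal-value problem and is nonzero because $\partial_x\phi = C \neq 0$ in the paper's application, and that the constancy~\eqref{6} follows from the Lipschitz-envelope argument with the cancellation $\dot p^{T}f + p^{T}\partial_x f\,\dot x^{*} = 0$. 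Your emphasis on where the autonomy hypothesis $\partial_t f = 0$ enters is exactly the point the authors stress repeatedly in the main text (and revisit in the concluding section) as the reason they restrict to master equations whose generators depend on time only through $s(t)$. The two steps you flag as delicate --- the $o(\epsilon)$ needle estimate at Lebesgue points of a measurable control, and the a.e.\ envelope differentiation --- are indeed the only nontrivial analytic content, and they are precisely what the paper delegates to the references; your sketch of them contains no errors, though as you acknowledge it remains a sketch rather than a complete argument.
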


The constant $\lambda$ describes the dependence of the optimal cost
on the terminal time $t_f$. To see this, given the optimal control
$s^{*}$ defined in $[0,t_f]$, 
calculate the variation of the
cost with this control at $t=t_f$, 
\begin{align}
\left.\frac{d}{dt}\right|_{t=t_f}\phi(x(t)) & =\frac{\partial\phi}{\partial x}(x(t_f))f\left(x(t_f),s(t_f)\right)\nonumber \\
 & =-p^{T}\left(t_f\right)f\left(x(t_f),s(t_f)\right)\nonumber \\
 & =-\lambda\ ,
 \label{variacost}
\end{align}
where we used Eqs.~\eqref{1}, \eqref{3} and~\eqref{6}. In particular, $\lambda>0$
indicates that it is possible to lower the cost by increasing the
time or, in other words, the constraint $t=t_f$ is \emph{active}.
If $\lambda=0$ the constraint on the final time is not active.   If
$\lambda<0$, the above calculation shows that the cost is actually
\emph{increasing} with $t$ at $t=t_f$.

If there exists a value $s_0$ in the admissible control set $\tilde{S}$  such that
$f(x_{0},s_0)=0$ in Eq.~\eqref{1}, we can show that we must have $\lambda\ge 0$.
The argument is as follows. Assume $\lambda<0$ in Eq.~\eqref{variacost}.
Since $\phi(x(t))$ is increasing in $t$, there exists an $\epsilon>0$
such that $\phi(x(t_f-\epsilon))<\phi(x(t_f))=J$. Now construct
the following control function 
\beq
s_{1}(t)=\begin{cases}
s_0 & t\in[0,\epsilon]\\
s^{*}(t-\epsilon) & t\in(\epsilon,t_f]\ ,
\end{cases},
\eeq
i.e., $s_1$ leaves the cost unchanged in the first interval of time $[0,\epsilon]$ and then follows the optimal schedule shifted by $\epsilon$. Let us denote by $\phi_1(t)$ the cost function at time $t$ obtained with control $s_1$. 
Then, by construction, $\phi_1(t_f)=\phi(x(t_f-\epsilon))<\phi(x(t_f))$, 
which contradicts the fact that $s^{*}$ is optimal. Therefore
$\lambda$ cannot be negative. Summarizing, we have:

\begin{prop}
\label{landamagz} 
Assume there exists a value $s_0\in \tilde{S}$  such
that $f(x_{0},s_0)=0$ in Eq.~\eqref{1}. Then $\lambda\geq 0$ in Eq.~\eqref{6},
and  $\lambda>0$ if and only if the constraint on the final
time $t_f$ is active. 
\end{prop}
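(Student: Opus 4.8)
The plan is to treat the two assertions of Proposition~\ref{landamagz} separately, since the first is essentially already in hand. For $\lambda\ge 0$ there is nothing new to do: this is exactly the content of Eq.~\eqref{variacost} together with the prepended-control construction given in the paragraph preceding the statement. I would simply recall it. If $\lambda<0$, then by Eq.~\eqref{variacost} the cost $\phi(x^*(t))$ is strictly increasing at $t=t_f$ along the optimal trajectory, so $\phi(x^*(t_f-\epsilon))<\phi(x^*(t_f))=J$ for some $\epsilon>0$; the control $s_1$ that holds the state at $x_0$ for a time $\epsilon$ (legitimate precisely because $f(x_0,s_0)=0$) and then applies the $\epsilon$-shifted optimal schedule produces a trajectory whose terminal cost equals $\phi(x^*(t_f-\epsilon))<J$, contradicting optimality of $s^*$. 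Hence $\lambda\ge 0$.

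For the equivalence ``$\lambda>0$ iff the final-time constraint is active'' I would upgrade the one-point computation~\eqref{variacost} to a statement holding along the whole trajectory. Differentiating $\phi(x^*(t))$ and using the state equation~\eqref{1}, the costate equation~\eqref{3}, and the definitions~\eqref{4} and~\eqref{6},
\begin{equation}
\frac{d}{dt}\phi(x^*(t))=\frac{\partial\phi}{\partial x}(x^*)\,f(x^*,s^*)=-p^{T}f(x^*,s^*)=-\mathbb{H}=-\lambda,
\end{equation}
so the optimal cost varies at the constant rate $-\lambda$ and, integrating, $\phi(x^*(t))=\phi(x_0)-\lambda t$. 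I would then extend $s^*$ to a slightly larger interval $[0,t_f+\delta]$ by a value of the control attaining the maximum in~\eqref{5} at $t_f$, propagating $p$ by~\eqref{3}; along this extension $\mathbb{H}$ stays equal to $\lambda$ (its time derivative vanishes along the coupled state–costate flow for a fixed control, and it equals $\lambda$ at $t_f$), so the displayed identity persists. If $\lambda>0$ the extended trajectory satisfies $\phi(x^*(t_f+\delta))=J-\lambda\delta<J$, so perturbing the final time to $t_f+\delta$ strictly lowers the cost, i.e.\ the constraint is active; if $\lambda=0$ the cost is unchanged by such a perturbation, and $\lambda<0$ is excluded by the first part. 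This gives the ``iff''.

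The only genuinely delicate point — and the one I would state explicitly — is the precise meaning of ``active''. It must be read as a property of the optimal trajectory under consideration, namely that its time-extension strictly decreases the cost, which by the displayed identity is equivalent to $\lambda>0$. One should resist conflating this with a statement about the global value function $J_{\min}(t_f+\delta)=\min_{x\in\mathfrak{R}_{t_f+\delta}}\phi(x)$: knowing $\lambda>0$ does yield $J_{\min}(t_f+\delta)\le\phi(x^*(t_f+\delta))<J_{\min}(t_f)$, but the converse (a strictly smaller $J_{\min}$ at $t_f+\delta$ forcing $\lambda>0$ for the $t_f$-problem) is \emph{not} available in general, because a better trajectory of length $t_f+\delta$ need bear no relation to $x^*$ — consistent with the warning in Sec.~\ref{sec:active-constraint} that $\lambda>0$ is a feature of the trajectory rather than of the problem. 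I would also note that the hypothesis $f(x_0,s_0)=0$ does double duty here: it supplies the ``stalling'' control used in the first part, and it makes $\mathfrak{R}_t$ non-decreasing, which is what gives $J_{\min}(t_f+\delta)\le J_{\min}(t_f)$ meaning in the first place.
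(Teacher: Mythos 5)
Your argument for $\lambda\geq 0$ is exactly the paper's: Eq.~\eqref{variacost} shows the cost is strictly increasing at $t_f$ when $\lambda<0$, and the stalling control $s_0$ (available because $f(x_0,s_0)=0$) prepended to the $\epsilon$-shifted optimal schedule then beats $s^*$, a contradiction. The ``iff'' part is likewise handled in the paper essentially definitionally through Eq.~\eqref{variacost} ($-\lambda$ is the rate of change of the cost at $t_f$ along the continuation of the optimal trajectory), and your careful separation of the trajectory-level notion of ``active'' from the value-function notion is consistent with the paper's discussion in Sec.~\ref{sec:active-constraint}.

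However, your displayed identity is false as a statement along the whole trajectory. The step $\frac{\partial\phi}{\partial x}(x^*(t))f(x^*,s^*)=-p^{T}(t)f(x^*,s^*)$ requires $\frac{\partial\phi}{\partial x}(x^*(t))=-p^{T}(t)$, which is the \emph{terminal} condition in Eq.~\eqref{3} and holds only at $t=t_f$; for $t<t_f$ the costate evolves by the adjoint equation while (for the linear cost of interest) $\frac{\partial\phi}{\partial x}$ is constant, and the two generally differ. Consequently $\phi(x^*(t))=\phi(x_0)-\lambda t$ is wrong — the paper's own spin-$1/2$ example is a counterexample, since there the cost is constant during the initial $s\equiv 0$ bang and only decreases afterwards. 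The same objection applies to the exact equality $\phi(x^*(t_f+\delta))=J-\lambda\delta$ on the extension. The conclusion survives because you only need the derivative \emph{at} $t_f$, which is precisely Eq.~\eqref{variacost}: extending the trajectory past $t_f$ gives $\phi(x(t_f+\delta))=J-\lambda\delta+o(\delta)$, which is $<J$ for small $\delta$ when $\lambda>0$ and shows no first-order decrease when $\lambda=0$. You should restate the argument using only this first-order expansion at the endpoint and delete the global ``constant rate'' claim.
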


For the problem of interest in the main body of the paper the above
assumption on the existence of the value $s_0\in \tilde{S}$  is valid in
Theorems~\ref{Bradygen} and \ref{thm:open_beginning}, since
we assume that the initial condition commutes with the Hamiltonian
$B$ (i.e., we can choose $s_0=1$). Furthermore, we note, concerning reachable sets, that (i) because of the existence of such a value $s_0\in\tilde{S}$, we have ${\cal R}_{\leq t}={\cal R}_{t}$; (ii) from
standard result in control theory, (e.g., Ref.~\cite[Th.~1]{ayala_about_2017} and references therein) we know that for the bilinear
class of models [such as Eq.~\eqref{eq:vonNeum}], the sets ${\cal R}_{\leq t}={\cal R}_{t}$
are compact and continuous with $t$ with respect to the Hausdorff
metric. 

Since the optimal cost $J$ is the minimum of the continuous function $\phi=\phi(x)$ on $\mathfrak{R}_{t_f}$, it  depends continuously on
$t_f$, and since $\mathfrak{R}_{t_f}$ is nondecreasing with
$t_f$, the optimal cost is \emph{nonincreasing} with $t_f$.
That the constraint on the final time is active means that
the cost is actually strictly \emph{decreasing}. Notice in particular
that the function $\phi = \Tr(C\rho)$ giving the cost in Eq.~\eqref{nuovocosto} is
\emph{linear} in the state ($\rho$); hence the minimum is necessarily
achieved on the \emph{boundary} of the reachable set.\footnote{For a linear function $\phi(x)=k^{T}x$, if the minimum is achieved at an
interior point setting the derivatives equal to zero would imply $k=0$.} Therefore $\lambda>0$ in Eq.~\eqref{6} implies that at the optimal
final point, the boundary of the reachable set $\mathfrak{R}_{t_f}$
``moves'' in such a way so as to make the cost decrease.

\section{Coordinatization in terms of an orthonormal real matrix basis}
\label{app:real}

The application to quantum systems of the PMP, which is typically formulated over real vector spaces as in Appendix~\ref{app:A}, has to account for the fact that in the quantum case the equations are complex-valued. To show how this can be done we start with some basic preliminaries.

We denote the Hilbert-Schmidt scalar product between operators $A$ and $B$ acting on an $n$-dimensional Hilbert space $\mathcal{H}$ by  $\langle A,B\rangle :=\Tr(A^{\dagger}B)$.
For superoperators $\mathcal{L}$, we denote the Hilbert-Schmidt adjoint
of $\mathcal{L}$ by $\mathcal{L}^\dagger$, which is defined via 
\beq
\langle \mathcal{L}^\dag(A),B\rangle := \langle A,\mathcal{L}(B)\rangle\ \ \forall A,B\ . 
\eeq

We now choose an orthonormal basis $\{F_{j}\}$ for the \emph{real} vector space of Hermitian $n\times n$
matrices with Hilbert-Schmidt inner product $\langle A,B\rangle =\Tr(A B) $. Let us ``coordinatize'' $X=\sum_{j}\boldsymbol{X}_{j}F_{j}$ in this basis, where henceforth we use the notation $\boldsymbol{X}=\{\boldsymbol{X}_{j}\}$ for the vector of real-valued coordinates of the operator $X$. Then: 
\begin{align}
\langle A,B\rangle&=\sum_{jk}\boldsymbol{A}_{j}\boldsymbol{B}_{k}\Tr(F_{j}F_{k})=\sum_{jk}\boldsymbol{A}_{j}\boldsymbol{B}_{k}\delta_{jk}\notag \\
&=\boldsymbol{A}^{T}\boldsymbol{B}\ ,
\label{eq:<A,B>}
\end{align}
so that in these coordinates the inner product $\langle A,B\rangle$
corresponds to the standard inner product in $\mathbb{R}^{n^{2}}$. In particular, the Hermitian density operator $\rho = \sum_i \boldsymbol{\rho}_i F_i $,  is now represented by a real, $n^2$-dimensional vector $\boldsymbol{\rho}$.\footnote{This representation is closely related to the familiar ``coherence vector'' $\boldsymbol{v}$, wherein one represents $\rho$ as $\rho = \frac{1}{n}(\1+\sum_{j=1}^{n^2-1}\boldsymbol{v}_j F_j)$, where $F_{n^2}:=\1$ and the other orthonormal basis elements $\{F_j\}_{j=1}^{n^2-1}$ are in addition traceless; see, e.g., Ref.~\cite{byrd:062322}.} 
Since the Liouvillian is Hermitian preserving, i.e., $\left[\mathcal{L}\left(X\right)\right]^{\dagger}=\mathcal{L}\left(X^{\dagger}\right) \forall X$,
after coordinatization the operator $\mathcal{L}$ can be seen as
an operator $\mathbb{R}^{n^{2}}\mapsto\mathbb{R}^{n^{2}}$.
Indeed, denoting the corresponding matrix by $\boldsymbol{\mathcal{L}}$ in the chosen basis, i.e., $\boldsymbol{\mathcal{L}}_{ij}=\Tr\left(F_{i}\mathcal{L}\left(F_{j}\right)\right)$, one has

\begin{align}
\boldsymbol{\mathcal{L}}_{ij}^*&= \Tr\left[ (F_{i}\mathcal{L}\left(F_{j}\right) )^\dagger \right] = 
\Tr\left(\left[\mathcal{L}\left(F_{j}\right)\right]^{\dagger}F_{i}^{\dagger}\right)\notag \\
&=\Tr\left(\mathcal{L}\left(F_{j}\right)F_{i}\right)=\boldsymbol{\mathcal{L}}_{ij}\ .
\end{align}
i.e., the matrix $\boldsymbol{\mathcal{L}}$ is real and defines 
an operator $\mathbb{R}^{n^{2}}\mapsto\mathbb{R}^{n^{2}}$. Accordingly,  Eq.~\eqref{eq:general_ME} is transformed into a real-valued equation:
\beq
\dot{\boldsymbol{\rho}}=\boldsymbol{\mathcal{L}}\boldsymbol{\rho},\qquad\boldsymbol{\rho}(0)=\boldsymbol{\rho}_{0}\ .
\label{eq:rhobold}
\eeq
The cost~\eqref{nuovocosto} takes the form
\begin{equation}
J=\boldsymbol{C}^{T}\boldsymbol{\rho}(t_f)\ .
\label{coscos}
\end{equation}

We are now ready to state the PMP in the standard setting of real-valued functions, in the form needed for our purposes. Namely:

\begin{thm}
\label{adapt1} 
Assume $(\boldsymbol{\rho}^{*},s^{*})$ is an optimal pair for the problem defined by Eqs.~\eqref{eq:rhobold} and~\eqref{coscos} for a fixed final time $t_f$.\footnote{Recall that here the asterisk denotes optimality, not complex conjugation.} Then there exists a co-state vector $\boldsymbol{p}$ that satisfies\footnote{In the notation of Appendix~\ref{app:A} the co-state vector satisfies the adjoint equation~\eqref{3}, and in the present case $\frac{\partial f}{\partial x}=\boldsymbol{\mathcal{L}}$.}
\begin{equation}
\dot{\boldsymbol{p}}^{T}=-\boldsymbol{p}^{T}\boldsymbol{\mathcal{L}},\quad\boldsymbol{p}(t_f)=-\boldsymbol{C}\ .
\label{eq:co_general}
\end{equation}
Furthermore, define the PMP control Hamiltonian function 
\begin{equation}
\mathbb{H}\left(\boldsymbol{p},\boldsymbol{\rho},s\right)
=\boldsymbol{p}^{T}\mathcal{L}\boldsymbol{\rho}\ .
\label{HamilP'}
\end{equation}
We then have the \emph{maximum principle}: 
\begin{equation}
\mathbb{H}\left(\boldsymbol{p}(t),\boldsymbol{\rho}^*(t),s^*(t)\right)=\max_{v\in[0,1]}\mathbb{H}\left(\boldsymbol{p}(t),\boldsymbol{\rho}^*(t),v\right)\ ,
\label{MaxCon'}
\end{equation}
 and there exists a real non-negative constant $\lambda$ such that
\begin{equation}
\mathbb{H}\left(\boldsymbol{p}(t),\boldsymbol{\rho}^*(t),s^*(t)\right)=\lambda\ .
\label{forlanda'}
\end{equation}
\end{thm}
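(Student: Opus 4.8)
The plan is to obtain Theorem~\ref{adapt1} as a direct specialization of the general PMP (Theorem~\ref{case1} of Appendix~\ref{app:A}) to the coordinatized system~\eqref{eq:rhobold}--\eqref{coscos}. First I would make the identifications $x \leftrightarrow \boldsymbol{\rho}$, $N = n^2$, $f(x,s) = \boldsymbol{\mathcal{L}}(s)\boldsymbol{\rho}$, $\phi(x) = \boldsymbol{C}^T\boldsymbol{\rho}$, and $\tilde S = [0,1]$, and then verify the hypotheses of Theorem~\ref{case1}: the map $f$ is smooth --- in fact bilinear in $(\boldsymbol{\rho},s)$ since $\mathcal{L}$ depends linearly on $s$ --- and, crucially, it carries no explicit time dependence because $\partial_t\mathcal{L}=0$ on $\mathcal{I}$; the cost $\phi$ is linear, hence smooth; and $\tilde S$ is compact. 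Existence of an optimal pair $(\boldsymbol{\rho}^*,s^*)$ follows from Filippov's theorem as recalled in Appendix~\ref{app:A} (using the bilinearity of $f$, smoothness of $J$, compactness of $\tilde S$, boundedness of $\rho$, and the alternative condition (c$'$) of Corollary~2.2 in Ref.~\cite[Ch.~III]{fleming_deterministic_1975}).

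With these identifications the conclusions of Theorem~\ref{case1} translate term by term. The adjoint equation~\eqref{3} reads $\dot{\boldsymbol{p}}^T = -\boldsymbol{p}^T(\partial f/\partial x) = -\boldsymbol{p}^T\boldsymbol{\mathcal{L}}$, with terminal condition $\boldsymbol{p}^T(t_f) = -(\partial\phi/\partial x)(\boldsymbol{\rho}^*(t_f)) = -\boldsymbol{C}^T$, which is precisely Eq.~\eqref{eq:co_general}. The Hamiltonian function~\eqref{4} becomes $\mathbb{H}(\boldsymbol{p},\boldsymbol{\rho},s) = \boldsymbol{p}^T f(\boldsymbol{\rho},s) = \boldsymbol{p}^T\boldsymbol{\mathcal{L}}\boldsymbol{\rho}$, matching Eq.~\eqref{HamilP'}, and the maximum principle~\eqref{5} and the constancy relation~\eqref{6} give Eqs.~\eqref{MaxCon'} and~\eqref{forlanda'} verbatim --- the quantity in~\eqref{6} being a genuine constant rather than merely a constant of integration precisely because $f$ has no explicit time dependence. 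For consistency with the operator-level statement of Theorem~\ref{adapt0} I would also note that, since $\boldsymbol{p}(t_f) = -\boldsymbol{C}$ is the coordinate vector of the Hermitian operator $-C$ and $\boldsymbol{\mathcal{L}}$ is a \emph{real} matrix (shown immediately before the theorem, from the hermiticity-preserving property of $\mathcal{L}$), the solution $\boldsymbol{p}(t)$ remains real for all $t$ and therefore represents a Hermitian co-state operator $p(t)$. Finally, for the non-negativity of $\lambda$ I would invoke Proposition~\ref{landamagz}: it suffices to exhibit $s_0\in[0,1]$ with $f(\boldsymbol{\rho}_0,s_0)=\boldsymbol{\mathcal{L}}(s_0)\boldsymbol{\rho}_0=0$, and in the settings where the theorem is applied one has $[\rho_0,B]=0$, so $s_0=1$ works ($\mathcal{L}(1)\rho_0=\mathcal{K}_B\rho_0=-i[B,\rho_0]=0$); Proposition~\ref{landamagz} then yields $\lambda\ge 0$, with $\lambda>0$ exactly when the final-time constraint is active, as quantified by Eq.~\eqref{variacost}.

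The argument is essentially a bookkeeping exercise inside the coordinatization of Appendix~\ref{app:real}, so there is no deep obstacle; the two points requiring genuine care are (i) that the coordinatized generator $\boldsymbol{\mathcal{L}}$ is real, which is what licenses applying the real-vector-space PMP in the first place and is handled by the hermiticity-preserving calculation preceding the theorem, and (ii) the absence of explicit time dependence in $f$, without which Eq.~\eqref{forlanda'} would fail --- this holds automatically here since $\mathcal{L}$ depends on $t$ only through $s(t)$. Verifying the Filippov existence hypotheses is the most technical step, but it is already dispatched in Appendix~\ref{app:A}.
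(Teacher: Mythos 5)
Your proposal is correct and follows essentially the same route as the paper, which presents Theorem~\ref{adapt1} as a direct specialization of the general PMP (Theorem~\ref{case1}) to the coordinatized bilinear system~\eqref{eq:rhobold} with linear cost~\eqref{coscos}, relying on the realness of $\boldsymbol{\mathcal{L}}$ established just before the theorem and on Proposition~\ref{landamagz} for $\lambda\ge 0$. Your explicit remark that the non-negativity of $\lambda$ rests on exhibiting $s_0$ with $\boldsymbol{\mathcal{L}}(s_0)\boldsymbol{\rho}_0=0$ (supplied by $[\rho_0,B]=0$ in the settings where the theorem is used) matches the paper's own caveat following Proposition~\ref{landamagz}.
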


Theorem \ref{adapt0} is readily obtained applying coordinatization in reverse. 
To see that Eq.~\eqref{eq:co_general} corresponds to Eq.~\eqref{LiouvK1}, let us write Eq.~\eqref{eq:co_general} explicitly as $\dot{\boldsymbol{p}_{i}}=-\sum_{j}\boldsymbol{\mathcal{L}}_{ji}\boldsymbol{p}_{j}$.
Next, since $\boldsymbol{\mathcal{L}}$ is real we have $\boldsymbol{\mathcal{L}}_{ji}=\langle F_{j},\mathcal{L}F_{i}\rangle=\langle F_{j},\mathcal{L}F_{i}\rangle^*=\langle\mathcal{L}F_{i},F_{j}\rangle=\langle F_{i},\mathcal{L}^{\dag}F_{j}\rangle$.
Thus
\bes
\begin{align}
\dot{p} & =\sum_{i}\dot{\boldsymbol{p}_{i}}F_{i}=-\sum_{ij}\boldsymbol{\mathcal{L}}_{ji}\boldsymbol{p}_{j}F_{i} \\
&=-\sum_{ij}\langle F_{i},\mathcal{L}^{\dag}F_{j}\rangle\boldsymbol{p}_{j}F_{i}\\
 & =-\sum_{i}\langle F_{i},\mathcal{L}^{\dag}p\rangle F_{i}=-\mathcal{L}^{\dag}p\ .
\end{align}
\ees
Finally, the correspondence between Eqs.~\eqref{HamilP'} and~\eqref{HamilP} is a direct consequence of Eq.~\eqref{eq:<A,B>}: $\langle p,\mathcal{L}\rho\rangle = \boldsymbol{p}^{T}\mathcal{L}\boldsymbol{\rho}$.

\section{Proof of various formulas}
\label{app:calcs}

\subsection{Proof of Eq.~\eqref{eq:KXanti}}
\label{app:KXanti}

The proof is, for for arbitrary operators $A,B,X$:
\bes
\begin{align}
\<A,\mathcal{K}_{X}^{\dag}(B)\> &= \<\mathcal{K}_{X}(A),B\> \\
&= \<-i[X,A],B\> = i\Tr([X,A]^\dag B) \\
&= i \Tr(A^\dag X^\dag B-A^\dag B X^\dag ) \\
& = i\Tr(A^\dag [X^\dag,B]) = i\<A,[X^\dag,B]\> \\
& = \<A,-\mathcal{K}_{X^\dag}(B)\>\ ,
\end{align}
\ees
where we used $\<X,Y\> = \Tr[X^\dag Y]$.

\subsection{Proof of Eq.~\eqref{eq:Ddag-Red}}
\label{app:Ddag-Red}

Let $\mathcal{D}\rho := [W\rho,V^\dag]+[W,\rho V^\dag]$. Then:
\bes
\begin{align}
\<X,\mathcal{D}\rho\> &= \Tr\big(X^\dag [W\r,V^\dag_{\a}]+X^\dag [V,\r W^\dag]\big) \nonumber \\
&=\Tr\Big[ \big ( V^\dag X^\dag W - X^\dag V^\dag W \notag \\
&\qquad + W^\dag X^\dag V - W^\dag V X^\dag \big)\rho \Big] \\
&=\Tr\Big[ \big ( W^\dagger X V   - W^\dagger V X \notag \\
&\qquad +  V^\dag X W- X V^\dag W \big)^\dagger\rho \Big] \\
 &= \Tr\Big[\big( W^\dagger [X,V] + [V^\dag,X]W \big)^\dag\rho \Big] \\
& = \<\mathcal{D}^\dag X,\rho\>\ ,
\end{align}
\ees
which yields Eq.~\eqref{eq:Ddag-Red} when we replace $W$ by $W_{\a\b}$, $V$ by $S_\a = S_\a^\dag$, and sum over $\a,\b$.

\subsection{Proof of $v_{3}(t)\geq 0$ for sufficiently small $t$}
\label{app:v3>=0}

Let us compute the Dyson series solution of Eq.~\eqref{eq:Bloch-eq} to second order: 
\begin{align}
\boldsymbol{v}(t) &= \Big[\1 +\int_0^{t}dt_1\  \boldsymbol{\mathcal{M}}(s(t_1)) + \\
 &\int_0^{t}dt_1 \int_0^{t_1}dt_2\  \boldsymbol{\mathcal{M}}(s(t_1))\boldsymbol{\mathcal{M}}(s(t_2))+O(t^3)\big]\boldsymbol{v}_{0}\ . \notag
\end{align}
Using Eq.~\eqref{eq:M_Bloch}, for the initial condition $\boldsymbol{v}_{0}=(1,0,0)^{T}$ there is no contribution from the first (and in fact also the third) order, while the second order contributes via $\left[\boldsymbol{\mathcal{M}}(s(t_1))\boldsymbol{\mathcal{M}}(s(t_2))\right]_{31} = s(t_1)(1-s(t_2)) \geq 0$. Hence, using $s(t)\in[0,1]$, $v_3(t) = \int_0^{t}dt_1 \int_0^{t_1}dt_2\ s(t_1)(1-s(t_2)) + O(t^4) \geq 0$ for sufficiently small $t$. In contrast, for the initial condition $\boldsymbol{v}_{0}=(-1,0,0)^{T}$ we have $v_3(t) \leq 0$ by the same argument.

\begin{figure*}
\subfigure[\ ]{\includegraphics[width=7cm]{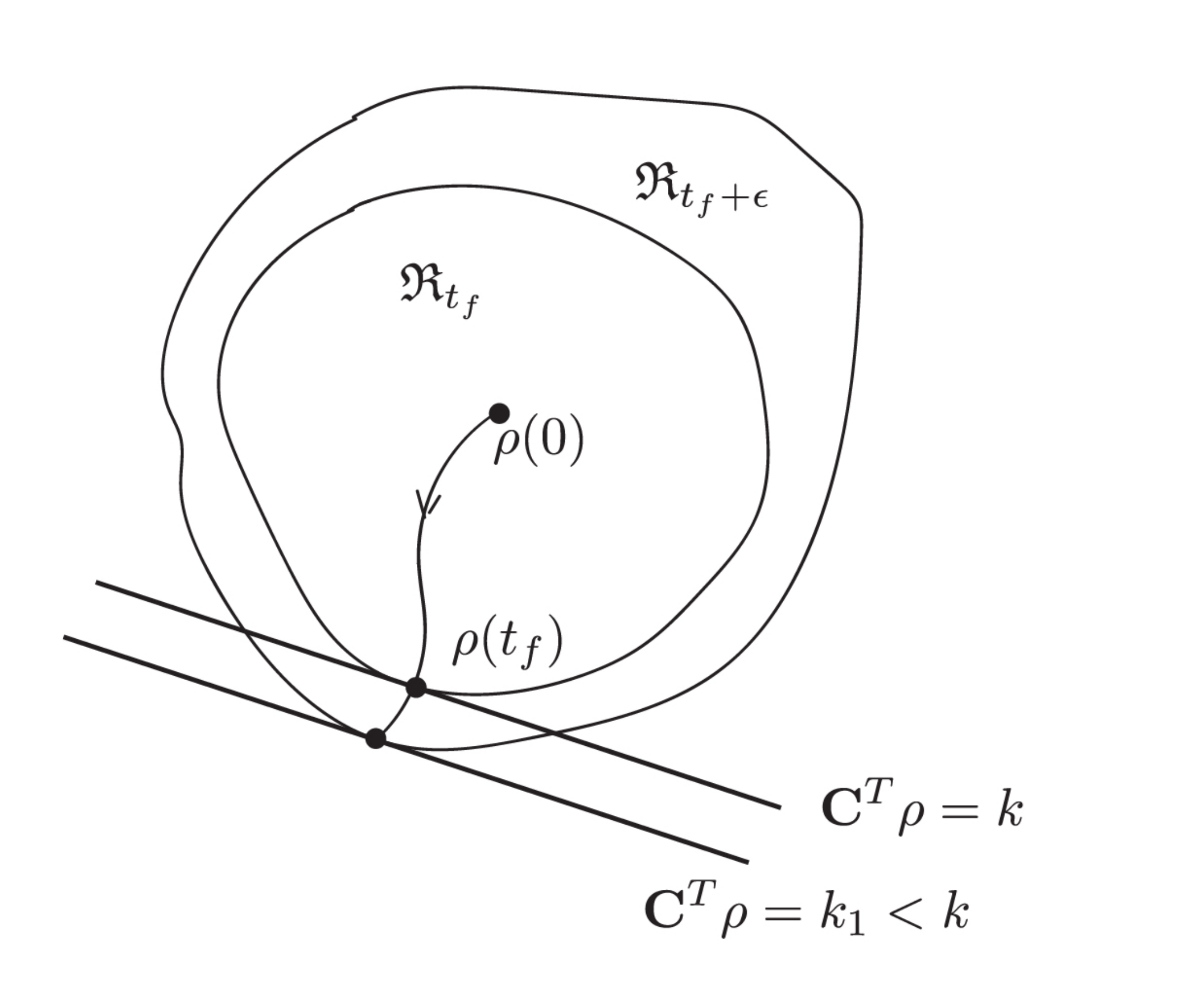}\label{FiguAD1}}
\subfigure[\ ]{\includegraphics[width=7cm]{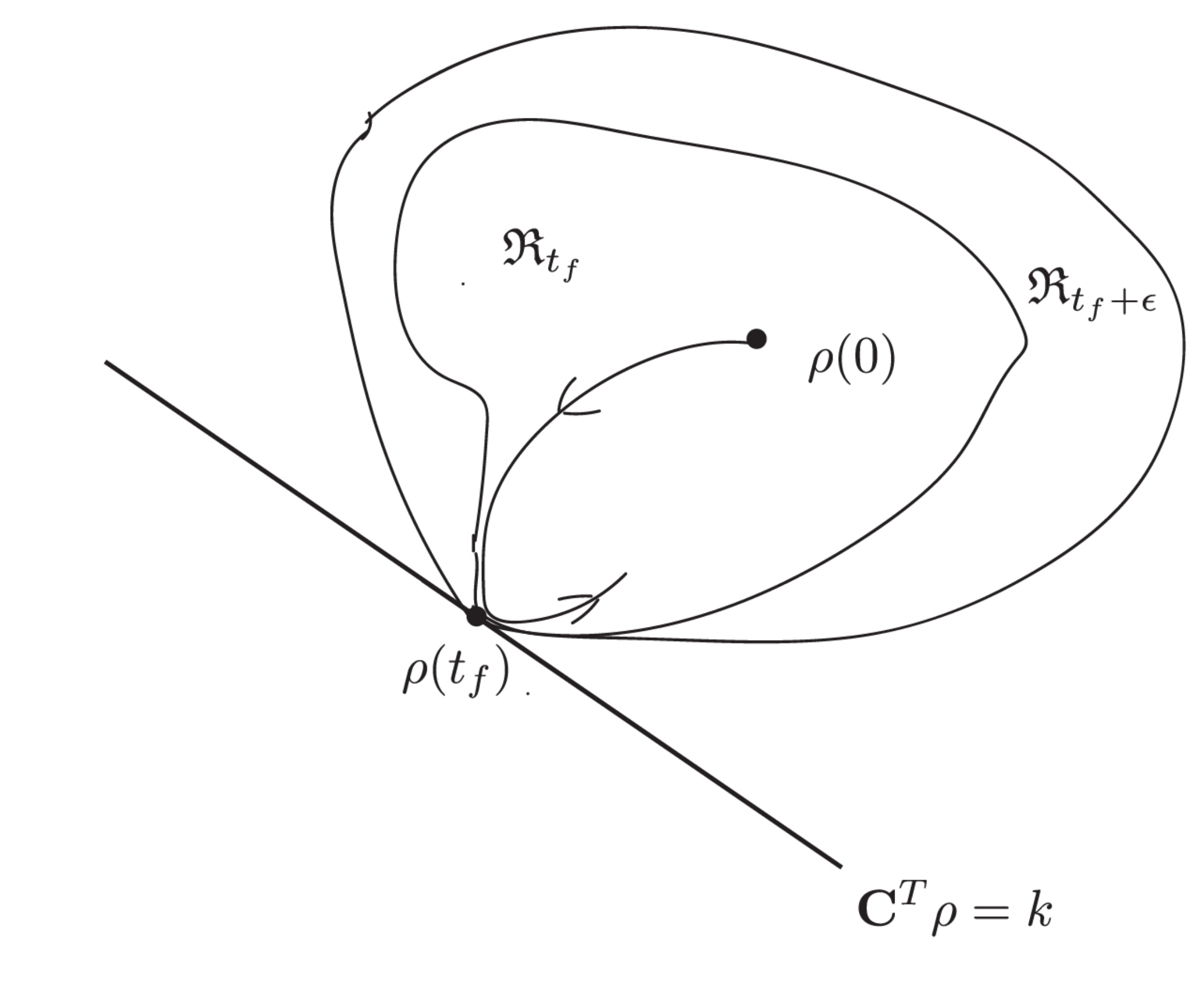}\label{FiguAD2}}
\caption{(a) Behavior of the optimal cost in the regular case where the reachable set increases with the final time at the optimal point $\boldsymbol{\rho}(t_f)$. In this case we expect $\lambda >0$.  (b) Behavior of the optimal cost in the  case where the reachable set increases with the final time but not at the optimal final  point $\boldsymbol{\rho}(t_f)$. In this case we expect $\lambda=0$.}
\end{figure*}

\section{Optimal control and the geometry of the reachable set}
\label{app:geomfig}

The cost considered in this work is linear in the state $\rho$, which after coordinatization we have identified with a point $\boldsymbol{\rho}$  in $\mathbb{R}^{n^2}$, that is, $J:={\bf C}^T \boldsymbol{\rho}$. In $\mathbb{R}^{n^2}$,  we also consider the reachable sets $\mathfrak{R}_t$ at various times $t$. It is of interest to consider the level lines (hyperplanes) ${\bf C}^T \boldsymbol{\rho}=k$ for various $k$'s. If ${ k}$ is the minimum cost at the final time $t_f$,  the level line  ${\bf C}^T \boldsymbol{\rho}=k$ intersects the boundary of the reachable set $\mathfrak{R}_{t_f}$ at the  point $\boldsymbol{\rho}(t_f)$. In our case, the reachable sets $\mathfrak{R}_t$ are always nondecreasing with $t$. Figure~\ref{FiguAD1} describes the regular situation of an active time constraint ($\lambda > 0$ in the main text). The intersection occurs at a point where the reachable set is increasing with time. 
Therefore an increase (decrease) of the final time $t_f$ results in a decrease (increase) of the optimal cost. However, in principle, a different situation  may  occur which is described in Fig.~\ref{FiguAD2}. 
In this case,  the reachable set $\mathfrak{R}_t$ increases with $t$ but not at  the point where the optimum occurs. In this case the final time constraint is not active ($\lambda=0$). 
Notice that by continuity of the reachable set $\mathfrak{R}_t$ with $t$, the point $\boldsymbol{\rho}(t_f)$ where the optimum was achieved with the final time $t_f$ will give the optimal for $t_f+\epsilon$ for sufficiently small $\epsilon >0$. 
The corresponding control will be a zero control which keeps the state at the initial value for time $\epsilon$ followed by the same control applied to reach $\boldsymbol{\rho}(t_f)$. 

Since we have not claimed uniqueness of the optimal control, the two situations may occur simultaneously for two different optimal trajectories. For one of them the final time constraint is active, while for the other one it is not. Given the additional structure of our problem, it should be possible to say more about the geometry of the reachable sets for the systems of interest here beside what is known from, for instance, Ref.~\cite{ayala_about_2017}. However, this is beyond the scope of this work.  

\section{Proof of Theorem~\ref{Active}}
\label{sec:proof-thm3}

\begin{proof}
Let us denote by $\tilde J(t_f)$ the cost function obtained with control $\tilde s:=\tilde s(t)$. By definition, the optimal control satisfies $J_{\min}(t_f) \leq \tilde J(t_f)$. Hence, to prove $J_{\min}(t_f) < J_{\min}(0)=\<C, \rho_0\>$, it suffices to show that there exists a control $\tilde s$ for Eq.~\eqref{eq:general_ME} such that the corresponding cost $\tilde J(t_f)$ satisfies $\tilde J(t_f) < J_{\min}(0)$. This is the proof strategy we employ here. 

Specifically, we consider a bang-bang control schedule $\tilde{s}$ in the interval
$[0,2t]$ and denote the corresponding cost starting from $\rho_{0}$ by $\tilde{J}(2t)$. We will
show that $\tilde{J}(2t)<\tilde{J}(0)$ for sufficiently small $t$, which gives
\begin{equation}
J_{\min}(2t)\leq\tilde{J}(2t)<\tilde{J}(0)=J_{\min}(0)=\Tr(C\rho_{0})\ ,
\label{Opla}
\end{equation}
and this proves the theorem with $t_f=2t$.

The class of controls we consider is $\tilde{s} \equiv1$ [corresponding
to $\mathcal{K}_{C}$ in Eq.~\eqref{eq:vonNeum}] for an interval of length $t$, followed by
$\tilde{s}\equiv0$ [corresponding to $\mathcal{K}_{B}$ in Eq.~\eqref{eq:vonNeum}]
for a second interval of length $t$. This gives for ${J}(t)$ [Eq.~\eqref{nuovocosto}]:
\begin{equation}
\tilde{J}(2t)=\Tr\left(Ce^{-iBt}e^{-iCt}\rho_{0}e^{iCt}e^{iBt}\right)\ .
\label{Jhat}
\end{equation}
We work in a basis where $B$ is diagonal with eigenvalues in decreasing
order: $B=\texttt{diag}(\lambda_{n},\lambda_{n-1},\dots ,\lambda_{1})$
and $\lambda_{1}<\lambda_{j}$, for each $j=2,3,\dots ,n$ (nondegeneracy).
In this basis $\rho_{0}=\texttt{diag}(0,\dots ,0,1)$.

In Eq.~\eqref{Jhat}, the Baker-Campbell-Hausdorff (BCH) formula yields: 
\begin{align}
&\tilde{J}(2t)=\Tr\Big[e^{iBt}Ce^{-iBt} \times\notag\\
&\qquad \Big(\rho_{0}-i[C,\rho_{0}]t-[C,[C,\rho_{0}]]\frac{t^{2}}{2}+O(t^{3})\Big)\Big] 
\label{Jhatplus}
\end{align}
Applying the BCH formula again, this time to $e^{iBt}Ce^{-iBt}$,
we obtain:
\begin{multline}
\tilde{J}(2t)=\Tr\Bigg[\left(C+i[B,C]t-[B,[B,C]]\frac{t^{2}}{2}\right)\times  \\
\left(\rho_{0}-i[C,\rho_{0}]t-[C,[C,\rho_{0}]]\frac{t^{2}}{2}\right)\Bigg]+O(t^{3}).
\end{multline}
Expanding, we obtain: 
\begin{align}
\tilde{J}(2t) & =\Tr(C\rho_{0})+i\Tr([B,C]\rho_{0})t-\Tr([B,[B,C]]\rho_{0})\frac{t^{2}}{2}\notag\\
 & \quad -i\Tr(C[C,\rho_{0}])t+\Tr\left([B,C][C,\rho_{0}]\right)t^{2}\notag\\
 & \quad -\Tr\left(C[C,[C,\rho_{0}]]\right)\frac{t^{2}}{2}+O(t^{3}).
\end{align}
Several of the terms in the above equation vanish. In particular,
\beq
\Tr\left([B,C]\rho_{0}\right)=\Tr\left([\rho_{0},B]C\right)=0\ ,
\eeq
since $B$ and $\rho_{0}$ commute. $\Tr([B,[B,C]]\rho_{0})=0$ for
the same reason. $-i\Tr(C[C,\rho_{0}])=-i\Tr(\rho_{0}[C,C])=0$, and
$\Tr\left(C[C,[C,\rho_{0}]]\right)=\Tr\left([C,\rho_{0}][C,C]\right)=0$.
Therefore we have 
\begin{equation}
\tilde{J}(2t)-\Tr(C\rho_{0})=\Tr\left([B,C][C,\rho_{0}]\right)t^{2}+O(t^{3})\ .
\label{poi}
\end{equation}
Write 
\begin{equation}
B=\begin{pmatrix}\Lambda & 0\\
0 & \lambda_{1}
\end{pmatrix},\qquad C=\begin{pmatrix}C_{1} & a\\
a^{\dagger} & c
\end{pmatrix}\ ,
\label{BandC}
\end{equation}
with $\Lambda=\texttt{diag}(\lambda_{n},\dots ,\lambda_{2})$, $C_{1}$
an $(n-1)\times(n-1)$ Hermitian matrix, $c$ a real number and $a$
an $(n-1)$-th dimensional complex vector. With these notations, we
have 
\begin{align}
[B,C] & =\begin{pmatrix}[\Lambda,C_{1}] & (\Lambda-\lambda_{1}\1)a\\
a^{\dagger}(\lambda_{1}\1-\Lambda) & 0
\end{pmatrix}\\{}
[C,\rho_{0}] & =\begin{pmatrix}0 & a\\
-a^{\dagger} & 0
\end{pmatrix}.
\end{align}
From this we obtain: 
\beq
\Tr\left([B,C][C,\rho_{0}]\right)=2a^{\dagger}(\lambda_{1}\1-\Lambda)a\ ,
\eeq
so that we have, from Eq.~\eqref{poi}:
\beq
\tilde{J}(2t)-\Tr(C\rho_{0})=-t^{2}\left(\sum_{j=2}^{n}(\lambda_{j}-\lambda_{1})|a_{j}|^{2}\right)+O(t^{3})\ ,
\eeq
where $a_{j}$ are the components of $a$. Since $\lambda_{j}>\lambda_{1}$
for each $j$, we have for sufficiently small $t$:
\beq
\tilde{J}(2t)-\Tr(C\rho_{0})=\tilde{J}(t)-J_{\min}(0)<0\ ,
\eeq
as required. We have assumed here that at least
one of the components of $a$ is nonzero. If that were
not the case then $[\rho_{0},C]=0$ and $\rho_0$ would be fixed not just under $B$ but also under $C$. There would then be no dynamics, which is a case that is naturally excluded.
\end{proof}

\section{Singular arcs}
\label{app:SA}

Along singular arcs we have $x_{C}\equiv x_{B}$, i.e.,
\begin{equation}
\langle B,S\rangle\equiv\langle C,S\rangle\ .
\label{condizione1}
\end{equation}
Differentiating Eq.~\eqref{condizione1}, using Eq.~\eqref{eq:S_ODE}
we find $\langle B,\mathcal{K}_{H}S\rangle=\langle C,\mathcal{K}_{H}S\rangle$.
Using $H = s B + (1-s) C$ and the antihermiticity of $\mathcal{K}_{H}$ we thus obtain:
\bes
\label{condizione2}
\begin{align}
 \<\mc{K}_H^\dag(B),S\> &= \<\mc{K}_H^\dag(C),S\> \Longrightarrow \\
 \<(1-s)[C,B]^\dag,S\> &= \<s[B,C]^\dag,S\> \Longrightarrow \\ 
 -(1-s)\<[C,B],S\> &= s\<[C,B],S\> \  \Longrightarrow\ \\
 \<[C,B],S\> &= 0 \ .
\end{align}
\ees

Analogously, differentiating Eq.~\eqref{condizione2} and using the antihermiticity of $\mathcal{K}_{H}$ 
again, setting $D:=[C,B]$, we have:
\bes
\label{condizione3}
\begin{align}
0 &= \<D,\dot{S}\> = \< D,\mc{K}_H(S)\> = \<\mc{K}_H^\dag(D),S\> \\
& = -(1-s) \< [C,D]^\dag,S\> - s \<[B,D]^\dag,S\> \\
& = (1-s) \< [C,D],S\> + s \<[B,D],S\>\ .
\end{align}
\ees
Conditions~\eqref{condizione1}-\eqref{condizione3} have to hold along a singular arc. In an algorithm to calculate the \emph{dynamical
Lie algebra} for the controllability of Eq.~\eqref{eq:vonNeum}
\cite{dalessandro_introduction_2007}, the matrices $B$ and $C$
are the matrices of ``depth'' zero in the calculation via iterated Lie
brackets. The matrix $[C,B]$ is of depth one and the matrices $[[C,B],C]$ and
$[[C,B],B]$ are of depth two. Now, one can have either (i) $\langle [C,D], S \rangle \neq \langle [B,D], S \rangle $ or (ii) $\langle [C,D], S \rangle = \langle [B,D], S \rangle$. In case (i) it follows from Eq.~\eqref{condizione3} that
\beq 
s=\frac{\langle [C,D], S \rangle}{\langle [C,D],S \rangle -\langle [B,D], S \rangle} \label{eq:s_continuous}
\eeq 
(compare with Ref.~\cite[Eq.~(12)]{brady_optimal_2021}). This shows the continuity of $s$ in the corresponding open set(s). 
Case (ii) implies $\langle [C,D],S \rangle \equiv \langle [B,D],S \rangle \equiv 0$ (in some closed set). One can further differentiate one of these equations and obtain an analog of  Eq.~\eqref{condizione3} at a higher order, at which point similar reasoning can be applied.  In principle $s$ can be defined in different intervals by equations such as Eq.~\eqref{eq:s_continuous} or its higher order generalizations. In each interval $s$ is continuous because of the continuity of $S$. We leave a more general proof of continuity of $s$ on the \emph{entire} singular arc as an open problem.
In any case, equations~\eqref{condizione2}-\eqref{condizione3} provide
information on the dynamics along singular arc intervals. They are used in the example discussed in Appendix~\ref{low}.

\section{Optimal control protocol for the spin-$\frac{1}{2}$ model}
\label{low}

Here we analyze in detail the optimal control problem for the spin-$\frac{1}{2}$
model treated in Sec.~\ref{esempio}. Our goal is to give a simple but
explicit example to show how the results
developed in this paper can be used to find the optimal control. We
shall use the same notation as in the example of Sec.~\ref{esempio}.
To avoid the situation of an inactive terminal time constraint described
in the example, we assume that the initial state is the ground state $x_{0}=(-1,0,0)^{T}$,
so that we can apply Theorem~\ref{Active}.

\subsection{The global minimum is found using two non-singular arcs in time $t_f \geq \pi$}

Recall that $C=\sigma^{z}/2$ and $B=\sigma^{x}/2$.
The global minimum of the cost~\eqref{nuovocosto} is $J_{\min}=\Tr[C\rho_f] = -1/2$, achieved when $\rho_f = (\1-\sigma^{z})/2$, the ground state of $C=\sigma^z/2$. Given that our initial condition is $\rho_{0}=(\1-\sigma^{x})/2$ (the ground state of $B$,
corresponding to $\boldsymbol{v}=(-1,0,0)^{T}$), we can trivially reach $\rho_f=(\1-\sigma^{z})/2$ by applying two consecutive bangs (i.e., unitary single-qubit gates): first $e^{-i (\pi/2) C}$ (rotation to $(\1-\sigma^{y})/2$) with $s\equiv0$, then $e^{-i (\pi/2) B}$ (rotation to $(\1-\sigma^{z})/2$) with $s\equiv1$. Each bang lasts for a time $\frac{\pi}{2}$,
therefore the total bang-bang sequence last for a total time of $\pi$. \emph{This sequence presents no singular arcs}. For any $t_f>\pi$ the constraint on $t_f$ becomes inactive, i.e., increasing $t_f$ cannot further lower the value of  $J_{\min}$. Since we assume that the global ground state is not
reached (recall the discussion in Sec.~\ref{sec:active-constraint}), henceforth we assume that $t_f<\pi$. In principle, this setting could still allow for the appearance of singular arcs. However, we shall show that this is not the case.


\subsection{Conditions on the singular arcs for the spin-$1/2$ model}

Let us derive the conditions on the singular arcs in the present problem, which are a special case of the computations carried out in Appendix~\ref{app:SA}. Using Eq.~\eqref{commurel}, we obtain $\left[C,B\right]=i\sigma^{y}/2$,
$\left[[C,B],C\right]=-\sigma^{x}/2$, $\left[[C,B],B\right]=\sigma^{z}/2$.
Using these, Eqs.~\eqref{eq:x_C-b}, \eqref{eq:x_B}, and~\eqref{condizione1} for the switching operator
$S$ become: 
\begin{equation}
2x_{C}=\Tr\left(S\sigma^{z}\right)\equiv\Tr\left(S\sigma^{x}\right)=2x_{B}\ .
\label{newcondizione1}
\end{equation}
Condition~\eqref{condizione2} becomes: 
\begin{equation}
\Tr\left(S\sigma^{y}\right)\equiv0\ ,
\label{condizione5}
\end{equation}
and condition~\eqref{condizione3} becomes $(1-s)\Tr(\sigma^{x}S)-s\Tr(\sigma^{z}S)\equiv0$,
which using Eq.~\eqref{newcondizione1} gives: 
\begin{equation}
\left(1-2s\right)\Tr\left(\sigma^{x}S\right)\equiv 0\ .
\label{condizione4}
\end{equation}
Thus, either $s=1/2$ or $\Tr(\sigma^{x}S)\equiv0$. Let assume the latter. From Eqs.~\eqref{newcondizione1}
and~\eqref{condizione5} we obtain $\Tr(\sigma^{z}S)=\Tr(\sigma^{y}S)\equiv0$.
Since we can expand $S=\frac{1}{2}\sum_{i=1}^3 \Tr(S\sigma_i)\sigma_i$ ($S$ is traceless since it is defined as a commutator), this would then imply that $S\equiv0$ on a singular interval. However, since $S$ satisfies
the linear equation~\eqref{eq:S_ODE}, this would imply $S\equiv0$ on
the whole interval $[0,t_f]$ and, in particular, $[p_0,\rho_{0}]=[p_f,\rho_f]=-[C,\rho_f]=0$.
This would imply that $\rho_f$ is a linear combination of eigenprojectors
of $C$, but since $\rho_f$ is a pure state it must in fact be equal to a single eigenprojector. Moreover, this must be the ground state of $C$ since $J(t)$ is minimized at $t=t_f$. But, since $J(t_f)\le J(0)$, the only possibility is that $J=\Tr(C\rho)$ reaches its
global minimum at $t_f$, which contradicts our assumption that $t_f$ is smaller than a value that would allow the global minimum to be reached. Hence we conclude that $\Tr(\sigma^{x}S)\ne 0$ in Eq.~\eqref{condizione4}, \emph{which yields $s\equiv1/2$ on the singular arcs}. However, we shall see in Proposition~\ref{exsing} that singular arcs are in fact not possible in this case.

\subsection{Candidate optimal controls with a singular arc}
\label{candid}

Using conditions~\eqref{newcondizione1} and~\eqref{condizione5} along
with Eq.~\eqref{eq:new_controlH} equated to $\lambda$, we have that
in the time interval of a singular arc 
\begin{equation}
S\equiv\lambda\left(\sigma^{x}+\sigma^{z}\right)\ .
\label{valuesa}
\end{equation}
$\lambda=0$ is impossible because according to the
argument at the end of the previous subsection $S\equiv0$
is to be excluded. Since $\lambda\ne 0$, we can apply all the conclusions
of Theorem~\ref{Bradygen} and affirm that the optimal control starts
with an $s\equiv0$ bang arc and ends with an $s\equiv1$ bang arc.
Therefore, preceding or following a singular arc we must have $s\equiv0$
or $s\equiv1$, respectively. Let us show that after a singular arc we cannot go
to a switching point, i.e., where Eq.~\eqref{newcondizione1}
holds [$(\lambda,\lambda)$ in Fig.~\ref{switchfig}]. (Analogously, changing the sign of time, we
can show that a singular arc cannot be preceded by a switching point.)
Assume that after the singular arc we have $s\equiv0$. The switching
operator $S=S(t)$, with $t=0$ at the end of this singular arc, is then the solution of Eq.~\eqref{eq:S_ODE} with the initial condition~\eqref{valuesa}, i.e., $S(t)=\lambda e^{-it\sigma^{z}/2}\sigma^{x}e^{it\sigma^{z}/2}+\lambda\sigma^{z}$. The minimum time needed for it to return to a switching point is $t=2\pi$. This contradicts the fact that $t_f<\pi$
and therefore is impossible. Similar reasoning shows that we
cannot go back to a switching point with $s\equiv1$. Therefore, we
have learned the following fact about optimal control in the single-qubit
case:

\begin{prop}
The optimal control has at most one singular arc and, if it does,
the optimal control is the sequence $s\equiv0$, $s\equiv\frac{1}{2}$,
$s\equiv1$. 
\end{prop}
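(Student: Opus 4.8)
The plan is to assume the optimal control has at least one singular arc --- if it has none, ``at most one'' is vacuous --- and then to show that any such arc is forced to lie directly between the initial $s\equiv 0$ arc and the terminal $s\equiv 1$ arc, which leaves no room for a second singular arc and pins the schedule down to $s\equiv 0$, then $s\equiv 1/2$, then $s\equiv 1$. I work throughout under the standing assumptions of this appendix: $\rho_0$ is the ground state of $B$, and $t_f<\pi$ (recall $t_c=\pi$).

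First I would assemble the structural input already in hand. On any singular arc we have $s\equiv 1/2$ and $S\equiv\lambda(\sigma^x+\sigma^z)$ by Eq.~\eqref{valuesa}, and $\lambda\neq 0$: indeed $\lambda=0$ would give $S\equiv 0$ on the singular arc, and since $S$ obeys the linear equation~\eqref{eq:S_ODE} this propagates to $S\equiv 0$ on all of $[0,t_f]$ --- the case already excluded at the end of Subsec.~\ref{candid}. Because $[\rho_0,B]=0$, Proposition~\ref{landamagz} gives $\lambda\geq 0$, so in fact $\lambda>0$, and Theorem~\ref{Bradygen}(ii) applies: the optimal control starts with an $s\equiv 0$ bang arc and ends with an $s\equiv 1$ bang arc. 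In particular a singular arc is neither the first nor the last arc, so (taking it to be maximal) it is both immediately preceded and immediately followed by a genuine bang arc carrying a constant value $s\in\{0,1\}$.

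The heart of the argument --- and the step I expect to be the main obstacle --- is the adjacency computation for the switching operator. Placing $t=0$ at the right endpoint of a singular arc, so $S(0)=\lambda(\sigma^x+\sigma^z)$, suppose first the following bang arc has $s\equiv 0$, i.e.\ $H\equiv C$. Solving~\eqref{eq:S_ODE} gives the explicit rotation $S(t)=e^{-iCt}S(0)e^{iCt}=\lambda\bigl(\cos t\,\sigma^x+\sin t\,\sigma^y+\sigma^z\bigr)$, whence $x_C=\langle C,S\rangle\equiv\lambda$ and $x_B=\langle B,S\rangle=\lambda\cos t$. The arc can only end at the next switching point, where $x_B=x_C$, i.e.\ at the first positive solution of $\cos t=1$, namely $t=2\pi$; since $t_f<\pi<2\pi$ this is outside the control interval, so this bang arc cannot terminate before $t_f$ and is therefore the last arc. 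The same computation with $s\equiv 1$ (so $H\equiv B$, $S(t)=e^{-iBt}S(0)e^{iBt}$) again places the next switching point at $t=2\pi$, with the identical conclusion. Running the computation backward in time from the left endpoint of the singular arc --- with $e^{-iCt}S(0)e^{iCt}$ replaced by $e^{iC\tau}S(0)e^{-iC\tau}$ (and its $B$ analog), a rotation of the same $2\pi$ period --- shows the bang arc preceding the singular arc cannot begin after $t=0$, hence is the first arc.

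Finally I would combine these: the singular arc sits immediately between the first arc ($s\equiv 0$ by Theorem~\ref{Bradygen}(ii)) and the last arc ($s\equiv 1$), and together these three arcs exhaust $[0,t_f]$; a second singular arc would need its own preceding arc, but after the first singular arc we are already at $t_f$. Hence there is exactly one singular arc and the optimal control is the sequence $s\equiv 0$, $s\equiv 1/2$, $s\equiv 1$, as claimed. The only genuinely computational point is the one flagged above: one must verify that the switching function $x_B-x_C$ along a bang arc issuing from $S=\lambda(\sigma^x+\sigma^z)$ first vanishes again only after a full $2\pi$ period, both forward and backward in time. Everything else is bookkeeping built on Theorem~\ref{Bradygen} and the singular-arc conditions of Appendix~\ref{app:SA}.
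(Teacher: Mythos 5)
Your proposal is correct and follows essentially the same route as the paper's own argument: establish $S\equiv\lambda(\sigma^x+\sigma^z)$ with $\lambda>0$ on any singular arc, invoke Theorem~\ref{Bradygen} for the initial $s\equiv0$ and final $s\equiv1$ bangs, and then compute that a bang arc issuing from (or, reversing time, arriving at) $S=\lambda(\sigma^x+\sigma^z)$ only returns to a switching point after time $2\pi>t_f$, which sandwiches the unique singular arc between the first and last arcs. The only cosmetic difference is that you explicitly cite Proposition~\ref{landamagz} to upgrade $\lambda\neq0$ to $\lambda>0$, a step the paper leaves implicit.
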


Consider now the initial switching operator $S_{0}$, which together
with the differential equation Eq.~\eqref{eq:S_ODE} determines the control
sequence. Since $S$ is traceless, we can write $S_{0}=\boldsymbol{r}_{0}\cdot\boldsymbol{\sigma}$,
but using Eq.~\eqref{sw} and $\rho_0=\frac{1}{2}({\bf \1}-\sigma^{x})$ we see by expanding $p$ in the Pauli matrix basis that $S_0$ cannot contain $\sigma^x$, i.e., 
we find that $S_{0}$ has the form 
\begin{equation}
S_{0}=r_{0y}\sigma^{y}+r_{0z}\sigma^{z}\ .
\label{S0}
\end{equation}
In the first interval $s\equiv0$ and therefore,
from Eq.~\eqref{eq:S_ODE}:
\bes
 \label{firstint}
\begin{align}
S(t) & =r_{0z}\sigma^{z}+r_{0y}e^{-it\sigma^{z}/2}\sigma^{y}e^{it\sigma^{z}/2} \\
 & =r_{0z}\sigma^{z}+r_{0y}\cos(t)\sigma^{y}-r_{0y}\sin(t)\sigma^{x}\ .
 \label{firstint-b}
 \end{align}
\ees

If there is a singular arc and therefore $S$ takes the form~\eqref{valuesa}, then
we must have $t=\frac{\pi}{2}$ and $r_{0y}=-r_{0z}$ or $t=\frac{3\pi}{2}$
and $r_{0y}=r_{0z}$. The second case is to be excluded since $t_f<\pi$.
After the singular arc we would have $s\equiv1$, which, using Eq.~\eqref{eq:S_ODE} again would give: 
\bes
\begin{align}
S(t)&=r_{0z}\sigma^{x}+r_{0z}e^{-it\sigma^{x}/2}\sigma^{z}e^{it\sigma^{x}/2}\\
&=r_{0z}\sigma^{x}+r_{0z}(\cos(t)\sigma^{z}-\sin(t)\sigma^{y})\ .
\label{eq:F7b}
\end{align}
\ees
Since we have to reach the point $(x_C,x_B)=(0,\lambda)$ in Fig.~\ref{switchfig}, we must have $\cos\left(t\right)=0$, i.e., \textbf{$t=\frac{\pi}{2}$
}or $t=\frac{3\pi}{2}$, which has to be added to the time used before
the last interval. Therefore the total time is greater than or equal
to $\pi$, which we have excluded.

In conclusion we have:
\begin{prop}
\label{exsing} No singular arc exists in the optimal control for
the spin-$1/2$ example with $t_f<\pi$. 
\end{prop}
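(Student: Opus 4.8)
The plan is to argue by contradiction. Suppose that for some $t_f<\pi$ the optimal control contains a singular arc. By the Proposition established immediately above, the schedule must then be exactly $s\equiv 0$, then $s\equiv\tfrac12$, then $s\equiv 1$, and on the singular arc the switching operator takes the value $S\equiv\lambda(\sigma^x+\sigma^z)$ with $\lambda\neq 0$ [Eq.~\eqref{valuesa}]. I would then propagate $S(t)$, which obeys the linear equation~\eqref{eq:S_ODE}, separately across the opening $s\equiv 0$ arc and the closing $s\equiv 1$ arc, determine how long each must last in order to be compatible with the singular value and with the terminal condition, and conclude that these two durations already sum to at least $\pi$, contradicting $t_f<\pi$.

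First I would pin down the opening arc. The initial switching operator $S_0=i[p_0,\rho_0]$ is traceless, and because $\rho_0=\tfrac12(\1-\sigma^x)$, expanding $p_0$ in the Pauli basis shows that $S_0$ has no $\sigma^x$ component, so $S_0=r_{0y}\sigma^y+r_{0z}\sigma^z$ [Eq.~\eqref{S0}]. On the first arc $s\equiv 0$ the generator is $\mathcal{K}_C$, giving $S(t)=r_{0z}\sigma^z+r_{0y}\cos t\,\sigma^y-r_{0y}\sin t\,\sigma^x$ [Eq.~\eqref{firstint}]. Matching this against $\lambda(\sigma^x+\sigma^z)$: the $\sigma^z$ coefficient fixes $r_{0z}=\lambda$; the $\sigma^y$ coefficient must vanish, and since $r_{0y}=0$ would force the $\sigma^x$ coefficient to vanish too (impossible as $\lambda\neq 0$), we need $\cos t=0$; within $[0,t_f]$ with $t_f<\pi$ this leaves only $t=\tfrac{\pi}{2}$, and then the $\sigma^x$ coefficient gives $r_{0y}=-\lambda=-r_{0z}$. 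Thus the opening bang has duration exactly $\tfrac{\pi}{2}$.

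Next I would do the analogous computation on the closing arc. Starting from $S=\lambda(\sigma^x+\sigma^z)$ and propagating with $\mathcal{K}_B$ (since $s\equiv 1$) gives $S(t)=\lambda\sigma^x+\lambda\cos t\,\sigma^z-\lambda\sin t\,\sigma^y$ [Eq.~\eqref{eq:F7b}]. The terminal conditions $s(t_f)=1$ and $x_C(t_f)=\langle C,S_f\rangle=i\Tr\bigl(\rho_f[C,p_f]\bigr)=0$ (using $p_f=-C$) mean the trajectory must end at the point $(x_C,x_B)=(0,\lambda)$ in Fig.~\ref{switchfig}; since $x_C=\Tr(\tfrac12\sigma^z S)=\lambda\cos t$, this forces $\cos t=0$, so the closing bang lasts at least $\tfrac{\pi}{2}$. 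Adding the strictly positive length of the singular arc in between, the total time is at least $\pi$, contradicting $t_f<\pi$; hence no singular arc can occur.

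The main obstacle is not the evolution of $S$, which is elementary, but correctly assembling the structural inputs: (i) verifying that $S_0$ genuinely has no $\sigma^x$ component, which uses $[B,\rho_0]=0$ together with the explicit form of $\rho_0$; (ii) checking that matching $S(t)$ against $\lambda(\sigma^x+\sigma^z)$ on the first arc is exhaustive, so that $t=\tfrac{\pi}{2}$ is really the only entry time into a singular arc compatible with $t_f<\pi$ (and likewise, by time reversal, that a singular arc cannot be preceded by a switching point); and (iii) being precise about the endpoint $(x_C,x_B)=(0,\lambda)$ forced by $s(t_f)=1$ and $x_C(t_f)=0$. Once these are in place the time accounting $\tfrac{\pi}{2}+(\text{singular})+\tfrac{\pi}{2}\ge\pi$ delivers the contradiction immediately.
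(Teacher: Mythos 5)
Your proposal is correct and follows essentially the same route as the paper: it invokes the preceding proposition to reduce to the sequence $s\equiv0$, $s\equiv\tfrac12$, $s\equiv1$, then matches the propagated switching operator against $\lambda(\sigma^x+\sigma^z)$ on the opening $\mathcal{K}_C$ arc (forcing duration $\tfrac{\pi}{2}$) and against the terminal condition $(x_C,x_B)=(0,\lambda)$ on the closing $\mathcal{K}_B$ arc (forcing duration at least $\tfrac{\pi}{2}$), yielding the same time-accounting contradiction with $t_f<\pi$. No gaps.
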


\subsection{Candidate optimal controls without singular arcs}

Now we consider the optimal control candidates knowing that they must be free of singular arcs, i.e., they can consist only of bangs. Since $S(t)$ is traceless we again use the 
parametrization $S(t)=\boldsymbol{r}(t)\cdot\boldsymbol{\sigma}$.
We already know [Eq.~\eqref{S0}] that:
\begin{equation}
\boldsymbol{r}(0)=\boldsymbol{r}_{0}=(0,r_{0y},r_{0z})^{T}\ .
\label{s0vect}
\end{equation}
We know from Eq.~\eqref{eq:S_ODE} that the vector $\boldsymbol{r}$ evolves according to Eqs.~\eqref{eq:Bloch-eq}-\eqref{eq:M_Bloch}. More explicitly, let $\boldsymbol{X}:=\boldsymbol{\mathcal{K}}_{B}$ and $\boldsymbol{Z}:=\boldsymbol{\mathcal{K}}_{C}$.
When $s\equiv0$, $H=C$ and $\boldsymbol{r}$ evolves according to
$e^{t\boldsymbol{Z}}$, and likewise when $s\equiv1$ it evolves according to $e^{t\boldsymbol{X}}$, where, using Eq.~\eqref{adcadb}:
\bes
\label{ext}
\begin{align}
\label{extX}
e^{t\boldsymbol{X}}&=\left(\begin{array}{ccc}
1 & 0 & 0\\
0 & \cos(t) & -\sin(t)\\
0 & \sin(t) & \cos(t)
\end{array}\right)\\
\label{extZ}
e^{t\boldsymbol{Z}}&=\left(\begin{array}{ccc}
\cos(t) & -\sin(t) & 0\\
\sin(t) & \cos(t) & 0\\
0 & 0 & 1
\end{array}\right)\ .
\end{align}
\ees
Since we have shown that $\lambda>0$, from Theorem~\ref{Bradygen},
the control law will start with an $s\equiv0$ bang arc and end with
an $s\equiv1$ bang arc. The control law is determined by a sequence
of intervals of lengths $\{\tau_{1},\tau_{2},\dots\}$ where for $k$ odd
(even) $\tau_{k}$ marks the switch from $s\equiv0$ to $s\equiv1$ ($s\equiv 1$ to $s\equiv 0$),
that is $\boldsymbol{Z}\rightarrow\boldsymbol{X}$ ($\boldsymbol{X}\rightarrow\boldsymbol{Z}$). That is, 
\beq
\label{eq:recur-even-odd}
\boldsymbol{r}(t_k) = 
\left\{
\begin{array}{c}
  e^{\tau_k\boldsymbol{Z}}\boldsymbol{r}(t_{k-1})\quad k\text{  odd}   \\
  e^{\tau_k\boldsymbol{X}}\boldsymbol{r}(t_{k-1})\quad k\text{  even}
\end{array}
\right.\ ,
\eeq
where $t_k = \sum_{i=1}^k\tau_i$ is the total time after $k$ intervals.
Note that in principle
there is no guarantee that such a switching sequence is finite, even if the
total control interval is finite; 
this is known in the control theory literature as the \emph{Fuller
phenomenon} (see, e.g., Ref.~\cite{borisov_fullers_2000}). We shall see
in Remark \ref{Values} that this does not happen in our case and
we have a \emph{finite} sequence of intervals of lengths $\{\tau_{1},\tau_{2},\dots ,\tau_{N}\}$
with $N$ even (according to Theorem~\ref{Bradygen}). Given our definitions,
$t_f=\sum_{i=1}^N \tau_i$, where $\tau_{1}$ and $\tau_{N}$ are the lengths
of the initial ($\boldsymbol{Z}$) and final ($\boldsymbol{X}$) arcs,
respectively.

\subsection{Characterization of the switching times}

Note that the vector $\boldsymbol{r}=(r_x,r_y,r_z)^T$ consists of the components of $S$ along the Pauli basis, and that $r_x=x_B$ and $r_z=x_C$ [Eq.~\eqref{newcondizione1}]. Recall also that, as argued in Sec.~\ref{sec:switch-eq} (see Fig.~\ref{switchfig}), $(x_{C},x_{B})=(\lambda,\lambda)$ at every switching point. Hence $r_x=r_z=\lambda$ at every switching point between nonsingular arcs in our discussion below.

The optimal candidate control law is characterized by a sequence of
intervals of length $\tau_{1}$, $\tau_{2}$, etc. Define the
sequence $\{\Delta_{k}\}$ recursively from the sequence $\{\tau_{k}\}$ via $\Delta_{0}=0$,
$\Delta_{k}=\tau_{k}-\Delta_{k-1}$, for $k=1,2,\dots$. Then:
\begin{lem}
\label{recur1} 
For $n=1,2,\dots $, except 
for the $n$ corresponding
to the last control interval 
\begin{equation}
\left(-1\right)^{n}\sin\left(\Delta_{n}\right)r_{0y}=r_{0z},
\label{recur2}
\end{equation}
[cf. Eq.~\eqref{S0}] and 
\begin{equation}
\boldsymbol{r}\left(t_{n}\right)=
\left(r_{0z},
\cos(\Delta_{n})r_{0y},
r_{0z}\right)^T
\ .
\label{recur3}
\end{equation}
 
\end{lem}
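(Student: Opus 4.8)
The plan is to establish both identities simultaneously by induction on $n$, propagating the switching vector $\boldsymbol{r}(t)$ through the bang--bang sequence with the recursion~\eqref{eq:recur-even-odd}, and using at each internal switch the fact (derived in Sec.~\ref{sec:switch-eq}) that $r_x(t_m) = r_z(t_m) = \lambda$ --- equivalently $x_B = x_C = \lambda$ in the language of Fig.~\ref{switchfig} --- whenever $t_m$ is a boundary between two nonsingular arcs, i.e.\ for every $m$ except the terminal index $N$ belonging to the last control interval. That terminal index is exactly the one excluded in the statement, since at $t_f$ the trajectory reaches $(x_C,x_B)=(0,\lambda)$ rather than a switching point.

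For the base case $n=1$ I would apply the matrix $e^{\tau_1\boldsymbol{Z}}$ from~\eqref{extZ} to the initial vector $\boldsymbol{r}_0 = (0,r_{0y},r_{0z})^T$ of~\eqref{s0vect}, obtaining $\boldsymbol{r}(t_1) = (-\sin\tau_1\, r_{0y},\, \cos\tau_1\, r_{0y},\, r_{0z})^T$. Since $t_1$ is a switch point, $r_x(t_1) = r_z(t_1)$ gives $-\sin\tau_1\, r_{0y} = r_{0z}$, which is~\eqref{recur2} with $\Delta_1 = \tau_1 = \tau_1-\Delta_0$; substituting this into the first component yields~\eqref{recur3} for $n=1$.

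For the inductive step, assuming~\eqref{recur2} and~\eqref{recur3} at a non-terminal $n$ (so that $r_{0z} = (-1)^n\sin(\Delta_n)\,r_{0y}$ and $\boldsymbol{r}(t_n) = (r_{0z},\cos(\Delta_n)\,r_{0y},r_{0z})^T$), I would split on the parity of $n$. If $n$ is odd, the next bang advances $\boldsymbol{r}$ by $e^{\tau_{n+1}\boldsymbol{X}}$ [Eq.~\eqref{extX}], which fixes $r_x = r_{0z}$ and rotates $(r_y,r_z)$; inserting $r_{0z} = -\sin(\Delta_n)\,r_{0y}$ and the angle-addition formulas collapses the rotated components to $\cos(\tau_{n+1}-\Delta_n)\,r_{0y} = \cos(\Delta_{n+1})\,r_{0y}$ and $\sin(\tau_{n+1}-\Delta_n)\,r_{0y} = \sin(\Delta_{n+1})\,r_{0y}$. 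If $n$ is even, the next bang advances by $e^{\tau_{n+1}\boldsymbol{Z}}$ [Eq.~\eqref{extZ}], fixing $r_z = r_{0z}$ and rotating $(r_x,r_y)$; inserting $r_{0z} = \sin(\Delta_n)\,r_{0y}$ and simplifying gives $r_x(t_{n+1}) = -\sin(\Delta_{n+1})\,r_{0y}$ and $r_y(t_{n+1}) = \cos(\Delta_{n+1})\,r_{0y}$. In both cases, imposing the switch identity $r_x(t_{n+1}) = r_z(t_{n+1})$ --- legitimate provided $n+1$ is not the last interval --- yields $(-1)^{n+1}\sin(\Delta_{n+1})\,r_{0y} = r_{0z}$, which is~\eqref{recur2} at $n+1$; feeding it back into the $x$-component gives~\eqref{recur3} at $n+1$ and closes the induction.

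I do not anticipate a deep obstacle: the content is elementary rotation algebra on $\mathbb{R}^3$ driven by the explicit one-parameter groups~\eqref{ext}. The hard part is really just the bookkeeping --- tracking which Cartesian component is preserved by which bang ($r_z$ under $\boldsymbol{Z}=\boldsymbol{\mathcal{K}}_C$, $r_x$ under $\boldsymbol{X}=\boldsymbol{\mathcal{K}}_B$), carrying the sign $(-1)^n$ correctly through the two parity branches, and being careful that $r_x=r_z$ may be imposed only at the internal switching points. The last point is precisely what forces the ``except for the last control interval'' caveat in the statement.
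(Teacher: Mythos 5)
Your proof is correct and follows essentially the same route as the paper's: induction on the switch index, propagating $\boldsymbol{r}$ with the explicit rotations $e^{\tau\boldsymbol{Z}}$, $e^{\tau\boldsymbol{X}}$, imposing $(x_C,x_B)=(\lambda,\lambda)$ at each internal switching point, and collapsing the components via angle-addition into $\Delta_{n}=\tau_{n}-\Delta_{n-1}$. The only differences are cosmetic (you index the step as $n\to n+1$ rather than $n-1\to n$, and you compute both rotated components before imposing the switch condition rather than after), and your handling of the excluded terminal index matches the paper's.
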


\begin{proof}
The proof is by induction on $n$. For $n=1$ we have $\Delta_{1}=t_{1}=\tau_1$. At the end of the first arc we must reach the point $(x_{C},x_{B})=(\lambda,\lambda)$, which means that $\boldsymbol{r}(\tau_{1})=(r_{0z},*,r_{0z})^{T}$. Thus, using Eq.~\eqref{eq:recur-even-odd}
and $e^{\tau_{1}\boldsymbol{Z}}$ in Eq.~\eqref{extZ}, we obtain Eqs.~\eqref{recur2}
and~\eqref{recur3}.

Now assume Eq.~\eqref{recur2} and Eq.~\eqref{recur3} hold for $n-1$.
If $n$ is even we have 
\begin{align}
\boldsymbol{r}\left(t_{n}\right) =e^{\tau_{n}\boldsymbol{X}}\boldsymbol{r}\left(t_{n-1}\right)\ .
\label{withX}
\end{align}
Next, use Eq.~\eqref{extX}, and impose $\boldsymbol{r}(t_{n})=(r_{0z},*,r_{0z})^{T}$, since $(x_{C},x_{B})=(\lambda,\lambda)$ at every switching point.
Equality of the $z$ component then gives $\sin(\tau_{n})\cos(\Delta_{n-1})r_{0y}+\cos(\tau_{n})r_{0z}=r_{0z}$,
and using Eq.~\eqref{recur2} with $n$ replaced by $n-1$, we obtain $r_{0z}=\sin(\tau_{n})\cos(\Delta_{n-1})r_{0y}-\cos(\tau_{n})\sin(\Delta_{n-1})r_{0y}=\sin(\tau_{n}-\Delta_{n-1})r_{0y}=\sin(\Delta_{n})r_{0y}$.
Calculating the $y$ component of $\boldsymbol{r}(t_n)$, we obtain $\cos(\tau_{n})\cos(\Delta_{n-1})r_{0y}-\sin(\tau_{n})r_{0z}=\cos(\tau_{n})\cos(\Delta_{n-1})r_{0y}+\sin(\tau_{n})\sin(\Delta_{n-1})r_{0y}=\cos(\Delta_{n})r_{0y}$,
using again the inductive assumption Eq.~\eqref{recur2}. A similar calculation
with $\boldsymbol{Z}$ replacing $\boldsymbol{X}$ in Eq.~\eqref{withX}
gives the result when $n$ is odd. 
\end{proof}

\subsection{Determination of the Optimal Control}

\label{final}

We now use the formulas in the above Lemma to determine the optimal
control. Define $\mu=\arcsin\left(\frac{r_{0z}}{r_{0y}}\right)$ and
notice that from Eq.~\eqref{recur2} for $n=1$ and $\Delta_{1}=\tau_{1}$
we have $\mu=\arcsin\left(-\sin\left(\tau_{1}\right)\right)$. Since $0<\tau_{1}<\pi$, we have $\mu=-\tau_1$ for $\tau_1\in (0,\pi/2]$ and $\mu=-\pi+\tau_1$ for $\tau_1\in [\pi/2,\pi)$, and in particular $\mu <0$.

Let us consider first the possibility that $0<\tau_{1}\leq\frac{\pi}{2}$.
We also have $\tau_{1}=\Delta_{1}=-\mu$. 
If there is more than one switch (i.e., $\tau_2>0$), then we 
can derive $\Delta_{2}$ from Eq.~\eqref{recur2}.
We have either $\Delta_{2}=\pi-\mu+2l\pi$ or $\Delta_{2}=\mu+2l\pi$
for integer $l$. Recalling that $\tau_{2}=\tau_{1}+\Delta_{2}$, in the
first case we have $\tau_{2}=\pi-2\mu+2l\pi$, and in the second case
$\tau_{2}=2l\pi$. The second case is not possible because $\tau_{2}$ must
be in $(0,\pi)$.
The first case is not possible either
because $l\geq0$ would contradict that the total time must be less
than $\pi$ while $l<0$ would give a negative or zero interval $\tau_{2}$. Therefore,
in the case $0<\tau_{1}\leq\frac{\pi}{2}$ there exists only one switch
and the control is simply the sequence of two bangs, one corresponding
to $s\equiv0$ followed by one corresponding to $s\equiv1$. Before
determining where the switch must occur, let us consider the case
$\frac{\pi}{2}<\tau_{1}<\pi$.

\begin{lem}
\label{nuovolemma} 
Assume that $\tau_1\in (\pi/2,\pi)$ and define $\mu=\arcsin\left(\frac{r_{0z}}{r_{0y}}\right)<0$.
Then $\Delta_{n}=\pi+\mu$ for $n$ odd and $\Delta_{n}=\mu$ for
$n$ even. 
\end{lem}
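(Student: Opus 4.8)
The plan is to prove the formula by induction on $n$, combining three ingredients already available: the recursion $\Delta_n=\tau_n-\Delta_{n-1}$ defining the $\Delta_k$, the trigonometric constraint $(-1)^n\sin(\Delta_n)\,r_{0y}=r_{0z}$ furnished by Lemma~\ref{recur1}, and the a priori bound $\tau_n\in(0,\pi)$ (each arc is strictly positive in length, and the total time is $t_f<\pi$). The strategy in the inductive step is to use the recursion and the bound on $\tau_n$ to confine $\Delta_n$ to a short interval, then use the constraint from Lemma~\ref{recur1} to pick out the unique admissible value inside that interval.

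First I would settle the base case $n=1$. Since $\Delta_0=0$ we have $\Delta_1=\tau_1$, and Lemma~\ref{recur1} at $n=1$ reads $-\sin(\tau_1)\,r_{0y}=r_{0z}$, i.e. $\sin\mu=r_{0z}/r_{0y}=-\sin\tau_1=\sin(-\tau_1)$. Because $\tau_1\in(\pi/2,\pi)$ we have $\sin\tau_1\in(0,1)$, hence $\mu\in(-\pi/2,0)$ (principal arcsine); and $\sin\tau_1=\sin(-\mu)$ with $\tau_1\in(\pi/2,\pi)$ and $-\mu\in(0,\pi/2)$ forces $\tau_1=\pi-(-\mu)=\pi+\mu$. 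Thus $\Delta_1=\pi+\mu$, the claimed value for odd $n$. It is worth recording here the sign information $\mu\in(-\pi/2,0)$, since this is exactly what makes the interval estimates below work.

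Next, the inductive step. Assume the formula for $n-1$. If $n$ is even then $\Delta_{n-1}=\pi+\mu$, so $\Delta_n=\tau_n-\Delta_{n-1}$ with $\tau_n\in(0,\pi)$ lies in the open interval $(-\pi-\mu,\,-\mu)$, which has length $\pi$. The constraint from Lemma~\ref{recur1} becomes $\sin(\Delta_n)=\sin\mu$, with solution set $\{\mu+2\pi\mathbb{Z}\}\cup\{\pi-\mu+2\pi\mathbb{Z}\}$; using $\mu\in(-\pi/2,0)$ one checks that $\mu$ lies in $(-\pi-\mu,-\mu)$ while $\pi-\mu$ and all other representatives do not, so $\Delta_n=\mu$. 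If $n$ is odd (so $n\ge3$), then $\Delta_{n-1}=\mu$, so $\Delta_n\in(-\mu,\,\pi-\mu)$ (again length $\pi$), and the constraint $-\sin(\Delta_n)=\sin\mu$, i.e. $\sin(\Delta_n)=\sin(-\mu)$, has solution set $\{-\mu+2\pi\mathbb{Z}\}\cup\{\pi+\mu+2\pi\mathbb{Z}\}$, of which the unique element of $(-\mu,\pi-\mu)$ is $\pi+\mu$. This closes the induction.

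The one delicate point is the uniqueness inside each step: one must verify that the admissibility interval for $\Delta_n$ — whose endpoints depend on $\mu$ — contains exactly one representative of the sine equation's solution set, and in particular excludes the endpoint values corresponding to $\tau_n\to0$ and $\tau_n\to\pi$. The estimate $\mu\in(-\pi/2,0)$ from the base case is precisely what keeps the intervals $(-\pi-\mu,-\mu)$ and $(-\mu,\pi-\mu)$ short enough (and positioned correctly) to trap a single solution. Finally, as in Lemma~\ref{recur1}, the relation $(-1)^n\sin(\Delta_n)\,r_{0y}=r_{0z}$ is asserted only for $n$ that is not the last control interval, so the statement is to be read with the same proviso; the last interval is treated separately in the subsequent determination of the optimal control.
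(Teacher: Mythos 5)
Your proof is correct and follows essentially the same route as the paper: induction on $n$ using the recursion $\Delta_n=\tau_n-\Delta_{n-1}$, the constraint of Lemma~\ref{recur1}, and the bound $0<\tau_n<\pi$. The only cosmetic difference is that you isolate the admissible value of $\Delta_n$ by trapping it in an interval of length $\pi$, whereas the paper enumerates and excludes the integer shifts $l$ in $\tau_n=\Delta_{n-1}+\Delta_n$; the two eliminations are equivalent.
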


\begin{proof}
The claim follows by induction from Eq.~\eqref{recur2}. Applying
it for $n=1$, since $\Delta_{1}=\tau_{1}$, we have $\Delta_{1}=\pi+\mu$.
Now, assume that the claim is true for $n$ even. From Eq.~\eqref{recur2}
applied for $n$ even, we obtain $\Delta_{n}=\mu+2l\pi$ or $\Delta_{n}=\pi-\mu+2l\pi$,
for integer $l$. Since $\tau_{n}=\Delta_{n-1}+\Delta_{n}$, using the
inductive assumption, we obtain in the two cases, $\tau_{n}=\pi+2\mu+2l\pi$
and $\tau_{n}=2\pi+2l\pi$, respectively. The latter case is impossible
because it would mean that $\tau_{n}>\pi$.
The first case is only possible with $l=0$, because $l>0$ would imply
$\tau_{n}>2\pi$ while $l<0$ would give a negative time interval. Since we saw above that $\Delta_{n}=\mu+2l\pi$ for $n$ even, this gives
$\Delta_{n}=\mu$ for such $n$. 

Let us now prove that $\Delta_{n}=\pi+\mu$
for $n>1$ and odd. Again using Eq.~\eqref{recur2} we obtain
either $\Delta_{n}=-\mu+2l\pi$ or $\Delta_{n}=\pi+\mu+2l\pi$. The
first case is impossible because it would mean $\tau_{n}=\Delta_{n-1}+\Delta_{n}=2l\pi$
(using the inductive assumption). The second case would give $\tau_{n}=\Delta_{n-1}+\Delta_{n}=\pi+2\mu+2l\pi$
which is only possible for $l=0$. This gives $\Delta_{n}=\pi+\mu$.
\end{proof}

\begin{rem}
\label{Values} One of the consequences of the above lemma is that the switching sequence is finite, i.e.,  we do not have intervals between two switches which
become arbitrarily small and hence the Fuller phenomenon~\cite{borisov_fullers_2000}
is ruled out in our case.
In particular
if $0<\tau_{1}<\frac{\pi}{2}$ there is only one switch, as we have seen,
while if $\frac{\pi}{2}<\tau_{1}<\pi$ then we have multiple switches with $\tau_{k}=\Delta_{k}+\Delta_{k-1}=\pi+2\mu$, which is a constant independent of $k$.
\end{rem}

In order to learn more about the optimal control, and rule out the second case of $\frac{\pi}{2}<\tau_{1}<\pi$, we examine the final
arc, which is of the form $e^{\tau_{N}\boldsymbol{X}}$ ($s\equiv1$). Recall that with the final arc we have to reach the point $(x_C,x_B)=(0,\lambda)$, which imposes that
the final switching operator is of the form $\boldsymbol{r}(t_N)=(r_{0z},*,0)^{T}$ [see Eq.~\eqref{eq:F7b} and the discussion immediately below it]. 
Thus, using Eq.~\eqref{withX} with $n=N$ and Eq.~\eqref{extX}, 
we obtain $\sin(\tau_{N})\cos(\Delta_{N-1})r_{0y}+\cos(t_{N})r_{0z}=0$.
Using Eq.~\eqref{recur2} and $r_{0y}\ne 0$, we obtain $\sin(\tau_{N})\cos(\Delta_{N-1})-\sin(\Delta_{N-1})\cos(\tau_{N})=\sin(\Delta_{N})=0$,
where $\Delta_{N}=\tau_{N}-\Delta_{N-1}$. Therefore $\tau_{N}=\Delta_{N-1}+l\pi$
for $l$ integer. Now there are two cases: Multiple switches or only
one switch. In the case of multiple switches, we are in the situation
described in Lemma~\ref{nuovolemma}. We have $\Delta_{N-1}=\Delta_{1}=\pi+\mu$.
Therefore $\tau_{N}=\pi+\mu+l\pi$. The integer $l$ must be zero because
if it is positive we have $\tau_{N}>\pi$ and if it is negative, we have
a negative interval $\tau_{N}$. Therefore $\tau_{N}=\pi+\mu$. However $t_f\geq \tau_{1}+\tau_{N}=2\Delta_{1}=2(\pi+\mu)>\pi$
which is impossible. Therefore the situation $\frac{\pi}{2}<\tau_{1}<\pi$
cannot occur. The only possibility is the situation with $0<\tau_{1}<\frac{\pi}{2}$
with one switch only. In this case, as above we have $\tau_{2}=\Delta_{1}+l\pi=\tau_{1}+l\pi$
with $l=0$ since again $l<0$ will give a negative time interval and $l$ positive
will give total time greater than $\pi$. So $\tau_{2}=\tau_{1}$ and the
optimal control is the simplest one. This completes the proof of Proposition~\ref{conclusione}.

\section{Additional considerations regarding the shortening of the initial and final arcs}
\label{app:shortening}

In principle one can be more quantitative about the shortening of the arcs discussed in Sec.~\ref{shortening} by repeating the analysis
of Ref.~\cite{campos_venuti_recurrence_2015} for the quantity $x_{B}(\Delta t)$ [Eq.~\eqref{eq:switching_time}].
The latter work provided a detailed analysis of the
return probability $\mathcal{F}(t):=\left|\langle\psi|e^{-itH}|\psi\rangle\right|^{2}$, where $H$ is a time-independent Hamiltonian and $|\psi\>$ an initial state. One of the results of Ref.~\cite{campos_venuti_recurrence_2015}
was an explicit form for the average number of zeroes of the equation
$\mathcal{F}(t)=v$. Indeed, $\mathcal{F}$ is a particular case of
the left-hand side of Eq.~\eqref{eq:switching_time} and can be written
in that form with $B=S_{0}=\ketb{\psi}{\psi}$, $C=H$. Let
$N_{x_{B}}(v)$ be the average number of solutions of the equation
$x_{B}(t)=v$. The number of zeroes $N$ in a large interval of length $T$ turns
out to be proportional to $T$: $N_{x_{B}}(v)=TD_{x_{B}}(v)$ where
$D_{x_{B}}$ can be computed using the methods of Ref.~\cite{campos_venuti_recurrence_2015}.
Then $\Delta t\simeq T/N_{x_{B}}(\lambda)=1/D_{x_{B}}\left(\lambda\right)$. Using ~\cite[Eq.~(6)]{campos_venuti_recurrence_2015}, which applies for the special case mentioned above, we then obtain:
\beq
\Delta t(\lambda) \simeq \frac{\sqrt{\pi}}{2}\frac{1}{\Delta E}\sqrt{\frac{\<x_B\>}{\lambda}}e^{\lambda/\<x_B\>}\ ,
\label{eq:rec-time}
\eeq
where $\<x_B\>:=\lim_{T\to\infty}\frac{1}{T}\int_0^T x_B(t)dt$ and $\Delta E$ is the standard deviation of the energies $\{E_k\}$ with respect to the distribution $\{p_k = \left \lVert \Pi_k |\psi\>\right \rVert^4/\sum_n \left \lVert \Pi_n |\psi\>\right \rVert^4\}$ (recall that $C=\sum_{k}E_{k}\Pi_{k}$ is the spectral resolution of $C$). The derivation of Eq.~\eqref{eq:rec-time} is subtle and is carried out in Ref.~\cite{campos_venuti_recurrence_2015}. Since it only represents a special case in our context, we do not pursue it further here, and present Eq.~\eqref{eq:rec-time} mainly to stimulate the interest of the reader and establish a possible entry point towards a rigorous treatment of the shortening of the initial and final arcs.

As a final comment, note that when $C$ is a classical Ising Hamiltonian of the form $C=J \sum_{i,j=1}^{n}\sigma_{i}^{z}\sigma_{j}^{z}$, the energies are integer multiples of $J$ and highly degenerate (therefore the spectrum is commensurate: $\left\{ E_{k}\right\}  \subset \left\{ 0,\pm J,\pm2J,\ldots\right\}$).
This implies that the function $x_{B}(t)$ in Eq.~\eqref{eq:trigon_pol} is periodic with period not larger than $2\pi/J$ (as opposed to being
almost periodic) and this has implications for $D_{x_{B}}(\lambda)$, but the general considerations we have outlined still hold.



%

\end{document}